\documentclass[sigconf]{acmart}
\usepackage{amsmath,amsfonts}
\usepackage{algorithmic}
\usepackage{graphicx}
\usepackage{textcomp}
\usepackage{xcolor}
\usepackage{booktabs}
\usepackage{pgfplots}
\usepackage{subcaption}
\usepackage{colortbl}
\usepackage{bbding}
\usepackage{braket}
\usepackage{epsfig,latexsym,graphics}
\usepackage[font={small}]{caption}
\usepackage{setspace}
\usepackage{enumitem}
\usepackage{framed}
\usetikzlibrary[patterns]
\usepackage{cmd}
\usepackage{multirow}
\usepackage{xspace}
\colorlet{shadecolor}{gray!30}

\newcommand{\ie}{\emph{i.e.,}\xspace}
\newcommand{\eg}{\emph{e.g.,}\xspace}

\AtBeginDocument{%
  }

\setcopyright{acmlicensed}
\copyrightyear{2027}
\acmYear{2027}
\acmDOI{XXXXXXX.XXXXXXX}
\acmConference[SIGMOD '27]{the 2027 ACM SIGMOD/PODS International
  Conference on Management of Data}{June 13--19,
  2027}{Huntington Beach, CA, USA}
\acmISBN{978-1-4503-XXXX-X/2027/06}

\begin{document}

%%
%% The "title" command has an optional parameter,
%% allowing the author to define a "short title" to be used in page headers.
%\title{Efficient Proximity Graph Merging via Structure-Aware Neighbor Reuse}
%\title{Efficient Proximity Graph Merging via Reverse Neighbor Sliding}
\title{Reverse Neighbor Sliding and Order Selection for Efficient Multi-Proximity Graph Merging}

%%
%% The "author" command and its associated commands are used to define
%% the authors and their affiliations.
%% Of note is the shared affiliation of the first two authors, and the
%% "authornote" and "authornotemark" commands
%% used to denote shared contribution to the research.
\author{Liuchang Jing}
\affiliation{%
  \institution{HKUST (GZ)}
  \city{Guangzhou}
  \state{Guangdong}
  \country{China}
}
\email{ljing248@connect.hkust-gz.edu.cn}

\author{Mingyu Yang}
\affiliation{%
  \institution{HKUST (GZ) \& HKUST}
  \city{Guangzhou}
  \state{Guangdong}
  \country{China}
}
\email{myang250@connect.hkust-gz.edu.cn}

\author{Lei Li}
\affiliation{%
  \institution{HKUST (GZ) \& HKUST}
  \city{Guangzhou}
  \state{Guangdong}
  \country{China}
}
\email{thorli@ust.hk}

\author{Jianbin Qin}
\affiliation{%
  \institution{Shenzhen University}
  \city{Shenzhen}
  \state{Guangdong}
  \country{China}
}
\email{qinjianbin@szu.edu.cn}

\author{Wei Wang}
\affiliation{%
  \institution{HKUST(GZ) \& HKUST}
  \city{Guangzhou}
  \state{Guangdong}
  \country{China}
}
\email{weiwcs@ust.hk}

%%
%% By default, the full list of authors will be used in the page
%% headers. Often, this list is too long, and will overlap
%% other information printed in the page headers. This command allows
%% the author to define a more concise list
%% of authors' names for this purpose.
\renewcommand{\shortauthors}{Liuchang Jing, Mingyu Yang, Lei Li, Jianbin Qin, and Wei Wang}

%%
%% The abstract is a short summary of the work to be presented in the
%% article.
\begin{abstract}
Approximate $k$ Nearest Neighbor (AKNN) search in high-dimensional space is a foundational problem in vector databases with widespread applications. Among the numerous AKNN indexes, Proximity Graph-based indexes achieve state-of-the-art search efficiency across various benchmarks. In many real-world scenarios, datasets are maintained as multiple segment-level graph indexes to support continuous writes and segment management. However, these fragmented indexes complicate maintenance and degrade search efficiency, making fast graph index merging essential.
In this paper, we focus on the efficient merging of multiple existing graph indexes into a single one. To achieve this, we propose a Reverse Neighbor Sliding Merge (RNSM) that exploits structural information to boost merging efficiency. We further propose Merge Order Selection (MOS) to minimize total merge cost across multiple indexes by eliminating redundant operations. Experiments show that our approach yields up to a 3.86$\times$ speedup over existing index merge methods and a 9.92$\times$ speedup over index reconstruction, while maintaining comparable search performance. Moreover, our method scales to merging up to 50 sub-indexes on datasets of 100 million vectors, maintaining consistent speedups.
\end{abstract}

%%
%% The code below is generated by the tool at http://dl.acm.org/ccs.cfm.
%% Please copy and paste the code instead of the example below.
%%
\begin{CCSXML}
<ccs2012>
   <concept>
       <concept_id>10002951.10003317</concept_id>
       <concept_desc>Information systems~Information retrieval</concept_desc>
       <concept_significance>500</concept_significance>
       </concept>
 </ccs2012>
\end{CCSXML}

\ccsdesc[500]{Information systems~Information retrieval}

%%
%% Keywords. The author(s) should pick words that accurately describe
%% the work being presented. Separate the keywords with commas.
\keywords{Approximate Nearest Neighbor Search, Graph Index, Vector Databases, Information Retrieval}
%% A "teaser" image appears between the author and affiliation
%% information and the body of the document, and typically spans the
%% page.

\received{17 April 2026}
\received[revised]{19 August 2026}
\received[accepted]{12 September 2026}

%%
%% This command processes the author and affiliation and title
%% information and builds the first part of the formatted document.
\maketitle

\section{Introduction}\label{sec:intro}

The $k$ nearest neighbor (KNN) search in high-dimensional space is a fundamental problem that has received widespread attention due to its extensive applications in data mining~\cite{patternclassification}, information retrieval~\cite{ImageRetrieval,ObjectRetrieval}, recommendation systems~\cite{GoogleRecommendation_2007,CollaborativeFilteringRecommendation_2007}, and the Retrieval-Augmented Generation (RAG) of Large Language Models (LLMs)~\cite{RAG}. However, the cost of exact KNN search is extremely high for large-scale datasets~\cite{ApproximateNeighbors-indyk-STOC-1998}. Therefore, extensive research has focused on studying approximate $k$ nearest neighbor (AKNN) search ~\cite{LSH-datar-SODA-2004,DFLSH-shan-ENG-2025,DBLSH-tian-IEEE-2024,DETLSH-wei-VLDB-2024,LSBTree-tao-ACM-2010,DynamicCollisionCounting-gan-SIGMOD-2012,SimilaritySearchHashing-gionis-VLDB-1999,SRS-sun-VLDB-2014,ProductQuantization-jegou-IEEE-2011,OptimizedProductQuantization-ge-PAMI-2014,InvertedMultiIndex-babenko-IEEE-2015,RaBitQ-gao-SIGMOD-2024,ExtendRaBitQ-gao-sigmod-2025,PQFastScan-andre-VLDB-2015,EffectiveGeneralDistance-yang-ICDE-2024,LVQ-aguerrebere-VLDB-2023,ADSampling-gao-ACM-2023,MRQ-yang-arxiv-2024,CoverTrees-beygelzimer-ICML-2006,sa-tree-navarro-vldb-2002,RandomProjectionTrees-dasgupta-ACM-2008,Hercules-echihabi-vldb-2022,FixedDimensions-arya-SODA-1993, FANNG-harwood-CVPR-2016,HNSW-malkov-IEEE-2020, NSW-malkov-IS-2014, NSG-fu-VLDB-2019, SSG-fu-IEEE-2022, Efficient-peng-SIGMOD-2023,DiskANN-subramanya-NIPS-2019,FINGER-chen-ACM-2023, SymphonyQG-gou-sigmod-2025,Hemi-Sphere-Graph-SIGMOD-2026-Jing-Tang,Fast-Conver-PG-SIGMOD-2026-shangqi}, trading accuracy for search efficiency. 
Among various AKNN methods, graph-based indexes are the state-of-the-art solution, demonstrating comparable performance compared to other approaches~\cite{GraphBasedANNS-wang-vldb-2021,HighDimensionalDataExperiments-li-IEEE-2016,ParlayANN-dobson-PPoPP-2023,ANNBenchmarks-aumuller-IS-2020}. Graph-based methods represent each vector in the dataset as a node in a graph, where edges connecting nodes reflect proximity relationships designed to facilitate efficient navigation during search. Most graph-based methods construct a Proximity Graph (PG) through one of two approaches: (1) refining a random or approximate $k$ nearest neighbor graph (KNNG)~\cite{NSG-fu-VLDB-2019,SSG-fu-IEEE-2022,Efficient-peng-SIGMOD-2023,RevisitingIndexConstruction-yang-vldb-2025}, or (2) incrementally inserting nodes into an empty graph~\cite{HNSW-malkov-IEEE-2020,NSW-malkov-IS-2014}. 

\begin{figure}[!t]
    \centering
    \includegraphics[width=\columnwidth]{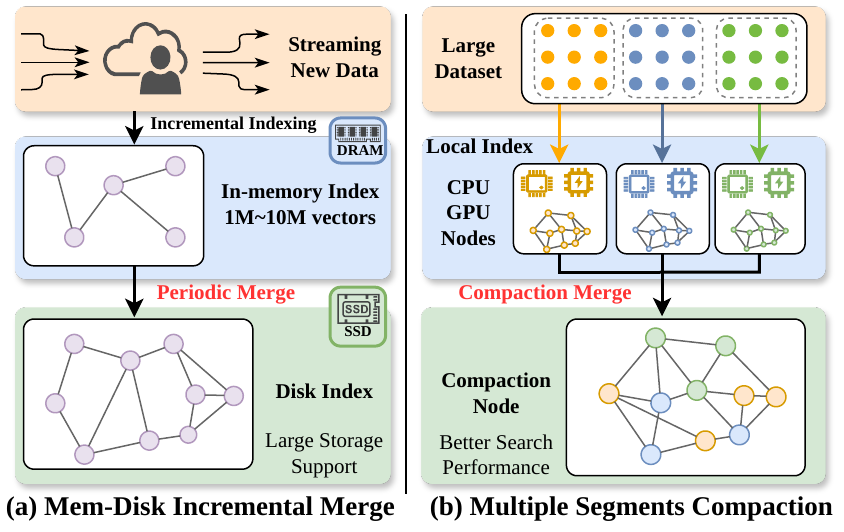}
    \vspace*{-6mm}
        \caption{Examples of Multiple Indexes Merging Scenarios}
        \vspace*{-3mm}
  \label{fig:scenario}
\end{figure}

However, both indexing methods incur substantial time costs for graph construction~\cite{GraphBasedANNS-wang-vldb-2021,ANNBenchmarks-aumuller-IS-2020} due to extensive distance computations. Therefore, commercial vector databases like Milvus~\cite{Milvus-wang-SIGMOD-2021} and BlendHouse~\cite{BlendHouse-ICDE-2025} typically partition data \emph{by design} to support dynamic and independent segment data management. Despite the scalability of this approach, directly searching multiple sub-indexes incurs a significant search performance degradation. This phenomenon is intuitive because searching $m$ separate graph indexes independently requires traversing each from its entry point. Given that graph search complexity is logarithmic with respect to dataset size~\cite{NSG-fu-VLDB-2019,Efficient-peng-SIGMOD-2023,Fast-Conver-PG-SIGMOD-2026-shangqi}, querying $m$ partitions scales the computational cost up to a factor of approximately $m$ in the worst case. As shown in the experimental results in Fig~\ref{fig:merge-necessity}, the search efficiency of the partitioned index with 10 partitions is only half that of the complete index, and this efficiency gap widens to nearly $1/6$ with 50 partitions. Therefore, merging multiple indexes is a critical operation in various scenarios.

Fig~\ref{fig:scenario} shows two index-merging scenarios. 
In \textit{incremental index merging} of Fig~\ref{fig:scenario}(a), streaming data are indexed in a small index and periodically merged into the larger index of prior data. To support continuous writes within memory limits, commercial systems keep the large index on SSD and new segments (typically fewer than 10M vectors) in memory under an LSM architecture~\cite{BlendHouse-ICDE-2025}. Existing methods treat this merge as costly incremental indexing~\cite{FreshDiskANN-singh-arxiv-2021,VSAG-zhong-vldb-2025}, which is computationally expensive. In \textit{multiple segment compaction} of Fig~\ref{fig:scenario}(b), CPU or GPU nodes independently build and search local segment indexes~\cite{DistributedStorage-yu-arxiv-2025,BinaryDistributedGraph-zhao-CIKM-2019}. Periodic compaction reduces storage and management overhead and improves search, but existing systems rebuild the full index from scratch~\cite{SOGAIC-shi-www-2025,BinaryDistributedGraph-zhao-CIKM-2019}, which introduces high computational cost. In both scenarios, speeding up index merging is crucial for efficiency.

% To address this performance degradation, index merge methods ~\cite{FGIM_wu_2026_SIGMOD, SimJoin-xie-POM-2025} have been proposed recently to enable navigation across partitions by connecting separate graphs.

\begin{figure}[!t]
\centering
\begin{small}
% 全局图例（两方法）
\begin{tikzpicture}
\begin{customlegend}[legend columns=4,
legend entries={Separated, Merged},
legend style={at={(0.5,1.15)},anchor=north,draw=none,font=\scriptsize,column sep=0.3cm}]
% 替换颜色
\addlegendimage{line width=0.2mm,color=violate,mark=o,mark size=0.5mm}
\addlegendimage{line width=0.2mm,color=amaranth,mark=triangle,mark size=0.5mm}
\end{customlegend}
\end{tikzpicture}
\\[-\lineskip]

\subfloat[10 Partitions of DEEP1M]{\vspace{-2mm}
\begin{tikzpicture}[scale=1]
\begin{axis}[
    height=3.0cm,
    width=\columnwidth/2.10,
    xlabel=recall@10,
    ylabel=Query Per Second, ylabel style={yshift=-15pt}, 
    label style={font=\scriptsize},
    tick label style={font=\scriptsize},
    ymajorgrids=true,
    xmajorgrids=true,
    grid style=dashed,
]
% Partition fanout (recall > 0.8)
\addplot[line width=0.2mm,color=violate,mark=o,mark size=0.5mm]
plot coordinates {
(0.8139, 1486.74)
(0.8500, 1316.62)
(0.9016, 1074.05)
(0.9254, 915.717)
(0.9346, 852.077)
(0.9765, 528.032)
(0.9904, 391.223)
(0.9953, 311.092)
(0.9966, 261.897)
(0.9974, 227.169)
(0.9980, 200.399)
(0.9984, 179.933)
(0.9985, 164.142)
(0.9988, 150.621)
(0.9989, 139.432)
(0.9991, 129.902)
(0.9992, 120.988)
(0.9993, 113.940)
};
% NGM merged (recall > 0.8)
\addplot[line width=0.2mm,color=amaranth,mark=triangle,mark size=0.5mm]
plot coordinates {
(0.8549, 3283.58)
(0.9075, 2522.23)
(0.9384, 2052.15)
(0.9556, 1771.44)
(0.9673, 1548.52)
(0.9751, 1396.58)
(0.9795, 1265.29)
(0.9837, 1165.08)
(0.9868, 1072.08)
(0.9887, 1004.54)
(0.9907, 940.720)
(0.9926, 881.754)
(0.9936, 833.817)
(0.9945, 789.341)
(0.9950, 750.238)
(0.9952, 716.931)
(0.9958, 685.331)
(0.9962, 657.385)
(0.9966, 630.696)
};
\end{axis}
\end{tikzpicture}}\hspace{2mm}
\subfloat[50 Partitions of DEEP1M]{\vspace{-2mm}
\begin{tikzpicture}[scale=1]
\begin{axis}[
    height=3.0cm,
    width=\columnwidth/2.10,
    xlabel=recall@10,
    ylabel=Query Per Second, ylabel style={yshift=-15pt}, 
    label style={font=\scriptsize},
    tick label style={font=\scriptsize},
    ymajorgrids=true,
    xmajorgrids=true,
    grid style=dashed,
]
% Partition fanout (recall > 0.8)
\addplot[line width=0.2mm,color=violate,mark=o,mark size=0.5mm]
plot coordinates {
(0.8548, 393.772)
(0.9002, 334.530)
(0.9282, 296.503)
(0.9577, 243.162)
(0.9725, 207.184)
(0.9780, 192.893)
(0.9933, 119.838)
(0.9974, 89.1972)
(0.9983, 71.3246)
(0.9990, 60.2576)
(0.9993, 52.4078)
(0.9994, 46.2442)
(0.9996, 41.6359)
(0.9997, 38.0118)
(0.9999, 34.9617)
(0.9999, 32.4297)
(1.0000, 30.2290)
(1.0000, 28.1731)
(1.0000, 26.5546)
};
% NGM merged (recall > 0.8)
\addplot[line width=0.2mm,color=amaranth,mark=triangle,mark size=0.5mm]
plot coordinates {
(0.8520, 3242.43)
(0.9095, 2425.70)
(0.9376, 1935.68)
(0.9547, 1672.07)
(0.9660, 1464.48)
(0.9754, 1313.09)
(0.9794, 1190.75)
(0.9835, 1095.17)
(0.9872, 1008.28)
(0.9897, 939.225)
(0.9914, 882.329)
(0.9924, 821.974)
(0.9936, 778.620)
(0.9943, 731.493)
(0.9948, 699.189)
(0.9956, 668.576)
(0.9960, 637.629)
(0.9961, 610.559)
(0.9963, 586.284)
};
\end{axis}
\end{tikzpicture}}

\vgap
\caption{Separated and Merged index single-thread search performance. Sub-indexes are constructed on 10 and 50 equal-sized partitions of the DEEP1M dataset. Separated indicates searching sequentially on all partitions. Merged indicates searching a single index merged by our proposed method.}
% 说明是单线程的
\vgap
\label{fig:merge-necessity}
\end{small}
\end{figure}
\stitle{Previous Methods.}\label{sec:previous}
We categorize existing index merge methods into two families based on their strategies for connecting multiple sub-graph indexes: (1) \textbf{Overlap}-based~\cite{DiskANN-subramanya-NIPS-2019, ExtendedNeighborhoodGraph-zhang-SIGIR-2025, CSPG-yang-NIPS-2024} methods intentionally duplicate data points across multiple segments during sub-index construction. The sub-indexes can be merged by concatenating the neighbor lists of the overlapping nodes. (2) \textbf{Search}-based~\cite{HNSW-malkov-IEEE-2020} treats the data in one index as queries to search in another index. The neighbors of the subgraph index node are updated with the search results. However, for already-built indexes, overlap-based methods require building indexes with duplicated vectors. Their merge cost also scales with the overlapping ratio, which is at least 2~\cite{DiskANN-subramanya-NIPS-2019}. Therefore, we focus on search-based methods that can also handle overlapping partitions: applying the neighbor pruning rule of the PG index to the concatenation of search results and existing neighbors from different indexes.
% Recent works~\cite{SimJoin-xie-POM-2025,FGIM_wu_2026_SIGMOD,MergeOfKNNG-zhao-IEEE-2022,hnswmerger-jin-sigmod-2026} focus on accelerating search-based methods, as they are distribution-agnostic and directly benefit from ANNS query optimizations. 
Recent search-based methods~\cite{SimJoin-xie-POM-2025,FGIM_wu_2026_SIGMOD,MergeOfKNNG-zhao-IEEE-2022,hnswmerger-jin-sigmod-2026} focus on reducing the time cost of merging multiple existing graph indexes in main memory by reducing the Number of Distance Computations (NDC). This is a practical setting with growing DRAM capacity~\cite{GraphBasedANNS-wang-vldb-2021,Flash-wang-SIGMOD-2025} and distance computation dominating the cost.

% The overlap-based method is integrated into disk solutions such as DiskANN~\cite{DiskANN-subramanya-NIPS-2019,SPANN-NIPS-2021-MicroSoft} and Milvus~\cite{Milvus-wang-ACM-2021}, where each node is forced to be placed into multiple partitions, and cross-partition connections are achieved through edge composites of that node. Systems built on LSM-tree–like organizations (\eg BlendHouse~\cite{BlendHouse-ICDE-2025} and VSAG~\cite{VSAG-zhong-vldb-2025}) continuously compact multiple small segments into larger ones in the background. During compaction, new per-segment graph indexes are built while old ones are discarded. However, the cost of merging small graphs approaches that of a full reconstruction. Existing overlap- and search-based approaches mainly support efficient write based on the LSM tree, but do not substantially reduce the computation overhead of index construction. This motivates us to design new methods that explicitly accelerate the merging process of a set of existing segment indexes.

% [1.KNNG transformation definition in sentence; 2.edge pruning no need; 3. m^2 all not necessary/redundant; 4. scalability limitation]
\stitle{State-of-the-art.}
A recent search-based study, FGIM~\cite{FGIM_wu_2026_SIGMOD}, has emerged as the most efficient method for index merging. It consists of a 3-step pipeline: 1) determine which pair of partitions to merge, 2) accelerate the pairwise merge operation, and 3) refine the final merged index. For step 1, FGIM searches each vector in all other partitions for nearest neighbors, noted as \textbf{cross-querying}. 
In step 2, its \textbf{Minimum Candidate Set Strategy} assumes that the cross-querying \textbf{search effort $ef$} can be evenly allocated among $m-1$ other partitions for a good enough merge quality. Thus, each search is \textbf{$1/(m-1)$} times cheaper. For step 3, with AKNN results of the previous step, indexes are roughly merged into a KNNG, noted as \textbf{KNNG transformation}. To recover the missing $k$ nearest neighbors due to the cheaper cross-querying in step 2, the KNNG is further refined with NNDecent~\cite{NNDecent-dong-WWW-2011}. The refined KNNG is further pruned into a PG index. However, FGIM has issues in all three steps: (1) In cross-querying, the $m^2$ all-pairs merges contain redundant operations, since not all partition pairs need to be merged; (2) The assumption in Minimum Candidate Set Strategy limits its applicability to unbalanced partition sizes, where few sub-indexes could contribute significantly more neighbors than others therefore requiring larger $ef$; (3) KNNG refining takes $O(k^2\cdot n)$ time~\cite{FGIM_wu_2026_SIGMOD}, where $n$ is the data size and $k$ increases with $m$, limitating its scalability.

\stitle{Our Solution.}
In this paper, we propose a novel search-based multi-index merging framework through two key techniques: \emph{Reverse Neighbor Sliding Merge} (RNSM) and \emph{Merge Order Selection} (MOS). We introduce them with 3 similar steps. In step 1, we propose MOS to reduce the $m^2$ merge operations. MOS treats partitions as vertices and pairwise merges as edges, and constructs a sparse and proximity-aware merge order graph that prioritizes the merge of closer partition pairs while preserving connectivity between partitions. For step 2, RNSM merges two partitions efficiently by reusing search results. Only selected vectors perform standard searches, their nearby vectors reuse the corresponding results as effective entry points for lower-cost searches. Thus, RNSM accelerates pairwise merging without sacrificing search quality or relying on uniform search-effort allocation of FGIM. 
% Specifically, RNSM is built on three key components: \emph{Neighbor Expand}, \emph{Pivot Selection}, and \emph{Sliding Merge}. \emph{Neighbor Expand} first traverses the graph index from each vector itself to expand its neighbors. This step is efficient given that PG indexes are sparse approximations of KNNG~\cite{NSG-fu-VLDB-2019,SSG-fu-IEEE-2022}. Based on the expanded neighboring information, \emph{Pivot Selection} identifies vectors as pivots whose search results can be used to benefit more vectors. Each pivot performs a standard search in the target index. \emph{Sliding Merge} then performs parallel-friendly and efficient searching via reusing search results of pivots for their neighbors, called local sliding. 
For step 3, we execute the RNSM operations selected by MOS and directly update the merged PG, avoiding expensive KNNG refining and cross-querying with faster pairwise merge and fewer merge operations. Moreover, we further determine the direction of search-based merging by defining \textbf{source and target indexes}: a source index supplies the vectors that issue cross-index queries on the target index, \ie source indexes are merged into target indexes. We always merge smaller indexes into larger ones~\cite{Milvus-wang-ACM-2021} and merge closer partition pairs first~\cite{douze2026FaissLibrary} in step 3.

\stitle{Contribution.}
We summarize our main contributions as follows:

\sstitle{Problem Analysis and Motivation (\S~\ref{sec:problem-analysis}).} We explored the key factors in index merging and found that nearest neighbor connections are crucial, thus transforming the index merging problem into a search optimization problem. We also analyzed the patterns of multi-index merging, providing new insights into the merge order selection.

\sstitle{Efficient Two-Index Merge Method (\S~\ref{sec:Bi-Index-Merge}).} We propose RNSM, an efficient two-index merge method that leverages structural information from both the source and target indexes to accelerate merging. RNSM first identifies \emph{pivots}: source-index vectors that perform standard searches in the target index and whose results are reused by nearby \emph{followers} as entry points for local sliding.

\sstitle{Merge Order Selection (\S~\ref{sec:MOS-technical}).} We propose MOS, which formulates merge order selection as a graph optimization problem that minimizes merge cost with two constraints: preserving efficiency through degree control and search quality through diameter control. We introduce tailored strategies for both random and cluster-based partitions for different application scenarios.

\sstitle{Extensive Experiment Analysis (\S~\ref{sec:Exp}).} We conduct comprehensive experiments on real-world datasets for multi-index merging. 
We integrate our framework into popular graph-based indexes, including HNSW, NSG, SSG, and $\tau$-MNG, demonstrating its broad applicability. Our method delivers up to 9.92$\times$ speedup while preserving search quality and efficiency. Moreover, the speedup ratio remains stable as the number of partitions increases with 50 sub-indexes and 100 million data, indicating robust scalability.

Due to space constraints, some additional proofs and experiments are omitted, which can be found in the Appendix.  

\begin{table}[!t]% h asks to places the floating element [h]ere.
  \caption{A Summary of Notations}\vgap
  \vspace*{-2mm}
  \label{tab:notation}
  \resizebox{\columnwidth}{!}{%
  \begin{tabular}{r|l}
    \toprule
    \textbf{Notation}   &  \textbf{Description}\\
    \midrule
    $X_i$        &  A set of points/vectors \\
    $x, y$        &  A data point/vector \\   
    $\mathbb{R}^D$ & $D$-dimensional Euclidean space \\
    $\delta\left(x, y\right)$ & Euclidean/$L_2$ distance between x and y \\
    $G_i=\left(V_i, E_i\right)$  &  The graph index for $X_i$ with vertex set $V_i$ and edge set $E_i$ \\
    $\mathcal{L}(p, q)$   & The local sliding NDC cost from point $p$ to $q$ \\
    $N(v,G)$ & The neighbors of $v$ in $G$\\
    $N_k\left(x, X\right)$ & The $k$-nearest neighbors of $x$ in $X$ \\
    $R_k\left(x, X\right)$ & The reverse $k$-nearest neighbors of $x$ in $X$ \\
    % $R$ & The maximum out-neighbors of any node in the graph index \\
    \bottomrule
  \end{tabular}%\vgap\vspace{-1.5em}
  }
  \vspace*{-2mm}
\end{table}

\section{Preliminaries}\label{sec:preliminary}
%In this section, 
We first formally define the AKNN search problem in \S~\ref{sec:AKNN-definition} and then explain the graph-based AKNN index in \S~\ref{sec:pg-definition}.

\subsection{The AKNN Search Problem}\label{sec:AKNN-definition}
Table \ref{tab:notation} lists the frequently used notations and their definitions. %Before formally defining the AKNN search, 
We first describe the KNN search problem:
\begin{definition}[KNN Search]
Given a dataset $X$ with $n$ vectors in the $D$-dimensional Euclidean space $\mathbb{R}^D$, and a query vector $q\in\mathbb{R}^D$. Let $\delta(x,y)$ denote the distance between two $D$-dimensional vectors $x$ and $y$. The KNN search retrieves $k$ vectors KNN($q$) in $X$ such that $\delta(x,q)\leq \delta(x',q), \forall x\in \text{KNN}(q)$ and $x'\notin \text{KNN}(q)$.
%with minimal distance to $q$ by evaluating $\delta(x,q)$.
\end{definition}
With the rapid growth of dataset dimensionality and scale, exact KNN search fails to meet the efficiency requirements, known as the curse of dimensionality~\cite{ApproximateNeighbors-indyk-STOC-1998}. Therefore, most efforts have focused on approximate AKNN search, which trades a small loss in accuracy for a significant gain in efficiency. Recall@K is commonly used as a metric to evaluate the precision of the AKNN search algorithm. Recall@K is defined as the ratio of the true $k$ nearest neighbors that appear in the $k$ nearest results returned by the AKNN search algorithm, and it is computed as follows:
\begin{align}
    \text{Recall@K} = \frac{|\text{AKNN}(q) \cap \text{KNN}(q)|}{k}
    \label{eq:recall_at_k}
\end{align}
where $\text{AKNN}(q)$ is the top-$k$ nearest results returned by the AKNN search, and $\text{KNN}(q)$ is the actual KNN for a given $q$. Based on the above definition, numerous AKNN Search Indexes have been proposed with the aim of achieving higher recall at a lower cost.

% 
% We also define the RKNN as a variant of AKNN search and it is critical for our RNSM algorithm.
% \begin{definition}[Reverse $k$-Nearest Neighbor (RKNN)~\cite{RNN-query}]
%   \label{def:rknn}
%   Given a dataset $X$, let $N_k(y,X)\subseteq X\setminus\{y\}$ denote the set of the $k$ nearest neighbors of $y$ in $X$. For two
%   distinct points $x,y\in X$, $y$ is a reverse $k$-nearest neighbor of $x$ if $x\in N_k(y,X)$. Accordingly, the reverse $k$-nearest-neighbor set
%   of $x$ is
%   \[
%   R_k(x,X)=\{y\in X\setminus\{x\}\mid x\in N_k(y,X)\}.
%   \]
%   When $X$ is clear from the context, we abbreviate $N_k(x,X)$ and $R_k(x,X)$ as $N_k(x)$ and $R_k(x)$, respectively.
%   \end{definition}
%   \Rcolor{SHRINK: no need for a formal definition.[DISCUSS WITH LEI]}
% 

% Reverse $k$-nearest neighbor (R$k$NN) search defined as data points that regard a query vector as one of their $k$-nearest neighbors ($k$NNs). 

\subsection{Graph-Based AKNN Search Index}\label{sec:pg-definition}
Most AKNN search algorithms find the approximate nearest neighbors via well-designed indexes, which can be divided into four categories: hash-based~\cite{DynamicCollisionCounting-gan-SIGMOD-2012,LSH-datar-SODA-2004,DFLSH-shan-ENG-2025,DBLSH-tian-IEEE-2024,DETLSH-wei-VLDB-2024,LSBTree-tao-ACM-2010,DynamicCollisionCounting-gan-SIGMOD-2012,SimilaritySearchHashing-gionis-VLDB-1999,SRS-sun-VLDB-2014, SubspaceCollision-wei-SIGMOD-2025}, quantization-based~\cite{ProductQuantization-jegou-IEEE-2011,OptimizedProductQuantization-ge-PAMI-2014,InvertedMultiIndex-babenko-IEEE-2015,RaBitQ-gao-SIGMOD-2024,ExtendRaBitQ-gao-sigmod-2025,PQFastScan-andre-VLDB-2015,EffectiveGeneralDistance-yang-ICDE-2024,LVQ-aguerrebere-VLDB-2023,ADSampling-gao-ACM-2023,MRQ-yang-arxiv-2024}, tree-based~\cite{CoverTrees-beygelzimer-ICML-2006,sa-tree-navarro-vldb-2002,RandomProjectionTrees-dasgupta-ACM-2008,Hercules-echihabi-vldb-2022}, and graph-based~\cite{FixedDimensions-arya-SODA-1993, FANNG-harwood-CVPR-2016,HNSW-malkov-IEEE-2020, NSW-malkov-IS-2014, NSG-fu-VLDB-2019, SSG-fu-IEEE-2022, Efficient-peng-SIGMOD-2023,DiskANN-subramanya-NIPS-2019,FINGER-chen-ACM-2023,SymphonyQG-gou-sigmod-2025, MIRAGEANNS-Sairaj-SIGMOD-2025, DIGRA-jiang-SIGMOD-2025, EnhanceGraph-zhong-arxiv-2025, Flash-wang-SIGMOD-2025,MSPCA-SIGMOD-2026-Xiaochun-Yang}. Among them, the Graph-based methods have gained prominence due to their state-of-the-art search efficiency. Specifically, the Graph-based AKNN indexes take each vector in $X$ as a node on the graph and construct a Proximity Graph (PG) to represent the similarity relationships between vectors. Then, given a dataset $X$ with $n$ vectors in the $D$-dimensional Euclidean space $\mathbb{R}^D$, we construct a graph index $G=(V, E)$ for $X$ with vertex set $V$ and edge set $E$: Every $v\in V$ corresponds to a unique vector $x$ in $X$, and an edge $(u,v)\in E$ denotes the neighbor relationship between $u,v\in V$.

Most graph-based methods utilize a beam search algorithm presented in Algorithm~\ref{algo:search-graph} to retrieve AKNNs. During the beam search, set $C$ is maintained as the candidate nearest neighbors of $q$, and its size $ef$ is used to measure the search effort (line 1). The beam search starts from the given entry points (line 2). At each step of the search, it checks the neighborhood of the nearest unexpanded node and tries to insert these neighbors into $C$ (lines 4-9). The algorithm terminates when no unexpanded node can be found in $C$ (line 10).

The PG index is constructed in two stages: (1) Perform beam search in Algorithm~\ref{algo:search-graph} on the current index and obtain neighbor candidates $W$ from top-$ef_{construction}$ ($ef_{construction}\gg ef_{search}$) results~\cite{NSW-malkov-IS-2014, HNSW-malkov-IEEE-2020} or visited nodes during searching~\cite{NSG-fu-VLDB-2019, Efficient-peng-SIGMOD-2023,DiskANN-subramanya-NIPS-2019}. (2) Apply the neighbor pruning rule of RNG (Relative Neighbourhood Graph)~\cite{RNG-toussaint-PR-1980, sa-tree-navarro-vldb-2002} or its variants~\cite{Efficient-peng-SIGMOD-2023,SSG-fu-IEEE-2022,DiskANN-subramanya-NIPS-2019, RevisitingIndexConstruction-yang-vldb-2025,Fast-Conver-PG-SIGMOD-2026-shangqi,Hemi-Sphere-Graph-SIGMOD-2026-Jing-Tang} to $W$ to select neighbors. Reverse edges are further added for better connectivity.

\begin{algorithm}[!t]
  \caption{AKNN Beam Search}
  \label{algo:search-graph}
  \scriptsize
  %\begin{small}
  \KwIn{Graph Index $G=\left(V,E\right)$, Entry Points $\{ep\}$, Query $q$, Search Effort $ef$, Result Number $k$}
  \KwOut{Top-$k$ approximate nearest neighbors of $q$}
  $\text{Initialize priority queue } C\gets\emptyset$\;
  $C \gets \{(ep, \delta(ep, q))\}$\;
  $\text{Each node }v\in V \text{ set } v.\text{expanded}\gets false$\;
  \While{$C \text{ has unexpanded nodes}$}{
    $u \gets \text{nearest unexpanded node in } C$; $u.\text{expanded}\gets true$\;
    \For{$\text{each unexpanded neighbor } v \text{ of } u$}{
       $C\gets C\bigcup\{(v,\delta(v, q))\}$\; 
    }
    \If{$|C|>ef$}{
        $\text{resize } C \text{ to size } ef$\;
    }
  }
  \Return{$k$ nearest nodes in $C$}
  %\end{small}
  
\end{algorithm}

\section{PROBLEM ANALYSIS AND MOTIVATION}\label{sec:problem-analysis}

In this section, we formalize the multi-index merging problem and present key observations that motivate our method (\S~\ref{sec:analysis-motivation}). We then introduce the pipeline of our merging framework (\S~\ref{sec:framework-intro}).

\subsection{Problem Analysis and Motivation}\label{sec:analysis-motivation}

We begin our analysis by defining the different partitions to merge. We consider $m$ disjoint partitions $\{X_1,X_2,\ldots,X_m\}$ of a dataset $X$, where $X=\bigcup_{i=1}^{m}X_i$ and $X_i\cap X_j=\emptyset$ for all $i\neq j$. A \textbf{random partition} consists of equal-sized subsets formed by randomly assigning vectors, whereas a \textbf{cluster partition} consists of subsets with different sizes reflecting clustered (\ie kmeans) data.
% \begin{definition}[Random and Cluster Partitions]
% \label{def:partition-types}
% Given a dataset $X$, let $\{X_1,X_2,\ldots,X_m\}$ be $m$ disjoint partitions such that $X=\bigcup_{i=1}^{m}X_i$ and $X_i\cap X_j=\emptyset$ for all $i\neq j$. A \textbf{random partition} is obtained by randomly assigning the vectors to the $m$ partitions while enforcing equal partition sizes. A \textbf{cluster partition} is obtained by running clustering methods (\ie kmeans) with $m$ centroids and assigning each vector to its nearest centroid. Consequently, cluster partitions may differ in size and exhibit non-uniform vector distributions.
% \end{definition}
This disjoint-partition setting simplifies the formulation, while real-world updates may produce more complex scenarios such as overlapping indexes. However, as discussed in \textbf{\S~\ref{sec:intro}}, search-based methods also support merging overlapping indexes with unique vector IDs. Given $m$ random/cluster partitions $\{X_1, X_2,\ldots, X_m\}$ following the above definition, the $m$-graph index merge problem is defined as follows:

\stitle{Problem Definition.}
Given $m$ well-constructed graph indexes $I=\{G_1, G_2,\ldots, G_m\}$, where $G_i=(V_i, E_i)$ and $V_i=X_i$, the $m$-graph index merge problem aims to construct a unified graph index $G=(V, E)$ that satisfies the following constraints:
\begin{itemize}[left=0pt, after=\vspace{0pt}]
    \item \emph{Search Quality}: The merged graph $G$ achieves comparable search performance to an index built from scratch on $X$.
    \item \emph{Computational Efficiency}: The merging of multiple indexes requires less time than reconstructing an index for $X$ from scratch.
\end{itemize}

The straightforward index merging paradigm \textbf{\textit{Cross-query Merge (CM)}} is based on pairwise searching: For each point $x$ in a source index $X_i$, it queries all $m-1$ target indexes $I\setminus \{G_i\}$ to retrieve a candidate neighbor set $W$ of size $ef_c$, which corresponds to the construction parameter. However, since it is unknown which indexes contain the top-$ef_c$ candidates, all $m-1$ target indexes must be searched, incurring $m-1$ times $ef_c$-NN searches and $(m-1)\cdot ef_c$ candidates in total. As we have $|V|$ points to merge, it requires $O(|V|(m-1))$ searches with $O(|V|\cdot (m-1)\cdot ef_c)$ intermediate candidates, which is both time-consuming and could even take longer time than reconstruction. Obviously, reducing $m-1$ and $ef_c$ could reduce this large complexity. In the following, we provide two observations on how we could reduce them without sacrificing index quality.

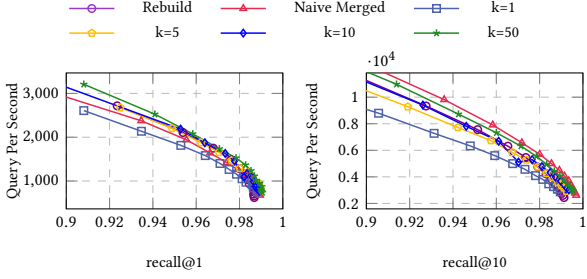
\begin{figure}[!t]
\centering
\begin{small}

% 全局图例（两方法）
\begin{tikzpicture}
\begin{customlegend}[legend columns=3,
legend entries={Rebuild, $\CM_{k=200}$, $\CM_{k=1}$, $\CM_{k=5}$, $\CM_{k=10}$, $\CM_{k=50}$},
legend style={at={(0.5,1)},anchor=south,draw=none,font=\scriptsize,column sep=0.3cm, yshift=-10pt}]
% 替换颜色
\addlegendimage{line width=0.2mm,color=violate,mark=o,mark size=0.5mm}
\addlegendimage{line width=0.2mm,color=amaranth,mark=triangle,mark size=0.5mm}
\addlegendimage{line width=0.2mm,color=navy,mark=square,mark size=0.5mm}
\addlegendimage{line width=0.2mm,color=amber,mark=pentagon,mark size=0.5mm}
\addlegendimage{line width=0.2mm,color=blue,mark=diamond,mark size=0.5mm}
\addlegendimage{line width=0.2mm,color=forestgreen,mark=star,mark size=0.5mm}
\end{customlegend}
\end{tikzpicture}
%\\[-15pt]
\vspace*{-2mm}

\begin{tikzpicture}[scale=1]
\begin{axis}[
    height=3.0cm,
    width=\columnwidth/2.10,
    xlabel=recall@1, xmin=0.90, xmax=1.0,
    ylabel=Query Per Second, ylabel style={yshift=-15pt}, 
    label style={font=\scriptsize},
    tick label style={font=\scriptsize},
    ymajorgrids=true,
    xmajorgrids=true,
    grid style=dashed,xlabel style={inner sep=0pt, yshift=4pt},
]
% buildasone
\addplot[line width=0.2mm,color=violet,mark=o,mark size=0.5mm]
plot coordinates {
(0.9235, 2716.5)
(0.9539, 2107.5)
(0.9679, 1749.6)
(0.9761, 1483.8)
(0.9801, 1282.2)
(0.9822, 1158.0)
(0.9842, 1025.5)
(0.9845, 938.2)
(0.9861, 873.7)
(0.9864, 809.3)
(0.9865, 700.4)
(0.9867, 658.8)
(0.9868, 622.6)
(0.9869, 746.1)
};
% merge
\addplot[line width=0.2mm,color=amaranth,mark=triangle,mark size=0.5mm]
plot coordinates {
(0.8952, 2992.0)
(0.9346, 2375.7)
(0.9555, 1941.2)
(0.9664, 1640.9)
(0.974, 1421.1)
(0.9791, 1262.1)
(0.9831, 1154.9)
(0.9853, 1030.4)
(0.9861, 957.9)
(0.9878, 884.6)
(0.9883, 827.6)
(0.9884, 777.4)
(0.989, 729.9)
(0.9895, 676.7)
};
% 1
\addplot[line width=0.2mm,color=navy,mark=square,mark size=0.5mm]
plot coordinates {
(0.9081, 2608.0)
(0.9347, 2136.2)
(0.9529, 1814.2)
(0.9642, 1580.8)
(0.971, 1401.0)
(0.9752, 1264.9)
(0.9786, 1156.2)
(0.9812, 1051.6)
(0.983, 973.3)
(0.9859, 913.3)
(0.9864, 844.4)
(0.9881, 793.6)
(0.9882, 748.1)
};
% 5
\addplot[line width=0.2mm,color=amber,mark=pentagon,mark size=0.5mm]
plot coordinates {
(0.9249, 2675.4)
(0.9488, 2203.2)
(0.9622, 1877.2)
(0.9705, 1654.6)
(0.9763, 1467.8)
(0.9801, 1289.2)
(0.9838, 1200.8)
(0.986, 1100.8)
(0.9869, 1008.4)
(0.9891, 945.2)
(0.9894, 874.8)
(0.9897, 821.6)
(0.9897, 775.6)
};
% 10
\addplot[line width=0.2mm,color=blue,mark=diamond,mark size=0.5mm]
plot coordinates {
(0.8877, 3365.6)
(0.9528, 2185.2)
(0.964, 1874.9)
(0.9734, 1630.2)
(0.9777, 1457.1)
(0.9821, 1092.1)
(0.9835, 1175.3)
(0.9858, 1088.0)
(0.9864, 991.4)
(0.9873, 865.8)
(0.9875, 927.7)
(0.9883, 810.3)
(0.9891, 764.0)
};
% 50
\addplot[line width=0.2mm,color=forestgreen,mark=star,mark size=0.5mm]
plot coordinates {
(0.9081, 3208.7)
(0.9409, 2525.6)
(0.9584, 2076.8)
(0.9707, 1735.3)
(0.9784, 1526.8)
(0.9828, 1366.4)
(0.9853, 1235.6)
(0.9869, 1112.5)
(0.9877, 1014.7)
(0.9892, 949.7)
(0.9898, 814.9)
(0.9902, 766.3)
(0.9903, 872.3)
(0.9906, 736.6)
};
\end{axis}
\end{tikzpicture}
\begin{tikzpicture}[scale=1]
\begin{axis}[
    height=3.0cm,
    width=\columnwidth/2.10,
    xlabel=recall@10, xmin=0.90, xmax=1.0, 
    ylabel=Query Per Second, ylabel style={yshift=-15pt}, 
    label style={font=\scriptsize},
    tick label style={font=\scriptsize},
    ymajorgrids=true,
    xmajorgrids=true,
    grid style=dashed,xlabel style={inner sep=0pt, yshift=4pt},
]
% buildasone
\addplot[line width=0.2mm,color=violet,mark=o,mark size=0.5mm]
plot coordinates {
(0.76755, 19161.6)
(0.8811, 12573.9)
(0.92757, 9334.5)
(0.95138, 7556.3)
(0.96534, 6311.6)
(0.97363, 5457.6)
(0.97918, 4739.2)
(0.983, 4226.8)
(0.98544, 3830.1)
(0.98714, 3456.2)
(0.98844, 3224.6)
(0.98943, 2994.0)
(0.99007, 2758.7)
(0.99077, 2603.2)
(0.99127, 2443.0)
};
% merge
\addplot[line width=0.2mm,color=amaranth,mark=triangle,mark size=0.5mm]
plot coordinates {
(0.77005, 20327.4)
(0.88946, 13236.1)
(0.93593, 9815.7)
(0.95835, 7908.0)
(0.97181, 6567.0)
(0.98005, 5702.4)
(0.9855, 4994.6)
(0.98912, 4497.9)
(0.99151, 4058.5)
(0.99341, 3704.7)
(0.99437, 3414.4)
(0.99518, 3171.2)
(0.99596, 2942.1)
(0.99645, 2778.1)
(0.99685, 2614.6)
};
% 1
\addplot[line width=0.2mm,color=navy,mark=square,mark size=0.5mm]
plot coordinates {
(0.63746, 22520.8)
(0.792, 14313.4)
(0.8655, 10840.0)
(0.90566, 8792.4)
(0.93127, 7277.4)
(0.94798, 6330.8)
(0.95928, 5599.3)
(0.96756, 4996.5)
(0.97323, 4568.2)
(0.97811, 4091.5)
(0.98146, 3806.8)
(0.98421, 3474.8)
(0.98641, 3266.0)
(0.98822, 3069.5)
(0.98953, 2874.2)
};
% 5
\addplot[line width=0.2mm,color=amber,mark=pentagon,mark size=0.5mm]
plot coordinates {
(0.676, 23549.2)
(0.81676, 14703.8)
(0.88291, 11567.2)
(0.91914, 9280.0)
(0.94224, 7717.7)
(0.9575, 6735.1)
(0.96734, 5839.3)
(0.97386, 5273.5)
(0.97909, 4749.2)
(0.98312, 4264.5)
(0.98598, 3969.6)
(0.98835, 3580.4)
(0.99007, 3338.7)
(0.99117, 3175.6)
(0.99232, 2919.8)
};
% 10
\addplot[line width=0.2mm,color=blue,mark=diamond,mark size=0.5mm]
plot coordinates {
(0.69366, 22952.5)
(0.83076, 15574.6)
(0.89214, 11664.9)
(0.9261, 9403.6)
(0.94612, 7806.9)
(0.9611, 6686.0)
(0.97026, 5112.2)
(0.97655, 5295.0)
(0.98104, 4773.7)
(0.98487, 4329.7)
(0.98725, 3986.4)
(0.98973, 3686.3)
(0.99122, 3431.3)
(0.99256, 3136.3)
(0.9935, 3026.4)
};
% 50
\addplot[line width=0.2mm,color=forestgreen,mark=star,mark size=0.5mm]
plot coordinates {
(0.72894, 21691.5)
(0.8592, 14719.6)
(0.91391, 10970.7)
(0.94254, 8722.7)
(0.96005, 7319.4)
(0.97139, 6313.2)
(0.97852, 5546.2)
(0.98324, 4940.6)
(0.98678, 4427.7)
(0.98937, 4077.4)
(0.99126, 3743.6)
(0.99282, 3470.0)
(0.99397, 3229.0)
(0.99498, 3055.2)
(0.99553, 2896.1)
};
\end{axis}
\end{tikzpicture}

\vgap
\vspace*{-2mm}
\caption{Nearest Neighbors being Critical Factor}
\vgap
\label{fig:observation1}
\end{small}
\end{figure}

\stitle{Observation 1: Nearest Neighbors are Key Contributing Components of Merging.}\label{ob:Ob1}
As mentioned above, most graph-based methods construct proximity graphs using a two-stage framework, where higher-quality pruning candidates $W$ lead to better index quality. These pruning candidates $W$ aim to cover a large region around a given vertex, which directly leads to a high time cost. $W$ can be divided into two parts: (1) Nearest neighbors that provide good connectivity among KNNs. (2) Distant neighbors that enable efficient graph traversal by introducing long-range edges and covering broader regions of the data space~\cite{ExtendedNeighborhoodGraph-zhang-SIGIR-2025}. We find that \textbf{nearest neighbors are critical for merging quality}, as the distant neighbors are already captured within sub-indexes. 
This phenomenon can be verified in the following example:
%The experiment in Fig~\ref{fig:observation1} shows that including distant neighbors ($k=200$)  yields negligible improvement over using only top-50 neighbors. 

\begin{exmp}
In this example, we compare the search performance of the rebuilt index with the merged index from sub-indexes of 2 equal-sized disjoint random partitions of the SIFT1M dataset. We first construct an HNSW index directly on the complete dataset with $ef_c=200$ and denote it as \textbf{Rebuild}. Then we relax the per-target-index search effort of CM from the fixed $ef_c$ to $k=[1,5,10,50,200]$, where 200 corresponds to the default setting. $CM_k$ indicates that the top-$k$ results from the search are used for the merge of the 2 sub-indexes, so $k=1$ keeps only the nearest neighbor. As illustrated in Fig~\ref{fig:observation1}, we investigate the impact of $k$ from target indexes on merged index quality. Specifically, incorporating nearest neighbors from other indexes dramatically improves performance even with just $k=1$, and it achieves the identical performance of \textbf{Rebuild} at high recall ($\ge 0.99$). With larger $k$ from target indexes, the merged index achieves a better efficiency-accuracy tradeoff.% for AKNN search.

%\textbf{Cross-query Merge} (CM) is a merged index obtained by pairwise searching with $ef=200$ (retrieving top-$ef$ nearest neighbors). For $ef$=\{1, 5, 10, 50\}, pruning candidates $W$ are constructed by concatenating existing neighbors in the source index with $ef$ actual nearest neighbors from the target index. Incorporating nearest neighbors from other indexes dramatically improves performance even with just $ef=1$, and it achieves the identical performance of \textbf{Rebuild} at high recall ($\ge 0.99$). With larger $ef$ values, the merged index achieves a better efficiency-accuracy tradeoff for AKNN search.
\end{exmp}

Based on this observation, we can safely reduce the merging problem to a more efficient formulation. Standard index construction typically requires large pruning candidate sets with $|W| \geq 200$. In contrast, for merging purposes, we only need to identify the top-$k$ nearest neighbors from the target index for each vertex. Here, $k$ can be substantially smaller, \eg $k \in [1, 50]$, which saves the computation significantly by limiting the search to a small number of nearest neighbors from other indexes, making the merging process substantially more efficient with smaller $k$.

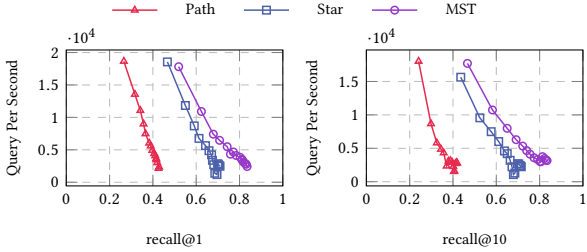
\begin{figure}[!t]
\centering
\begin{small}

% 全局图例（两方法）
\begin{tikzpicture}
\begin{customlegend}[legend columns=3,
legend entries={Path, Star, MST},
legend style={at={(0.5,1.15)},anchor=north,draw=none,font=\scriptsize,column sep=0.3cm}]
% 替换颜色
\addlegendimage{line width=0.2mm,color=amaranth,mark=triangle,mark size=0.5mm}
\addlegendimage{line width=0.2mm,color=navy,mark=square,mark size=0.5mm}
\addlegendimage{line width=0.2mm,color=violate,mark=o,mark size=0.5mm}
% \addlegendimage{line width=0.2mm,color=amber,mark=pentagon,mark size=0.5mm}
% \addlegendimage{line width=0.2mm,color=blue,mark=diamond,mark size=0.5mm}
% \addlegendimage{line width=0.2mm,color=forestgreen,mark=star,mark size=0.5mm}
\end{customlegend}
\end{tikzpicture}
\\[-\lineskip]

\begin{tikzpicture}[scale=1]
\begin{axis}[
    height=3.0cm,
    width=\columnwidth/2.10,
    xlabel=recall@1, xmin=0.0, xmax=1.0,
    ylabel=Query Per Second, ylabel style={yshift=-15pt}, 
    label style={font=\scriptsize},
    tick label style={font=\scriptsize},
    ymajorgrids=true,
    xmajorgrids=true,
    grid style=dashed,xlabel style={inner sep=0pt, yshift=4pt},
]
% MST
\addplot[line width=0.2mm,color=violate,mark=o,mark size=0.5mm]
plot coordinates {
(0.519, 17812.4)
(0.624, 10908.8)
(0.679, 7409.37)
(0.708, 6463.16)
(0.743, 5479.39)
(0.758, 4324.31)
(0.772, 4657.94)
(0.785, 4118.71)
(0.803, 3842.73)
(0.812, 3416.79)
(0.812, 3511.6)
(0.818, 3113.66)
(0.826, 2900.08)
(0.831, 2771.42)
(0.835, 2422.38)
};
% Path
\addplot[line width=0.2mm,color=amaranth,mark=triangle,mark size=0.5mm]
plot coordinates {
(0.266, 18646.4)
(0.317, 13557.3)
(0.342, 11065.0)
(0.357, 8986.38)
(0.366, 7474.41)
(0.383, 6041.94)
(0.39, 5596.59)
(0.397, 4922.74)
(0.406, 4415.85)
(0.411, 4088.75)
(0.415, 3661.07)
(0.416, 3477.02)
(0.42, 3013.68)
(0.426, 2138.4)
(0.429, 2356.28)
};
% Star
\addplot[line width=0.2mm,color=navy,mark=square,mark size=0.5mm]
plot coordinates {
(0.467, 18552.9)
(0.55, 11831.6)
(0.591, 8686.38)
(0.613, 6758.57)
(0.643, 5643.57)
(0.659, 4861.44)
(0.671, 4229.1)
(0.675, 3377.71)
(0.681, 2697.57)
(0.686, 1384.55)
(0.696, 1180.62)
(0.7, 2857.34)
(0.703, 2700.0)
(0.706, 2600.41)
(0.71, 2469.71)
};
\end{axis}
\end{tikzpicture}
\begin{tikzpicture}[scale=1]
\begin{axis}[
    height=3.0cm,
    width=\columnwidth/2.10,
    xlabel=recall@10, xmin=0.0, xmax=1.0,
    ylabel=Query Per Second, ylabel style={yshift=-15pt}, 
    label style={font=\scriptsize},
    tick label style={font=\scriptsize},
    ymajorgrids=true,
    xmajorgrids=true,
    grid style=dashed,xlabel style={inner sep=0pt, yshift=4pt},
]
% MST
\addplot[line width=0.2mm,color=violate,mark=o,mark size=0.5mm]
plot coordinates {
(0.4668, 17708.3)
(0.5824, 10734.8)
(0.6501, 7988.85)
(0.6915, 6288.09)
(0.7221, 5330.28)
(0.7402, 4654.79)
(0.7556, 4123.06)
(0.7743, 3587.86)
(0.7906, 3331.76)
(0.802, 3013.19)
(0.8059, 2985.89)
(0.8132, 3750.95)
(0.8221, 3475.71)
(0.8287, 3268.29)
(0.8343, 3156.84)
};
% Path
\addplot[line width=0.2mm,color=amaranth,mark=triangle,mark size=0.5mm]
plot coordinates {
(0.2412, 18021.6)
(0.2984, 8675.2)
(0.3254, 5805.3)
(0.3447, 4860.18)
(0.3571, 4316.82)
(0.3706, 2340.2)
(0.3792, 3081.77)
(0.3881, 3157.88)
(0.3965, 2900.28)
(0.4016, 2264.66)
(0.4045, 1520.32)
(0.4065, 1602.64)
(0.4109, 2709.82)
(0.4162, 2870.8)
(0.4196, 2771.54)
};
% Star
\addplot[line width=0.2mm,color=navy,mark=square,mark size=0.5mm]
plot coordinates {
(0.4359, 15641.5)
(0.524, 9559.29)
(0.5764, 7486.86)
(0.6111, 6007.59)
(0.6376, 4605.96)
(0.6505, 4198.11)
(0.6639, 3202.43)
(0.6724, 2233.31)
(0.679, 1066.14)
(0.6858, 1293.53)
(0.6975, 2681.38)
(0.7037, 2675.45)
(0.7087, 2435.36)
(0.712, 2357.85)
(0.7154, 2239.35)
};
\end{axis}
\end{tikzpicture}

\vgap
\vspace*{-2mm}
\caption{Search Performance of Different Merge Order}
\vgap
\label{fig:observation2}
\end{small}
\end{figure}

\begin{figure*}[!t]
    \centering
    \includegraphics[width=1.0\textwidth]{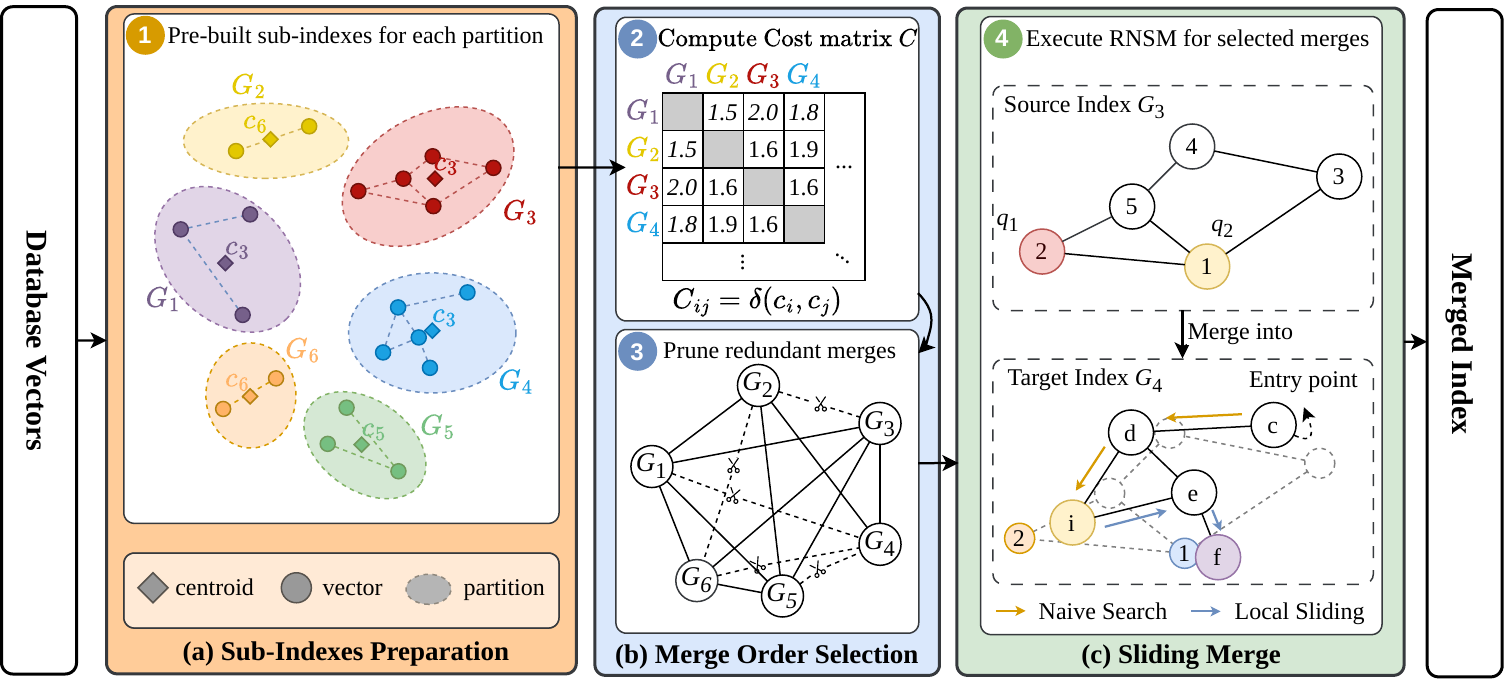}
    \vspace{-6mm}
    \caption{Pipeline of Our Multi-Index Merge Framework}
    \vspace{-3mm}
    \label{fig:framework}
\end{figure*}

\stitle{Observation 2: Different Merge Orders Lead to Varying Search Performances.} % We then consider different merge orders for merging multiple indexes, where 
The merge order specifies which pairs of indexes to merge next. According to our analysis, retrieving KNNs from other indexes for each vertex is critical, which naturally leads to a pairwise processing framework. However, this framework suffers from inherent computational redundancy for the following reasons: (1) KNNs from different partitions may occlude one another during merging. Empirically, only 9.2\% of the 100-NNs remain as neighbors of the vertex after neighbor selection. (2) For cluster partitions, KNNs of a vertex are often concentrated in a subset of partitions rather than uniformly distributed across all partitions. Moreover, we observe that \textbf{even with the same number of merge operations, different merge orders result in significantly different search performance,} as shown in the following example:

% [SHRINK ANALYSIS]
\begin{exmp}
    We compare ten random partitions of SIFT1M merged with the same CM pairwise algorithm under three merge orders, isolating the effect of the merge order. \textbf{Path} forms a merging chain, \textbf{Star} merges all indexes with a central one, and \textbf{MST} builds a minimum spanning tree based on inter-index similarity. All three orders use the same $m-1$ merge operations. As shown in Fig~\ref{fig:observation2}, \textbf{Path} yields low recall because cross-partition KNNs may lie at opposite ends of the chain, while \textbf{Star} and \textbf{MST} improve cross-partition reachability. Moreover, \textbf{Star} underperforms \textbf{MST} because its central hub incurs an imbalanced merge workload.
\end{exmp}

\subsection{Multi-index Merging Framework Overview}\label{sec:framework-intro}

Fig~\ref{fig:framework} illustrates the pipeline of our proposed RNSM + MOS framework, which has the following three stages:

%
% Revise Here
%

\stitle{a) Sub-Indexes Preparation}.  
While existing methods are tailored to either cluster partitions~\cite{DiskANN-subramanya-NIPS-2019, ExtendedNeighborhoodGraph-zhang-SIGIR-2025} or random partitions~\cite{CSPG-yang-NIPS-2024,FGIM_wu_2026_SIGMOD}, our framework can be applied to both partition types without compromising performance. Given pre-built sub-indexes, we first prepare the centroids of each partition. Then the partition centroids are utilized for merge order selection to compute the cost matrix $C$.

\stitle{b) Merge Order Selection (MOS)}. Next, we reduce the cost of multi-index merging through MOS. It acts as a lightweight merge planner on top of pairwise RNSM. We formulate MOS as selecting a connected merge order graph that minimizes the total cost of the selected pairwise merges, subject to degree and diameter bounds that limit total merge workload and cross-partition traversal length, respectively. We also use centroid distances to prioritize the merge of closer partitions, as nearby indexes are more likely to share KNNs and thus benefit more from merging~\cite{douze2026FaissLibrary}. Specifically, we first compute pairwise centroid distances to form a cost matrix $C$. For random partitions, we prove they are identical in expectation and use a constant. For cluster partitions, such pairwise centroid distance computation cost is negligible given that the partition number is much smaller than the dataset size. Then, we construct a sparse merge order graph that effectively prunes unnecessary merges (\eg $G_2$ and $G_3$) according to $C$ and the requirements of degree and diameter.

\stitle{c) Sliding Merge}.
Finally, we execute each edge selected by MOS as an RNSM operation following two common principles: merge smaller indexes into larger ones~\cite{Milvus-wang-ACM-2021}, and merge closer partitions first~\cite{douze2026FaissLibrary}. RNSM exploits prior search results to accelerate later searches. Specifically, we greedily select nodes with the maximum number of Reverse $k$-Nearest Neighbors (RKNN) in the source index as pivots. RKNN of a vector $x$ is a vector that regards $x$ as one of its KNNs. Then, the search result $i$ of pivot $2$ is used as an entry point to find cross-index neighbors for RKNNs of pivot $2$.

\section{RNSM for Pairwise Indexes Merge}
\label{sec:Bi-Index-Merge}

This section details the proposed Reverse Neighbor Sliding Merge framework for merging two graph indexes. We first formally define the concept of \textit{sliding merge} and derive a corresponding cost model in \S~\ref{sec:SlidingMergeCostModel} to quantify the merging overhead. Then, we introduce the RNSM algorithm in \S~\ref{sec:RNSM-detail}, demonstrating how it accelerates the merging process by reducing the derived cost objective.

\subsection{Sliding Merge Cost Model}\label{sec:SlidingMergeCostModel}

Our preliminary analysis identifies the AKNN search within the target index as the primary bottleneck in index merging (see Fig~\ref{fig:observation1}). Consequently, optimizing the search process on the target index is paramount. Prior research established that initializing graph-based searches with entry points topologically closer to the query target significantly reduces the search path length, thereby reducing latency~\cite{LSH-APG-zhao-VLDB-2023, DiskANN-subramanya-NIPS-2019, NSG-fu-VLDB-2019}. Building on this insight, we propose \textit{Sliding Merge}, a strategy that leverages search results from neighboring nodes to accelerate processing. In this section, we formalize the cost model and define the optimization problem.

\stitle{Notation and Sliding Semantics.}
Given a source graph index $G_1=(V_1, E_1)$ and a target index $G_2=(V_2, E_2)$, our objective is to merge $G_1$ into $G_2$ by performing an AKNN search in $G_2$ for every vector $x \in V_1$. The \textit{Sliding Merge} framework employs two distinct search operations:
\begin{enumerate}[leftmargin=4\labelsep]
    \item \textbf{Naive Search:} A standard AKNN search for query $x \in V_1$ starting from the default entry point of $G_2$.
    \item \textbf{Local Sliding:} A search for query $x$ initialized using the result of a previous search for a pivot node $p$.
\end{enumerate}
Let $\mathcal{L}(p, x)$ denote the search cost for query $x$. If a pivot $p$ is used, the cost depends on the proximity between $p$ and $x$; otherwise, a fixed base cost is incurred.

\stitle{Cost Metric.}
To ensure our model is hardware-agnostic, we employ the Number of Distance Computations (NDC) as the cost metric, a standard practice in nearest neighbor search analysis~\cite{VBASE-zhang-OSDI-2023,VSAG-zhong-vldb-2025,DiskANN-subramanya-NIPS-2019,HNSW-malkov-IEEE-2020,ADSampling-gao-ACM-2023}.
We define the cost of a \textit{Naive Search} as a constant $\gamma$, derived from the average NDC required for a search from the default entry point to reach the prescribed recall level, like 0.98. However, estimating $\gamma$ from target recall is impractical in real systems because of the non-existent ground truth. We therefore use a sufficiently large naive-search effort $ef_N$ (a common graph index setting~\cite{GraphBasedANNS-wang-vldb-2021}) to ensure reliable pivot searches. Thus, $\gamma$ is used for formalization while $ef_N$ is the practical parameter. Moreover, our experiment shows a strong linear correlation ($R^2=0.997$) between search effort $ef$ and NDC. We therefore approximate the NDC of a search with effort $ef$ as $\alpha\cdot ef$, where $\alpha$ is the empirically fitted NDC--$ef$ slope. Accordingly, the Naive Search cost is approximated as $\gamma\approx\alpha\cdot ef_N$.
For \textit{Local Sliding}, recent findings~\cite{SimJoin-xie-POM-2025} indicate that when a query $x$ starts from a pivot $p$ proximate in the high-dimensional space, the NDC required to converge follows a linear relationship with their distance $\delta(p,x)$ (see Fig~\ref{fig:ndc-dist}). We measured the 95th percentile of the $k$-th NN distance and found that the top-500 NNs empirically fall within this linear range. Since RNSM performs local sliding for top-3\textasciitilde10 NNs, we can approximate the sliding NDC cost as $\mathcal{L}(p,x)\approx \beta\cdot\delta(p,x)$, where $\beta$ is the empirically fitted distance--NDC slope. By matching the NDC with $\alpha\cdot ef=\beta\cdot\delta(p,x)$, we can estimate the search effort $ef_L$ for sliding from $p$ to $x$ as $ef_L=\lambda\cdot\delta(p,x)$, where $\lambda=\beta/\alpha$.

\stitle{Problem Formulation.}
While local sliding offers lower costs than naive search (see Fig~\ref{fig:ndc-dist}), it necessitates a prerequisite: the pivot $p$ must have already been searched. Ideally, every node would slide from its nearest resolved neighbor. However, identifying the optimal global pivot for $x$ would itself require a costly search.
To make this tractable, we exploit the topology of the source graph $G_1$. Since $G_1$ is a proximity graph, neighbors in $G_1$ are inherently close in the vector space~\cite{NSG-fu-VLDB-2019}. Thus, we restrict the candidate pivots for a node $v$ to its immediate neighbors $N(v)$ in $G_1$.
This transforms the merge optimization into a pivot selection problem: we must select a subset of nodes to perform Naive Search (pivots), such that all other nodes can perform low-cost Local Sliding from a pivot. Moreover, we require pivots to be independent of each other to fully decouple them. We formally define this as follows:

\begin{definition}[Minimum Cost Independent Dominating Pivot Selection (IDPS)]\label{defn:DPS}
\
\begin{itemize}[leftmargin=4\labelsep]
    \item \textbf{Input:} A vector dataset $X$, a graph $G=(V, E)$ where node $v_i \in V$ corresponds to $x_i \in X$, a distance function $\delta(\cdot,\cdot)$, and a naive search cost $\gamma$, distance-NDC slope $\beta$.
    \item \textbf{Output:} A pivot subset $P \subseteq V$ that minimizes the total cost $\mathcal{L}(P)$, subject to the independent dominating set constraint.
    \item \textbf{Constraint:} Every node $v \in V$ must be either a pivot or covered by a pivot:
      $$ v \in P \lor \exists p \in P \text{ s.t. } (v, p) \in E $$
      Additionally, $P$ is an \emph{independent set}: no two pivots are
      adjacent in $G$.
    \item \textbf{Objective:} Minimize $\mathcal{L}(P) = \mathcal{L}_{naive}(P) + \mathcal{L}_{slide}(P)$, where:
    \begin{align*}
        \mathcal{L}_{naive}(P) &= \gamma \cdot |P| \\
        \mathcal{L}_{slide}(P) &= \sum_{v \in V \setminus P} \min_{p \in P \cap N(v,G)} \beta\cdot\delta(x_v, x_p)
    \end{align*}
\end{itemize}
\end{definition}

\begin{figure}[!t]
\centering
\begin{small}

\begin{tikzpicture}
\begin{customlegend}[legend columns=4,
legend style={at={(0.5,1.15)},anchor=north,draw=none,font=\scriptsize,column sep=0.3cm}]
% \addlegendimage{line width=0.2mm,color=violate,mark=o,mark size=0.8mm}
\end{customlegend}
\end{tikzpicture}
\\[-\lineskip]

\subfloat[IMAGENET1M]{\vspace{-2mm}
\begin{tikzpicture}[scale=1]
\begin{axis}[
    height=3.0cm,
    width=\columnwidth/2.00,
    xlabel={$\delta(x_i,x_j)$},
    ylabel=NDC, ylabel style={yshift=-15pt}, 
    label style={font=\scriptsize},
    tick label style={font=\scriptsize},
    ymajorgrids=true,
    xmajorgrids=true,
    grid style=dashed,
    smooth,xlabel style={inner sep=0pt, yshift=4pt},
]
% NaiveMerge
\addplot[line width=0.3mm,color=violate,smooth]
plot coordinates {
(64.30,139.92)
(104.92,156.94)
(152.30,161.50)
(172.61,175.27)
(179.38,183.17)
(192.91,179.90)
(199.68,180.82)
(215.76,184.17)
(224.22,189.78)
(232.68,196.54)
(249.60,211.99)
(258.06,214.34)
(266.52,211.55)
(274.99,224.20)
(283.45,230.15)
(321.52,232.46)
(327.44,263.59)
(329.14,290.02)
(330.83,306.98)
(332.52,329.13)
(335.91,355.32)
(337.60,447.84)
};
\draw[red,dashed,line width=0.8pt]
    ({axis cs:242.8620,0} |- {rel axis cs:0,0}) --
    ({axis cs:242.8620,0} |- {rel axis cs:0,1});
\end{axis}
\end{tikzpicture}}\hspace{2mm}
\subfloat[DEEP10M]{\vspace{-2mm}
\begin{tikzpicture}[scale=1]
\begin{axis}[
    height=3.0cm,
    width=\columnwidth/2.00,
    xlabel={$\delta(x_i,x_j)$},
    ylabel=NDC, ylabel style={yshift=-15pt},
    label style={font=\scriptsize},
    tick label style={font=\scriptsize},
    ymajorgrids=true,
    xmajorgrids=true,
    grid style=dashed,
    smooth,xlabel style={inner sep=0pt, yshift=4pt},
]
% NaiveMerge
\addplot[line width=0.3mm,color=violate,smooth]
plot coordinates {
(0.10,79.00)
(0.16,300.33)
(0.22,296.07)
(0.28,392.92)
(0.34,414.85)
(0.40,407.42)
(0.46,372.62)
(0.52,476.79)
(0.55,513.83)
(0.58,606.89)
(0.61,626.28)
(0.64,657.11)
(0.67,629.63)
(0.70,722.28)
(1.10,999.09)
(1.13,1106.60)
(1.16,1471.70)
(1.25,1898.91)
(1.27,2160.85)
(1.29,2217.85)
(1.31,2657.59)
(1.33,7213.67)
(1.35,7699.70)
};
\draw[red,dashed,line width=0.8pt]
    ({axis cs:0.8607,0} |- {rel axis cs:0,0}) --
    ({axis cs:0.8607,0} |- {rel axis cs:0,1});
\end{axis}
\end{tikzpicture}}\hspace{2mm}

\vgap
\caption{Verification of Distance Cost Model. In datasets IMAGENET1M and DEEP10M, the cost of sliding from $x_i$ to $x_j$ is linear to the distance when $x_i$ and $x_j$ are close\cite{SimJoin-xie-POM-2025}. The red dashed lines mark the 95th percentile distance of the top-500 nearest neighbors in each dataset.}
\vgap
\label{fig:ndc-dist}
\end{small}
\end{figure}

The independence constraint (no pivot slides from another pivot) is a conservative design choice for parallel execution rather than a guarantee of the minimum pivot count. Although allowing adjacent pivots can sometimes reduce the number of Naive Searches, reusing search results along an adjacent-pivot chain introduces multi-hop sliding and sequential dependencies, exemplified by the MST-based propagation in SimJoin~\cite{SimJoin-xie-POM-2025}. Such dependencies limit parallel scalability, as shown in Fig~\ref{fig:parallel-scalability}. We next show the computational hardness of this IDPS problem.

\begin{theorem}\label{thm:dps-nphard}
The Minimum Cost Independent Dominating Pivot Selection (IDPS) problem is NP-hard.
\end{theorem}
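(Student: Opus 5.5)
We reduce from the Minimum Independent Dominating Set problem (equivalently, Minimum Maximal Independent Set), which is NP-hard even on restricted graph classes such as planar graphs or bipartite graphs of bounded degree (Garey and Johnson). We use the decision version of IDPS: given the instance and a threshold $\tau$, decide whether some independent dominating $P$ satisfies $\mathcal{L}(P)\le\tau$. Starting from a graph $H=(V_H,E_H)$ on $n$ vertices and an integer $K$, we build an IDPS instance in which the graph $G$ is $H$ itself. Every set $P$ is then feasible for IDPS exactly when it is an independent dominating set of $H$, so the only thing left to arrange is that the objective tracks $|P|$.

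The natural way to make the objective track $|P|$ is to make every edge length equal, so that the sliding term becomes $\beta\,d\,(n-|P|)$. We then get
\[
\mathcal{L}(P)=\gamma|P|+\beta d\,(n-|P|)=\beta d n+(\gamma-\beta d)|P| .
\]
This is strictly increasing in $|P|$ whenever $\gamma>\beta d$, which matches the modeling premise that naive search costs more than local sliding. Under this choice, $H$ has an independent dominating set of size at most $K$ if and only if the IDPS instance admits $P$ with $\mathcal{L}(P)\le\beta d n+(\gamma-\beta d)K$. We can, for example, set $\beta=d=1$ and $\gamma=2$.

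The main obstacle is realizing this metrically. We need vectors $x_v\in\mathbb{R}^D$ with $\delta(x_u,x_v)=d$ for every edge $uv\in E_H$, which amounts to an isometric unit-distance embedding of $H$. Arbitrary graphs do not admit such embeddings with equal edge lengths for adjacent pairs. Two routes are available.

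\textbf{(i) Avoid the embedding.} The definition takes $G$ and $\delta$ as separate inputs and never requires $G$ to be the proximity graph of $X$. So we can place all vertices as distinct points in general position and only need the distances along edges of $G$ to be nearly equal. We place $x_v=e_v+\epsilon w_v$, where the $e_v$ are standard basis vectors in $\mathbb{R}^n$ and the $w_v$ are small perturbations that make the points distinct. All pairwise distances are then $\sqrt2\pm O(\epsilon)$. Choosing $\epsilon$ small relative to $1/n$ keeps the total sliding term within an additive $1/2$ of $\sqrt2\,\beta(n-|P|)$. With $\gamma$ set so that $\gamma-\sqrt2\beta\ge1$, the threshold comparison is preserved exactly. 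In fact, if the $x_v$ are taken to be distinct basis vectors, all distances are exactly $\sqrt2$ and the perturbation is unnecessary, since the points are already distinct. This gives the clean exact reduction.

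\textbf{(ii) Insist on geometric naturalness.} If we additionally required $G$ to be a geometric graph, we would instead reduce from the unit disk graph version of Minimum Independent Dominating Set, which is also NP-hard, but we do not need this. Finally, the construction is polynomial: it uses dimension $D=n$ and rational coordinates, so the reduction is polynomial-time and IDPS is NP-hard.
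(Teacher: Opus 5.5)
Your reduction is the paper's: take $G$ to be the MIDS graph itself, make all sliding distances equal so that $\mathcal{L}(P)$ becomes an increasing affine function of $|P|$, and choose $\gamma=2$, $\beta=d=1$. The paper simply declares $\delta$ to be the discrete metric, whereas you realize a uniform distance concretely with standard basis vectors, which is slightly more careful about the Euclidean setting. Two small corrections. First, the embedding ``obstacle'' you raise is not real: your own basis-vector construction shows that every graph embeds in $\mathbb{R}^n$ with all pairwise distances equal. Second, with $d=\sqrt2$ the threshold $\beta d n+(\gamma-\beta d)K$ is irrational, so you should choose a rational threshold inside the gap of width $\gamma-\sqrt2\beta\ge1$ between consecutive cost values.
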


\begin{proof}[Proof Sketch]
We reduce from the NP-complete Minimum Independent Dominating Set (MIDS) problem~\cite{MIDS-Irving-1991}.
Given a MIDS instance $\langle G', k \rangle$, set $G = G'$, define the discrete metric $\delta \equiv 1$ for distinct nodes, and set $\gamma = 2$, $\beta = 1$, $C = k + |V|$.
For any feasible $P$ (an independent dominating set), the cost simplifies to $\mathcal{L}(P) = 2|P| + (|V|-|P|) = |P| + |V|$.
Hence $\mathcal{L}(P) \leq C$ iff $|P| \leq k$, establishing a one-to-one correspondence between MIDS solutions and IDPS solutions.
The full proof is provided in the Appendix.
\end{proof}

Given the intractability of the IDPS problem, we resort to a greedy algorithm for pivot selection presented in Algorithm~\ref{algo:bi-merge}.

\subsection{Reverse Neighbor Sliding Merge}\label{sec:RNSM-detail}

%While the DPS problem provides a theoretical objective, it is NP-hard.  1) \textit{Selection Dependency}: The unit value of a candidate pivot changes whenever a neighbor is selected, requiring sequential updates that break parallel efficiency.
A straightforward algorithm for IDPS is to adopt a greedy strategy: iteratively selecting the pivot with the highest \textit{unit value} (coverage nodes divided by cost). However, applying this directly to index merging is problematic for two reasons. 1) \textit{Limited Visibility}: Relying solely on the sparse connections in the source PG limits the ability to find the true nearest pivot, potentially forcing sub-optimal sliding paths. 2) \textit{Computationally Expensive}: Since pivots and their sliding nodes are not necessarily adjacent in the source index, calculating unit values requires either exhaustive all-pairs distance computation or reliance on graph indexes that often store neighbor rankings rather than explicit distance values \cite{SimJoin-xie-POM-2025}, compromising both flexibility and precision.

\stitle{The RNSM Strategy.}
To overcome these limitations, we propose \textit{Reverse Neighbor Sliding Merge (RNSM)}. The core intuition is to resolve pivot selection \textit{globally} before merging, enabling greater parallelism through independent pivots. It uses RKNN relations to identify pivots: if $x\in N_k(y)$, then $x$ is inherently a good pivot for $y$ due to proximity. Consequently, nodes with many RKNNs are natural hubs that can serve as cost-efficient pivots for many neighbors. We first introduce two core components of the RNSM: Pivot Selection and Local Sliding.
% Instead of dependent selection, we exploit the proximity properties of the Reverse $k$-Nearest Neighbor (RKNN) graph.
% We find that if a node $x$ appears in the KNN list of $y$, then $x$ is inherently a good pivot for $y$ due to proximity. Consequently, nodes with a high-quality RNN graph are natural hubs that can serve as cost-efficient pivots for many neighbors. We first introduce two core components of the RNSM: Pivot Selection and Local Sliding. 

\sstitle{Pivot Selection.} We first identify the pivots and their corresponding followers to perform local sliding before executing the merge. Each node in the source index $G_1$ is classified into:
\begin{enumerate}[leftmargin=4\labelsep]
    \item \textbf{Pivots $p\in P$:} Neighbor candidates obtained by a naive search in target index $G_2$.
    \item \textbf{Follower set $F_p$:} Neighbor candidates obtained by local sliding from search results of pivot $p$.
    \item \textbf{Uncovered set $U$:} The remaining nodes.%Nodes assigned neither as pivots nor as followers.
\end{enumerate}

As shown in Fig~\ref{fig:ndc-dist}, for the search results of one node to serve as effective entry points for another node, the two nodes must lie in proximity in the metric space. This motivates us to select nodes that are close to many other nodes as pivots. We observe that RKNNs naturally capture this notion: unlike KNNs, the exact number of RKNNs of a node is not known in advance. Nodes with more RKNNs are ideal pivot candidates, serving as the nearest neighbors of many others. To be specific, we first expand the neighborhood of each node $v\in V_1$ by searching from $v$ itself for its KNNs $N_k(v)$ (nearest neighbors expansion). With $N_k(v)$ for each node $v$, we construct their RKNNs $R_k(v)$. Let $F_v=R_k(v)\cap U$, we define the gain of selecting $v$ as pivots to be:
$$g(v)=\frac{|F_v|}{\sum_{u\in F_v}\delta(x_v,x_u)}$$
For simplicity, we apply a uniform search effort $ef_L$ to all followers by taking the maximum possible distance of any pivot to its follower (see \textit{Local Sliding} below). The gain $g(v)$ thus simplifies to $|F_v|$, and the IDPS problem reduces to exactly the Minimum Independent Dominating Set (MIDS) problem. 
Since MIDS is inapproximable within $|V|^{1-\epsilon}$ for any $\epsilon > 0$ unless $\mathrm{P=NP}$~\cite{MIDS-HALLDORSSON-1993, MIDS-Irving-1991}, we use a lazy greedy algorithm. Specifically, it maintains a max-heap ordered by $g(v)$ and greedily selects the node with the maximum gain as a pivot, assigning its uncovered RKNN members as followers, until $U = \emptyset$. Since $g(v)$ can only decrease as $U$ shrinks, the stored value in the heap always upper bounds the true gain. We exploit this monotonicity via lazy evaluation: a node popped from the heap is selected immediately if its gain is still current, and reinserted with the updated gain otherwise, avoiding redundant recomputation.

\sstitle{Local Sliding.} Next, with each node classified into pivots and followers, we start efficient merging with local sliding. Specifically, each pivot performs a naive search from the default global entry point of $G_2$, which requires a larger search effort $ef_N$ to compensate for the potentially large distance to the target region. Once the naive search for pivot $p$ completes, all followers in $F_p$ proceed to local sliding using the naive search results $Res_p$ as entry points. Since each follower $v \in F_p$ has pivot $p$ as one of its KNNs in $G_1$, $Res_p$ in $G_2$ are likely to be close to the true neighbors of $v$, as the neighbors of a neighbor are likely neighbors~\cite{NNDecent-dong-WWW-2011}. This yields high-quality entry points for local sliding, allowing a much smaller $ef_L$ to suffice. In practice, we use the same $ef_L$ for all slidings, corresponding to $ef_L=\lambda\cdot\max_{v\in P,u\in F_v}\delta(x_v,x_u)$. Compared to naive search for all nodes with effort $ef_N$, local sliding approximately reduces the total search cost by $$(1 - |P|/|V_1|) \cdot (1 - ef_L/ef_N)$$ Finally, we update the merged graph by applying heuristic edge selection to the search/sliding results. For each source node, its candidates are concatenated with its existing neighbors and pruned together. The target index nodes acquire cross-partition connections solely via reverse edges inserted during this process, without performing explicit searches of their own. Empirically, this asymmetric strategy incurs negligible quality loss while halving the merge cost~\cite{hnswmerger-jin-sigmod-2026,HNSW-malkov-IEEE-2020}. The details of RNSM are presented in Algorithm~\ref{algo:bi-merge}.

\begin{algorithm}[!t]
  \caption{Reverse Neighbor Sliding Merge (RNSM)}
  \label{algo:bi-merge}
  \scriptsize
  %\begin{small}
  \KwIn{Source Graph $G_1=(V_1,E_1)$, Target Graph $G_2=(V_2,E_2)$, Expansion Factor $k^+$, Reverse $k$, Naive Search $ef_N$, Local Sliding $ef_L$}
  \KwOut{Merged Graph $G=(V,E)$}
  $V \gets V_1 \cup V_2$; $E \gets E_1 \cup E_2$\;
  $P \gets \emptyset$\tcp*{Set of Pivots}
  $Covered \gets \emptyset$\tcp*{Set of merged nodes}
  
  \tcc{Phase 1: Pre-computation}
  $G'_1 \gets$ \textbf{ExpandNeighbors}($G_1, k^+$)\;
  $\{R_k(x) \mid \forall x \in V_1\} \gets$ \textbf{BuildReverseIndex}($G'_1, k$)\;
  
  \tcc{Phase 2: Pivot Selection (Lazy Greedy)}               
  Initialize max-heap $H$ with $(|R_k(x)|,\, x)$ for all $x \in V_1$\;
  \While{$H \neq \emptyset$}{                           
    $(g,\, x) \gets \text{uncovered } x \text{ with maximum gain}$\;
    \If{updated gain $g'<g$}{$\mathrm{ReInsert}(g',x)$; \textbf{continue}\;}  
    $F_x\gets R_k(x) \setminus Covered$\;
    $P \gets P \cup \{x\}$; $Covered \gets Covered \cup \{x\} \cup R_k(x)$\;
  }

  \tcc{Phase 3: Parallel Merging}
  \For{$p \in P$ \textbf{in parallel}}{
    \tcp{Pivot: Naive Search}
    $Res_p \gets \mathrm{SearchGraph}(G_2, \text{query}=p, \text{entry}=\text{default}, ef_N)$\;
    $\mathrm{UpdateGraph}(G, p, Res_p)$\;
    
    \tcp{Followers: Local Sliding}
    \For{$y \in F_p$}{
       $Res_y \gets \mathrm{SearchGraph}(G_2, \text{query}=y, \text{entry}=Res_p, ef_L)$\;
       $\mathrm{UpdateGraph}(G, y, Res_y)$\;
    }
  }
  \Return{$G$}
  %\end{small}
\end{algorithm}

\begin{enumerate}[leftmargin=2\labelsep]
    \item \textbf{Neighbor Expansion (Line 4):} We first densify the source graph $G_1$. Existing PGs are sparse approximations of the true KNN graph. We expand the neighborhood of each node $x$ to a larger size $k^+$ by traversing the graph. This creates a richer set of candidate connections, ensuring that nodes can access their true nearest pivots, aligning with the IDPS objective.
    \item \textbf{Reverse Inversion (Line 5):} We construct the RKNN set $R_k(x) = \{y \mid x \in N_k(y)\}$ for every node with a relative small $k$ (\ie 3,5,7).
    \item \textbf{Pivot Selection (Lines 6-12):} We maintain a max-heap ordered by marginal gain $|R_k(x)\setminus Covered|$. At each step, we extract the top candidate; if its stored gain is stale, we reinsert it with the updated gain (lazy evaluation). Once the gain is confirmed current, we select $x$ as a pivot and assign all uncovered $R_k(x)$ to slide from $x$.
    \item \textbf{Parallel Execution (Lines 13-18):} Finally, we execute the merge in parallel. Pivots perform a \textit{Naive Search} on $G_2$. Their search results become the entry points for the \textit{Local Sliding} of their followers.
\end{enumerate}

It is worth noting that both \textbf{Reverse Inversion} and \textbf{Pivot Selection} account for a negligible fraction of the total merge time (typically less than 5\%), as the merge process is compute-intensive and dominated by distance computations during \textbf{Neighbor Expansion} ($\approx$ 10\%) and \textbf{Parallel Execution} ($\approx$ 85\%). A detailed runtime breakdown of each RNSM phase is provided in Appendix Exp-a.

\begin{exmp}
    As illustrated in Fig~\ref{fig:RNSM}, RNSM merges source index $G_1$ into target index $G_2$ in four steps. Nearest neighbor expansion in Fig~\ref{fig:RNSM}(a) first obtains AKNNs for each node in $G_1$, which is a cheap beam search starting from the nodes themselves. Next, in Fig~\ref{fig:RNSM}(b), given the AKNNs, we get the RKNNs of each node in $G_1$ and rank them by the number of RKNNs for pivot selection. Then, nodes 5 and 8 are greedily selected as pivots for their higher number of RKNNs, while node 4 is excluded since it has already been assigned as a follower of pivot 5. In Fig~\ref{fig:RNSM}(c), we finally start sliding merge on the target index with node 8 and its followers \{3, 9, 10\} as an example. Node 8 first performs a naive search from a distant initial entry point $b$ in $G_2$ and retrieves $i$ as an ANN. Then, its followers use $i$ for local sliding. Because they are close to pivot 8, their nearest neighbors \{g, f, j\} lie near $i$ and are reached with fewer hops.
\end{exmp}

\begin{figure*}[!t]
    \centering
    \includegraphics[width=0.9\textwidth]{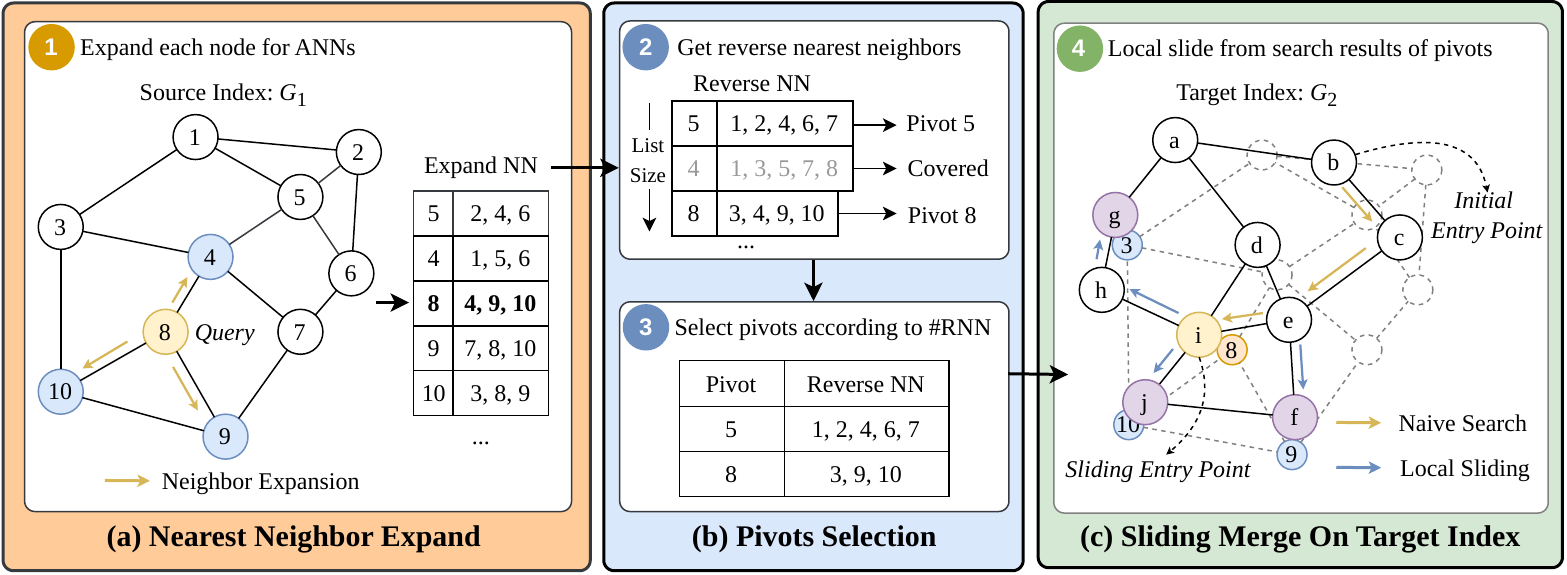}
    \vspace{-3mm}
        \caption{Example of Two-Index Merging with RNSM}
        \vspace{-3mm}
  \label{fig:RNSM}
\end{figure*}

\subsection{Performance Analysis}\label{sec:Analysis}

\stitle{Advantages of RNSM.}
Using RKNNs yields several structural benefits:
(1) \textbf{Natural Hub Identification:} The metric $|R_k(x)|$ identifies density peaks (hubs) purely from graph topology without requiring expensive clustering or centroid computations.
(2) \textbf{Dynamic Load Balancing:} Unlike fixed $k$ approaches, RNSM allows variable cluster sizes. Dense regions generate pivots with large $|R_k(x)|$, amortizing the naive search cost over more nodes, while sparse regions degenerate gracefully.
(3) \textbf{Cost Reduction:} By selecting pivots with the most extensive coverage first, we greedily reduce the number of required pivots $|P|$, directly reducing the naive search term in our cost model. The independent-set constraint further boosts parallel-friendly execution.
% (4) \textbf{Proximity Guarantee:} A follower $y$ is only assigned to pivot $x$ if $x \in N_k(y)$. This guarantees that the starting point of the sliding search is strictly within the $k$-nearest neighbors of the query, ensuring a low $\delta(x,y)$ and thus a minimal sliding cost.

\stitle{Complexity Analysis.}
We analyze the time complexity of merging a source graph index~\(G_1\) into a target graph index~\(G_2\) to produce a unified index~\(G = (V, E)\), where \(n = |V|\) denotes the total number of vertices. Without loss of generality, we assume the source index is not larger than the target index (\ie \(|G_1| \leq |G_2|\)). If initially \(|G_1| > |G_2|\), the two are swapped so that we always merge the smaller structure into the larger one—a common optimization that speeds up merging.
A key property of these graph-based indices, established in prior work~\cite{NSG-fu-VLDB-2019,WorstCase-indyk-nips-2023}, is that the maximum node degree is bounded by a constant. This sparsity guarantee is crucial for achieving an efficient merge procedure.
The merging process consists of three sequential stages:

\noindent\textbf{1. Neighbor Expansion:}
Expanding neighbors involves an incremental search with the graph index. Start with the nearest neighbor; the incremental cost of searching the top-$k^+$ neighbors can be expressed as $O(k^+)$ with a small constant factor~\cite{ESG-yang-arxiv-2025}. This efficiency stems from interpreting the existing proximity graph (PG) index as a compressed representation of the $k$-nearest neighbor graph~\cite{NSG-fu-VLDB-2019}.
\textit{Total Complexity:} $O(n  \cdot k^+)$.

\noindent\textbf{2. Pivot Selection:}
Constructing the RKNN graph requires iterating over all edges in the expanded graph. Given that the total number of edges is \( n \cdot k^+ \), the inversion step takes \( O(n \cdot k^+) \) time. Subsequently, sorting the nodes on the RKNN graph using bucket sort can be performed in \( O(|V| + |E|) \) time. Since \( |V| = n \) and \( |E| = n \cdot k < n \cdot k^+ \), the \textit{total complexity} of the entire process remains \( O(n \cdot k^+) \).

\noindent\textbf{3. Sliding Merge:}
Let $|P|$ be the number of selected pivots.
\begin{itemize}[leftmargin=10pt]
    \item \textit{Pivots:} Perform naive AKNN search. In typical graph structures (NSG, HNSW), the search complexity is bounded by $c_{naive} = O(\log n \cdot ef \cdot d)$, where $ef$ is the search parameter and $d$ is the average degree. Total: $O(|P| \cdot c_{naive})$.
    \item \textit{Followers:} Perform local sliding. Since the pivot $p \in N_k(y)$ by construction, the sliding path length is proportional to the distance $\delta(p, y)$ and is small. In practice, the search complexity is bounded by a small constant $c_{slide}$. Total: $O((n - |P|) \cdot c_{slide})$.
\end{itemize}

\noindent\textbf{Overall Complexity:}
Both local sliding and naive search have a higher cost than neighbor expansion, where $ c_{naive} > c_{slide}>k^+$. The overall complexity can be derived to $O(|P|c_{naive} +(n-|P|)c_{slide})$. Since $|P| < n$ (depending on $k$, typically around $20\%$ of nodes), the dominant term is reduced from $n \cdot c_{naive}$ (all-naive search) to roughly $n \cdot c_{slide}$, representing a significant speedup.

%\section{Merge Order Selection for Multiple Indexes}
\section{MOS for Multiple Indexes Merge}
\label{sec:MOS-technical}
In this section, we minimize the multi-index merge cost by reducing redundant operations. We first formalize the merge order selection task. Then, we propose algorithms for our formalized problem.

\stitle{Merge Order Selection.}\label{sec:MOS-problem}
Recall that merging requires finding the KNNs in all other indexes for every node in each source index. Naive approach requires $O(m^2)$ index merge operations between $m$ indexes.
However, this approach contains significant redundancy in practice. Besides, our empirical analysis (Fig~\ref{fig:observation2}) shows that different merge order strategies yield substantially different search performance. Thus, we formally define the task as follows:
Given $m$ dataset partitions $\{P_1, P_2, \ldots, P_m\}$, we represent partitions as vertices $V = \{v_1, \ldots, v_m\}$ and merge operations as edges $E$, forming a merge order graph $\mathbb{G} = (V, E)$. Each edge $(v_i, v_j) \in E$ indicates that we perform index merging between index $G_i$ and $G_j$. The merge order selection (MOS) task aims to adjust the merge order graph $\mathbb{G}$ to achieve better individual sub-index connections at low cost.

\stitle{Cost Model}. Given the notation above, our objective is to extract a subgraph from the complete graph of pairwise merges. We aim to identify a merge order—defined by this subgraph—that achieves search performance comparable to either a full index rebuild or an exhaustive pairwise merge strategy.  
%For clarity, we analyze random and clustered partitions separately. For random partitions, we assume that merging any two partitions incurs a unit cost. 
We first prioritize merges using centroid distance, since closer partitions are more likely to share KNNs~\cite{douze2026FaissLibrary}. For random partitions, the exchangeability of partition labels implies identical expected centroid distances for all partition pairs. Therefore, we assign a unit merge cost to every pair.
% \begin{theorem}\label{thm:random-partition-centroid-distance}
% Let $X=\{x_1,\ldots,x_n\}\subset\mathbb{R}^D$ be uniformly split into $m$ disjoint equal-size random partitions $\{P_1,\ldots,P_m\}$, and let $c_i=\frac{1}{|P_i|}\sum_{x\in P_i}x$ be the centroid of $P_i$. For any two partition pairs $(i,j)$ and $(i',j')$ with $i\neq j$ and $i'\neq j'$, we have
% \[
% \mathbb{E}\|c_i-c_j\|=\mathbb{E}\|c_{i'}-c_{j'}\|.
% \]
% \end{theorem}
% \noindent\textit{Proof sketch.}
% Equal-size random assignment makes the partition labels exchangeable. Hence, for any two valid pairs $(i,j)$ and $(i',j')$, the random variables $\|c_i-c_j\|$ and $\|c_{i'}-c_{j'}\|$ are identically distributed and therefore have the same expectation. A covariance expansion further gives $\mathbb{E}\|c_i-c_j\|^2=\frac{2m}{m-1}\mathrm{tr}(\Sigma)$, where $\Sigma=\mathrm{Cov}(c_i,c_i)$, which is independent of the pair $(i,j)$.
For clustered partitions, we directly compute the centroid distances for all the partitions.
Moreover, \S~\ref{sec:analysis-motivation} indicates that different merge order strategies can lead to substantially different search performance. 
MOS therefore does not rely solely on a proximity-based centroid distance proxy. Instead, it combines it with two additional proxies: degree and graph diameter. Degree $R$ is introduced for merge efficiency, since it controls the number of pairwise merges. Experimentally, a high-degree partition may receive an overwhelming number of new out-neighbor candidates and hurt quality, as shown in \textit{Star} of Fig~\ref{fig:observation2}. Diameter $\Delta$ is introduced for global search quality, given that a large diameter may reduce recall, as shown in \textit{Path} of Fig~\ref{fig:observation2}. This is because the diameter indicates the maximum number of intermediate partitions needed to reach a cross-partition nearest neighbor. We now formally define the MOS problem with these proxies:

\stitle{MOS Problem}. Therefore, we propose extracting subgraphs from multiple perspectives with minimal cost, specifically:
\begin{definition}[Minimum Cost Merge Order Selection (MOS)]
\
\begin{itemize}[leftmargin=3\labelsep]
    \item \textbf{Input:} Given a set of $m$ vertices $V = \{v_1, v_2, \ldots, v_m\}$ and a symmetric cost matrix $C$ where $C_{ij}$ is the cost of connecting vertex $v_i$ and $v_j$, diameter bound $\Delta$ and degree bound $R$.
    \item \textbf{Output:} An undirected graph $\mathbb{G} = (V, E)$ %where $E \subseteq \binom{V}{2}$ 
    such that:
    \item \textbf{Constraint:} (1) Connectivity: The graph $\mathbb{G}$ is connected. (2) Maximum Degree: For all vertices $v \in V$, the degree $\deg(v) \leq R$. (3) Graph Diameter: The diameter of $\mathbb{G}$ is at most $\Delta$.
    \item \textbf{Cost Objective:} Minimize the total cost of the graph:
    \begin{align*}\min \sum_{(v_i,v_j) \in E} C_{ij}\end{align*}
\end{itemize}
\end{definition}

The constraints serve distinct objectives:
\begin{enumerate}[leftmargin=3\labelsep]
\item \textbf{Connectivity:} This is the fundamental completeness requirement, ensuring that the search space is fully traversable.
\item \textbf{Degree Bound ($R$):} This constraint is merge efficiency and parallel-friendly design. Inserting new edges locks a large proportion of proximity graph nodes, so excessive degree serializes updates and causes parallel workload imbalance.
\item \textbf{Diameter Bound ($\Delta$):} This constraint governs the worst-case search latency. The diameter determines the maximum number of intermediate index hops required to traverse between any two partitions.
\end{enumerate}
While minimizing the total edge weight reduces the direct merge costs, the graph edge weights also affect the runtime performance as shown in our empirical analysis in \S~\ref{sec:Analysis}. 
Ignoring the diameter bound reduces the problem to the NP-hard Bounded Degree Minimum Spanning Tree Problem~\cite{21NP-karp-1972}.
To address this NP-hard problem efficiently, we propose a greedy heuristic as follows. We initialize the merge order graph as an empty structure and iteratively insert nodes to satisfy the Connectivity constraint. Subsequently, for each node, we establish connections with nodes located beyond $\Delta$-hop distance (typically $\Delta$=2) to ensure compliance with the Diameter Bound. When either node possesses $R$ neighbors, we relay the connection to its $(\Delta-1)$-hop neighbors to meet both the Degree and Diameter Bounds whenever a degree-feasible relay exists. Throughout this process, connections are prioritized according to the cost matrix, with lower-cost pairs being added first. The detailed MOS algorithm is presented below.

\begin{figure}[!t]
    \centering
    \includegraphics[width=1.0\columnwidth]{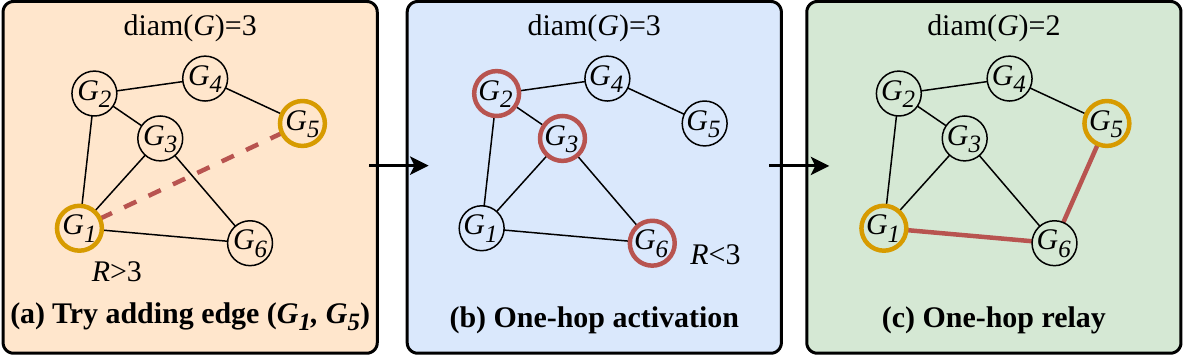}
    \vspace{-6mm}
        \caption{Example of Merge Order Selection}
        \vspace{-3mm}
  \label{fig:MOS}
\end{figure}

\stitle{MOS Algorithm.}
Algorithm~\ref{algo:mos} introduces the detailed procedure of MOS. We begin with an empty graph $\mathbb{G}$ and iteratively add edges (lines 2-3). We process vertices $v \in V$ in ascending order of their average distance to the $R$ nearest centroids, prioritizing centrally-located partitions (lines 4-6). For each $v$, we iteratively connect it to its nearest neighbor $x$ that is more than $\Delta$ hops away (line 8). If both $v$ and $x$ have degree less than $R$, we directly add edge $(v,x)$ (lines 9-10). If exactly one vertex has reached the degree limit $R$, we employ a 1-hop relay strategy (lines 11-15), select a neighbor $x'$ of the degree-saturated vertex, and connect it to the unsaturated one. Note that the subgraph may not exist due to the Moore bound~\cite{MooreGraph-Hoffman-IBM}; however, we still enforce connectivity in practice for completeness of results.
\begin{exmp}
    As illustrated in Fig~\ref{fig:MOS}, MOS constructs a workload-balanced merge order graph with diameter 2 by inserting and relaying edges within 1-hop neighborhoods. Suppose we insert edge $(G_1, G_5)$ when the degree limit of 3 is exceeded for $G_1$ (i.e., $\deg(G_1)=4$). We then relay this connection to one of the 1-hop neighbors of $G_1$, and $G_6$ is selected since $\deg(G_6)<3$. By relaying edge $(G_1, G_5)$ as $(G_5, G_6)$, we obtain a graph with maximum degree at most 3 and diameter 2.
\end{exmp}

Given the merge-order graph constructed by MOS, we execute each selected edge as a pairwise RNSM operation according to two commonly adopted principles. First, we prioritize edges between closer partitions because nearby partitions are more likely to share nearest neighbors~\cite{douze2026FaissLibrary}. Second, for each selected pair, we merge the smaller sub-index into the larger one~\cite{Milvus-wang-ACM-2021,hnswmerger-jin-sigmod-2026}. Consequently, only vectors in the smaller source index require Naive Search or Local Sliding to obtain cross-partition edges, further reducing the merge time. During each pairwise operation, newly identified cross-partition candidates are concatenated with existing neighbors and jointly processed by the edge-pruning rule of the PG index to meet the maximum outdegree bound. Accordingly, when a sub-index is incident to multiple selected edges, the centroid-distance priority determines the next pair to process, while the partition sizes determine the merge direction.

\stitle{Complexity.}
We now analyze the time complexity of Algorithm~\ref{algo:mos}. Constructing the cost matrix requires pairwise distance computations between $m$ $D$-dimensional centroids, resulting in $O(m^2\cdot D)$. The ordering requires $O(m^2 \log R)$ for computing average distances of $R$ nearest neighbors and sorting vertices. The main loop adds $O(m\cdot R)$ edges total, with each edge addition requiring $O(m\cdot R)$ time for BFS-based incremental hop-distance matrix updates, yielding a complexity of $O(m^2R^2)$.
Overall, the time complexity of our MOS algorithm is $O(m^2(R^2+D))$, independent of dataset size $n$.

Since $m$, $R$, and $D$ are typically much smaller than the dataset size $n$, MOS is substantially less expensive than rebuilding or merging graph indexes over all $n$ vectors. Across our evaluated configurations, the end-to-end time cost of the MOS accounts for only approximately $0.1\%$--$1\%$ of the corresponding Rebuild time.

\begin{algorithm}[!t]
  \caption{Merge Order Selection MOS}
  \label{algo:mos}
  \scriptsize
  %\begin{small}
  \KwIn{Partition Centroids $\{c_1, c_2,\ldots, c_m\}$, Cost Matrix $C=\mathbf{0}$, Degree Bound $R$, Diameter Bound $\Delta=2$}
  \KwOut{Merge order graph $\mathbb{G}=\left(V,E\right)$}
  $C \gets \text{compute cost matrix}$\;
  $V \gets \{c_1, c_2,\ldots, c_m\}$;  $E \gets \emptyset$\;
  $H \gets \text{distance matrix in hops, initialized to } \infty$; $Distances\gets\emptyset$\;
  \For{$c_i\in V$}{
    $Distances[c_i]\gets \text{average distance of }R\text{ nearest centroids}$\;
  }
  $\text{order}\gets V\text{ sorted in ascending order of } Distances$\;
  \For{$v\in \text{order}$}{
    \While{$\text{has nearest vertex } x \text{ with } H[v][x]>\Delta$}{
        \If{$\text{deg}(v)<R \text{ \textbf{and} } \text{deg}(x)<R$}{
            $E\gets E\cup(v,x)$\;
        }
        \ElseIf{$\text{deg}(v)<R \text{ \textbf{and} } \text{deg}(x)\ge R$}{
            $x' \gets \mathop{\arg\min}_{x' \in \text{Neighbor}(x), \deg(x') < R} C_{x', v}$;
            $E\gets E\cup(v,x')$\;
        }
        \ElseIf{$\text{deg}(v)\ge R \text{ \textbf{and} } \text{deg}(x)<R$}{
            $v' \gets \mathop{\arg\min}_{v' \in \text{Neighbor}(v), \deg(v') < R} C_{v', x}$; 
            $E\gets E\cup(v',x)$\;
        }
        \Else{
            $e\gets(v,x)$ \tcp*{feasibility fallback}
        }
        $\text{Update}\ H$\;
    }
  }
  \Return{$G$}
  %\end{small}
\end{algorithm}

\section{Experiment}\label{sec:Exp}
% In this section, we present a comprehensive experimental evaluation of our proposed framework. We begin by detailing the experimental setup, followed by an analysis of the merge efficiency and search performance using HNSW across varying partition counts ($\{2, 4, 6, 8, 10\}$). We benchmark our multi-index merge framework against the overlapping strategy utilized in DiskANN. Furthermore, we demonstrate scalability on the DEEP100M dataset with up to 50 partitions. To establish generality, we extend our evaluation to three additional graph-based indices: NSG, SSG, and $\tau$-MNG. Finally, we conduct ablation studies to quantify the impact of the individual optimization techniques employed.

\begin{table}[!t] 
\centering 
\caption{Dataset Statistics} 
\label{tab:dataset_details} 
\vspace*{-4mm}
%\begin{small}
\resizebox{0.8\columnwidth}{!}{%
\begin{tabular}{c|c c c c} 
\toprule
\textbf{Dataset}& \textbf{Dimension} & \textbf{Base Size} & \textbf{Query Size} & \textbf{LID} \cite{scikit-lid-2021} \\ 
\midrule
SIFT & 128 & 1,000,000 & 10,000 & 19.4\\  
DEEP & 96 & 100,000,000 & 1,000 & 19.5\\
GIST & 960 & 1,000,000 & 1,000 & 38.9 \\
IMAGE & 768 & 10,000,000 & 10,000 & 21.2\\
ANTON & 1024 & 10,000,000 & 1,000 & 28.9\\
MARCO & 1024 & 10,000,000 & 1,000 & 31.0\\
\bottomrule
\end{tabular} 
}
\vspace*{-6mm}
%\end{small}
\end{table}

\subsection{Experiment Setting.}

\stitle{Datasets.}We use six public datasets with diverse scales and types. Including datasets that are widely used in the benchmarks (SIFT, DEEP, GIST)\footnote{https://www.cse.cuhk.edu.hk/systems/hash/gqr/datasets.html} and datasets generated from the latest embedding models (IMAGE\footnote{https://huggingface.co/datasets/kinianlo/imagenet\_embeddings}, ANTON\footnote{https://huggingface.co/datasets/anton-l/wiki-embed-mxbai-embed-large-v1}, MARCO\footnote{https://huggingface.co/datasets/Cohere/msmarco-v2.1-embed-english-v3}). The statistics of those datasets are summarized in Table~\ref{tab:dataset_details}, where \emph{Base Size} denotes the dataset cardinality and \emph{Query Size} denotes the number of queries for testing. We also take several slices of distinct sizes from the original datasets to test the scalability of our framework. For example, we use DEEP1M to represent 1 million slices of the DEEP dataset, and use ANTON10M to indicate the 10 million scale in the experiment.

\input{figures/bi-index-merge-QPS-appendix}

\stitle{Methods.}  
We compare the following index merge methods:

\noindent
$\bullet$ \textit{Rebuild}: Rebuild the complete proximity graph directly by incrementally inserting nodes into an empty graph.

\noindent
$\bullet$ $\CM$: Cross-query Merge (CM) using naive search. It uses a smaller candidate set $ef \ll ef_{construction}$ and a one-way search (only searches the target index, not bidirectional), which are the two key optimizations used in HNSW-merger~\cite{hnswmerger-jin-sigmod-2026}.

\noindent
$\bullet$ $\FGIM$: Method for merging PG indexes and transforming it into a KNN graph merging and refining problem~\cite{FGIM_wu_2026_SIGMOD}.

\noindent
$\bullet$ $\DISK$: Overlapping-based method that merges indexes through duplicate data points~\cite{DiskANN-subramanya-NIPS-2019}.

\noindent
$\bullet$ $\RNSM$: Our Reverse Neighbor Sliding Merge (RNSM) method without Merge Order Selection (MOS).

\noindent
$\bullet$ $\RNSM^*$/$\RNSM^+$: Our RNSM with MOS for random/cluster partitions.

\noindent
$\bullet$ $\CM^*$/$\CM^+$: CM with pairwise order for random/cluster partitions.

\stitle{Performance Metrics.} We evaluate the merged graph index based on the following two primary metrics: %search performance and index construction (merge) efficiency.
\begin{itemize}[leftmargin=2\labelsep]
    \item \textbf{Search Performance:} We measure efficiency via Queries Per Second (QPS) and accuracy via Recall@K as defined in \S~\ref{sec:preliminary}. Consistent with standard practice, we report the average Recall@10 and QPS with the top-10 AKNN search over the query set.
    \item \textbf{Merge Efficiency:} We quantify efficiency using the speed-up ratio relative to the $\CM$, $\FGIM$, \textit{Rebuild} baselines, and the overlapping construction strategy employed by DiskANN.
\end{itemize}

\stitle{Parameter Setting}. We integrate our methods with various graph-based indexing algorithms, each with its own parameters. We leave the other parameters as default~\cite{GraphBasedANNS-wang-vldb-2021} and only consider the beam search effort $ef$. In particular, we examine several widely adopted graph-based AKNN search algorithms: HNSW~\cite{HNSW-malkov-IEEE-2020}, NSG~\cite{NSG-fu-VLDB-2019}, SSG~\cite{SSG-fu-IEEE-2022}, and $\tau$-MNG~\cite{Efficient-peng-SIGMOD-2023}. 
% To determine the parameters $ef_N$ in Algorithm~\ref{algo:bi-merge}, we set it as the default candidate size that achieves an average Recall@10>0.98 on separated indexes based on observation of Fig~\ref{fig:observation1}.
We employ grid search to determine the optimal parameter values of $k^+$, $k$, and $ef_L$ for Algorithm~\ref{algo:bi-merge}. For fair comparison, we set the same beam search parameter $ef$ and maximum PG out-degree for $\CM$ and $\RNSM$. The parameters of $\FGIM$ are set according to its default settings~\cite{FGIM_wu_2026_SIGMOD}.

\stitle{Implementation Environment.} All experiments were conducted on a server running Ubuntu 22.04 LTS, equipped with dual Intel Xeon Platinum 8370C CPUs (2.80GHz, 32 physical cores each, supporting AVX-512) and 512 GB of DRAM. The proposed framework and baselines were implemented in C++ based on the \texttt{hnswlib} library~\cite{hnswlib} and compiled using GCC 13.1.0 with \texttt{-O3} optimization.

\begin{figure}[!t]
    \centering
    \begin{minipage}[t]{0.48\columnwidth}
        \centering
        %\begin{figure*}[!htbp]
%\centering
%\begin{small}

\begin{tikzpicture}
    \begin{customlegend}[legend columns=2,
        legend entries={CM$^*$, CM$^+$, FGIM, Rebuild},
        legend style={at={(0.5,1.15)},anchor=north,draw=none,font=\scriptsize,column sep=0.2cm},legend image post style={scale=0.8}]
    \addlegendimage{area legend,fill=amaranth!50,draw=amaranth,pattern=north west lines,pattern color=amaranth}
    \addlegendimage{area legend,fill=amber!50,draw=amber,pattern=north east lines,pattern color=amber}
    \addlegendimage{area legend,fill=navy!50,draw=navy,pattern=grid,pattern color=navy}
    \addlegendimage{area legend,fill=orange!50,draw=orange,pattern=crosshatch,pattern color=orange}
    \end{customlegend}
\end{tikzpicture}
\\[-\lineskip]

\begin{tikzpicture}[scale=1]
\begin{axis}[
    height=3.0cm,
    width=1.1\linewidth,
    ybar=0.5pt, ymode=log, log basis y={2}, ymin=1,
    bar width=3.5pt,
    ylabel=Speed-up Ratio,ylabel style={yshift=-20pt}, 
    ytick={0,1,2,4},
    enlarge x limits=0.20,
    xtick={0,1,2,3},
    xticklabels={4 parts, 6 parts, 8 parts, 10 parts},
    xticklabel style={rotate=45, anchor=east, font=\scriptsize},
    legend style={font=\scriptsize,draw=none},
    label style={font=\scriptsize},
    tick label style={font=\scriptsize},
    ymajorgrids=true,
    grid style=dashed,
]
% NMr
\addplot[ybar,fill=amaranth!50,draw=amaranth,pattern=north west lines,pattern color=amaranth] coordinates {
    (0,1.63) (1,1.74) (2,2.31) (3,2.06)
};
% NMc
\addplot[ybar,fill=amber!50,draw=amber,pattern=north east lines,pattern color=amber] coordinates {
   (0,1.75) (1,1.90) (2,1.85) (3,1.62)
};
% FGIM
\addplot[ybar,fill=navy!50,draw=navy,pattern=grid,pattern color=navy] coordinates {
    (0,1.38) (1,1.03) (2,1.26) (3,1.09)
};
% BuildAsOne
\addplot[ybar,fill=orange!50,draw=orange,pattern=crosshatch,pattern color=orange] coordinates {
    (0,4.26) (1,2.59) (2,2.96) (3,1.91)
};
\end{axis}
\end{tikzpicture}
\vgap
\vspace*{-2mm}
\caption{Speed-up Ratio Comparison across Different Partitions with MARCO10M}
\vgap
\label{fig:speedup-comparison}
%\end{small}
%\end{figure*}
    \end{minipage}
    \hfill
    \begin{minipage}[t]{0.48\columnwidth}
        \centering
        %\begin{figure*}[!htbp]
%\centering
%\begin{small}

\begin{tikzpicture}
    \begin{customlegend}[legend columns=3,
        legend entries={CM$^*$, CM$^+$, Rebuild},
        legend style={at={(0.5,1.15)},anchor=north,draw=none,font=\scriptsize,column sep=0.2cm},legend image post style={scale=0.8}]
    \addlegendimage{area legend,fill=amaranth!50,draw=amaranth,pattern=north west lines,pattern color=amaranth}
    \addlegendimage{area legend,fill=amber!50,draw=amber,pattern=north east lines,pattern color=amber}
    \addlegendimage{area legend,fill=orange!50,draw=orange,pattern=crosshatch,pattern color=orange}
    \end{customlegend}
\end{tikzpicture}
\\[-\lineskip]

% 第一个柱状图
\begin{tikzpicture}[scale=1]
\begin{axis}[
    height=3.0cm,
    width=1.1\linewidth,
    ybar=0.5pt, ymode=log, log basis y={2}, ymin=1,
    bar width=3.5pt,
    ylabel=Speed-up Ratio,ylabel style={yshift=-20pt}, 
    xmin=-0.5, xmax=3.5,
    enlarge x limits=0.20,
    xtick={0,1,2,3},
    xticklabels={20 parts, 30 parts, 40 parts, 50 parts},
    xticklabel style={rotate=45, anchor=east, font=\scriptsize},
    legend style={font=\scriptsize,draw=none},
    label style={font=\scriptsize},
    tick label style={font=\scriptsize},
    ymajorgrids=true,
    grid style=dashed,
]
% CM$^*$ (random)
\addplot[ybar,fill=amaranth!50,draw=amaranth,pattern=north west lines,pattern color=amaranth] coordinates {
    (0,1.35) (1,1.40) (2,1.47) (3,1.58)
};
% CM$^+$ (cluster)
\addplot[ybar,fill=amber!50,draw=amber,pattern=north east lines,pattern color=amber] coordinates {
   (0,1.45) (1,1.65) (2,2.09) (3,2.31)
};
% Rebuild (20--40 parts placeholders; 50 parts measured)
\addplot[ybar,fill=orange!50,draw=orange,pattern=crosshatch,pattern color=orange] coordinates {
   (0,4.90) (1,2.69) (2,1.88) (3,1.35)
};
\end{axis}
\end{tikzpicture}

\vgap
\vspace*{-2mm}
\caption{Scalability Study of Merge Efficiency on DEEP100M}
\vgap
\label{fig:scale-speedup-comparison}
%\end{small}
%\end{figure*}
    \end{minipage}
\end{figure}

% Required packages: \usepackage{booktabs,multirow}
\begin{table}[!t]
  \centering
  \caption{ Graph Structure Analysis with Average Out-Degree (AD) and Out-Degree Percentiles $D_{5/50/99}$.}
  \label{tab:final-graph-structure}
  \begingroup
  \scriptsize
  \setlength{\tabcolsep}{1.6pt}
  \renewcommand{\arraystretch}{1.08}
  \begin{tabular*}{0.9\columnwidth}{@{\extracolsep{\fill}}c|c|*{6}{c}@{}}
    \toprule
    \textbf{Method} & \textbf{Metric}
      & \textbf{SIFT} & \textbf{DEEP} & \textbf{GIST}
      & \textbf{IMAGE} & \textbf{ANTON} & \textbf{MARCO} \\
    \midrule
    \multirow{3}{*}{\textbf{Rebuild}}
      % & \textbf{GQ} & 0.393 & 0.380 & 0.162 & 0.299 & 0.334 & 0.249 \\
      & \textbf{AD} & 27.30 & 31.43 & 15.83 & 23.83 & 28.36 & 25.29 \\
      & \textbf{$D_{5/50/99}$} & 11/26/50 & 12/30/59 & 2/11/58 & 7/21/58 & 10/26/59 & 9/22/59 \\
      % & \textbf{CC} & 1 & 1 & 1 & 1 & 1 & 1 \\
    \midrule
    \multirow{3}{*}{\textbf{FGIM}}
      % & \textbf{GQ} & 0.362 & 0.354 & 0.152 & 0.265 & 0.306 & 0.210 \\
      & \textbf{AD} & 20.48 & 22.60 & 12.05 & 15.24 & 19.08 & 14.55 \\
      & \textbf{$D_{5/50/99}$} & 9/19/44 & 10/22/46 & 2/9/56 & 6/14/35 & 8/18/46 & 7/14/32 \\
      % & \textbf{CC} & 1 & 1 & 1 & 1 & 1 & 1 \\
    \midrule
    \multirow{3}{*}{\textbf{CM}}
      % & \textbf{GQ} & 0.356 & 0.343 & 0.145 & 0.266 & 0.303 & 0.224 \\
      & \textbf{AD} & 24.33 & 28.29 & 14.70 & 21.41 & 25.94 & 23.79 \\
      & \textbf{$D_{5/50/99}$} & 10/23/49 & 11/27/59 & 2/11/58 & 7/18/57 & 9/24/58 & 8/21/58 \\
      % & \textbf{CC} & 1 & 1 & 1 & 1 & 1 & 1 \\
    \midrule
    \multirow{3}{*}{\textbf{RNSM}}
      % & \textbf{GQ} & 0.343 & 0.339 & 0.205 & 0.312 & 0.301 & 0.267 \\
      & \textbf{AD} & 23.56 & 28.48 & 19.46 & 23.94 & 25.59 & 25.14 \\
      & \textbf{$D_{5/50/99}$} & 9/22/49 & 13/26/59 & 3/17/60 & 10/22/57 & 10/23/58 & 10/23/58 \\
      % & \textbf{CC} & 1 & 1 & 1 & 1 & 1 & 1 \\
    \bottomrule
  \end{tabular*}
  \endgroup
\end{table}

% \subsection{Efficiency Evaluation}
\subsection{Experiment Results}
\stitle{Exp-1: Test of Indexes Merging Performance}. We begin by evaluating the merging efficiency and subsequent search performance of our $\RNSM$ framework. 
For the setup, we randomly partition each dataset in Table~\ref{tab:dataset_details} into two disjoint subsets of 0.5 million vectors and construct indexes using \texttt{hnswlib}. 
We compare $\RNSM$ against $\CM$, $\FGIM$, and a reconstruct-from-scratch baseline \textit{Rebuild}.

Table.~\ref{tab:index-size-memory-time} reports the merge memory overhead (MO) and merge time (MT) across six datasets. $\FGIM$ has the highest MO, requiring $2.72\times$--$4.07\times$ the memory of $\RNSM$ because its KNNG transformation and refinement materialize a dense KNNG while simultaneously retaining the sub-indexes and the merged index, thereby duplicating vector data. We reduce this redundancy in both $\CM$ and $\RNSM$ by separating merged graph topology from the final searchable PG: input sub-indexes provide the vectors, the output stores only the new topology during merging, and vectors are attached after sub-indexes are released. $\CM$ has the lowest MO, while $\RNSM$ increases it by only $2.5\%$--$11.1\%$ for RKNN metadata. This is reasonable given that these metadata require only $O(nk)$ space ($k\le 20$), substantially smaller than the $O(nD)$ space required by the dataset vectors. Merge on DEEP100M has a similar result: $\RNSM$ uses less MO than \textit{Rebuild} and only $14.0\%$ more than $\CM$.
For MT, $\RNSM$ delivers speed-up ratios up to 1.74$\times$ over $\CM$, 3.86$\times$ over $\FGIM$, and 9.92$\times$ over \textit{Rebuild}. 
Notably, $\CM$ even outperforms the $\FGIM$. According to Theorem 4.1 in FGIM~\cite{FGIM_wu_2026_SIGMOD}, when the number of partitions is 2, the cross-index search in $\FGIM$ reduces to a naive search of $\CM$. However, $\FGIM$ incurs additional overhead from refining KNN graphs and transforming the KNN graph into PG indexes.

% Required packages: \usepackage{booktabs,multirow}
\begin{table}[!t]
  \centering
  \caption{Index statistics across datasets. MO denotes merge memory overhead (MB), and MT denotes merge time (s).}
  \label{tab:index-size-memory-time}
  \begingroup
  \scriptsize
  \setlength{\tabcolsep}{1.6pt}
  \renewcommand{\arraystretch}{1.08}
  \begin{tabular*}{0.9\columnwidth}{@{\extracolsep{\fill}}c|c|*{6}{c}@{}}
    \toprule
    \textbf{Method} & \textbf{Metric}
      & \textbf{SIFT} & \textbf{DEEP} & \textbf{GIST}
      & \textbf{IMAGE} & \textbf{ANTON} & \textbf{MARCO} \\
    \midrule
    \multirow{3}{*}{\textbf{Rebuild}}
      % & \textbf{IS} & 699 & 615 & 3911 & 3178 & 4155 & 4155 \\
      & \textbf{MO} & 1530 & 1308 & 8386 & 6814 & 8911 & 8910 \\
      & \textbf{MT} & 65.75 & 61.91 & 304.56 & 167.68 & 260.11 & 255.65 \\
    \midrule
    \multirow{3}{*}{\textbf{FGIM}}
      % & \textbf{IS} & 805 & 683 & 3979 & 3247 & 4223 & 4223 \\
      & \textbf{MO} & 5361 & 5129 & 13037 & 11471 & 14467 & 14131 \\
      & \textbf{MT} & 25.43 & 26.55 & 85.56 & 65.29 & 85.92 & 82.65 \\
    \midrule
    \multirow{3}{*}{\textbf{CM}}
      % & \textbf{IS} & \textbf{695} & \textbf{611} & \textbf{3907} & \textbf{3174} & \textbf{4151} & \textbf{4151} \\
      % & \textbf{MO} & 2356 & 2046 & 12662 & 10302 & 13448 & 13449 \\
      & \textbf{MO} & \textbf{1188} & \textbf{1139} & \textbf{4678} & \textbf{3892} & \textbf{4939} & \textbf{4939} \\
      & \textbf{MT} & 10.68 & 9.65 & 53.52 & 28.37 & 43.54 & 46.21 \\
    \midrule
    \multirow{3}{*}{\textbf{RNSM}}
      % & \textbf{IS} & \textbf{695} & \textbf{611} & \textbf{3907} & \textbf{3174} & \textbf{4151} & \textbf{4151} \\
      % & \textbf{MO} & 2484 & 2172 & 12786 & 10428 & 13574 & 13575 \\
      & \textbf{MO} & 1317 & 1265 & 4797 & 4016 & 5066 & 5065 \\
      & \textbf{MT} & \textbf{7.52} & \textbf{6.98} & \textbf{42.31} & \textbf{16.90} & \textbf{29.96} & \textbf{26.51} \\
    \bottomrule
  \end{tabular*}
  \endgroup
\end{table}

\input{figures/multi-index-merge-QPS-appendix}

Fig~\ref{fig:bimerge-recall-qps-ndc-appendix} further demonstrates the search performance (Recall vs. QPS/NDC).
$\RNSM$ achieves a trade-off comparable to existing methods and the \textit{Rebuild} baseline on all datasets.
These results confirm that our framework substantially improves merge efficiency without compromising the search performance of the resulting index. It is worth noting that the merged indexes can slightly outperform Rebuild in some configurations, a phenomenon also observed in FGIM~\cite{FGIM_wu_2026_SIGMOD}. This may arise because merging benefits from the accumulated construction effort of pre-built subgraphs and cross-partition searches. Moreover, graphs on sparse subsets yield longer edges, enabling beam search to cover more ground~\cite{CSPG-yang-NIPS-2024}, while cross-partition edges preserve connectivity and recall.

\stitle{Exp-2: Merged Graph Structure Analysis.}
We further analyse the graph index structure for merged indexes and rebuilt indexes. We investigate the graph degree distribution with the 5th, 50th, and 99th out-degree percentiles ($D_{5/50/99}$) and the average out-degree (AD). Table~\ref{tab:final-graph-structure} shows that $\RNSM$ does not obtain its performance by constructing a denser graph. Although $\RNSM$ introduces cross-partition candidates, neighbor pruning keeps its AD comparable to $\CM$ and \textit{Rebuild}. In contrast, $\FGIM$ yields the lowest AD and the smallest $D_{5/50/99}$ values. This sparsity reflects its KNNG transformation, where a small $k$ is used for efficiency and leads to fewer edge candidates for edge pruning compared with \textit{Rebuild}. Although $\CM$ and $\RNSM$ also use small edge candidate sets, they inherit edges that already survived pruning from a PG index, leading to similar out-degree distributions to \textit{Rebuild}.

\stitle{Exp-3: Test of Multiple Indexes Merging.} 
We evaluate the search performance and merge efficiency when merging multiple indexes.
We vary the number of partitions ($\{4, 6, 8, 10\}$) and assess robustness across different data distributions using both random ($\RNSM^*$) and cluster ($\RNSM^+$) partitioning.
Experiments are conducted on DEEP10M, MARCO10M, IMAGE10M, and ANTON10M.
Note that we compare $\RNSM^*$ against both $\CM$ and $\FGIM$, but compare $\RNSM^+$ only against $\CM$, as $\FGIM$ is incompatible with cluster partitions.
Fig~\ref{fig:speedup-comparison} reports the merge efficiency on MARCO10M (full results in Appendix).
Our methods achieve substantial speedups over baselines.
For example, on MARCO10M, $\RNSM^*$ delivers speed-up ratios of up to $2.31\times$ over $\CM$, $1.38\times$ over $\FGIM$, and $4.26\times$ over \textit{Rebuild}, while $\RNSM^+$ achieves a $1.90\times$ speed-up over $\CM$.
Regarding search quality, as shown in Fig~\ref{fig:multi-merge-recall-qps-appendix}, our framework maintains performance comparable to baselines across varying partition numbers and distributions.

We further assess the scalability of our methods regarding dataset size and partition count on the large-scale DEEP100M dataset. We partition the data into $\{20, 30, 40, 50\}$ subsets using both random and cluster partitioning and merge them using $\RNSM^*$ and $\RNSM^+$, respectively. We exclude $\FGIM$ here as its Minimum Candidate Set Strategy requires $ef$ to grow with partition number ($ef>m-1$). Experimentally, its speedup over \textit{Rebuild} drops from 1.97$\times$ with 4 partitions to 0.99$\times$ with 10 on IMAGE10M (see Appendix Exp-c).
Results in Fig~\ref{fig:scale-speedup-comparison} show that $\RNSM$ consistently speeds up merging compared with the $\CM$ and Rebuild baseline with comparable search performance (see Appendix Exp-d). Showing that $\RNSM$ is suited for massive multi-index merging scenarios.

\stitle{Exp-4: Comparison with Overlap-based Method.} 
We compare our framework against the overlapping partition strategy used in DiskANN on 10M-scale datasets.
Unlike DiskANN, which partitions data into overlapping shards (exactly one duplication for each vector) and builds sub-indexes from scratch, $\RNSM$ directly merges existing sub-indexes.
For a fair comparison, we align the maximum shard size of $\DISK$ with the total vector count of the indexes merged by $\RNSM$ (\eg if $\DISK$ uses a 5M shard size, $\RNSM$ merges four 2.5M sub-indexes). We also use the same index of $\DISK$ to focus on the merge algorithm itself.
We include both partitioning and construction time for both methods and report the QPS with 0.95 recall in Table~\ref{tab:diskann-comparison}.
Our method achieves comparable search efficiency while significantly reducing index construction time.
Specifically, $\RNSM$ is consistently faster across all datasets even without any pre-built indexes, delivering speedups of up to $2.01\times$ compared to the scratch-build approach of DiskANN. Furthermore, the search performance is comparable to the $\DISK$ method, with higher QPS$_{0.95}$ in 7 of 8 configurations.

%\begin{table*}[!htbp]
%\centering
%\caption{Comparison of DiskANN and RNSM+. (Time: seconds, QPS$_x$: Query Per Second with recall@10=$x$)}
%\begin{tabular}{l|ccc|ccc|ccc|ccc}
%\toprule
%\multirow{3}{*}{Dataset} & \multicolumn{6}{c|}{Maximum size: 5M} & \multicolumn{6}{c}%{Maximum size: 2.5M} \\
%\cline{2-13}
%& \multicolumn{3}{c|}{Overlap} & \multicolumn{3}{c|}{$\RNSM^*$} & \multicolumn{3}{c|}{Overlap} & \multicolumn{3}{c}{$\RNSM^*$} \\
%\cline{2-13}
%& Time & QPS$_{0.90}$ & QPS$_{0.95}$ & Time & QPS$_{0.90}$ & QPS$_{0.95}$ & Time & QPS$_{0.90}$ & QPS$_{0.95}$ & Time & QPS$_{0.90}$ & QPS$_{0.95}$ \\
%\midrule
%DEEP10M     & 1247 & 4538 & 3042 & \textbf{630} & \textbf{4554} & \textbf{3175} & 1206 & \textbf{5295} & 3225 & \textbf{767} & 4901 & \textbf{3493} \\
%ANTON10M    & 4116 & \textbf{1672} & 783 & \textbf{2928} & 1637 & \textbf{819} & 4084 & \textbf{1803} & 780 & \textbf{2649} & 1791 & \textbf{797} \\
%MARCO10M  & 4410 & 2021 & 1436 & \textbf{3138} & \textbf{2573} & \textbf{1751} & 4446 & 2308 & 1528 & \textbf{3295} & \textbf{2429} & \textbf{1662} \\
%IMAGE10M & 2575 & \textbf{3509} & 2314 & \textbf{1518} & 3423 & \textbf{2395} & 3185 & 3603 & \textbf{2550} & \textbf{1582} & \textbf{3807} & 2456 \\
%\bottomrule
%\end{tabular}
%\label{tab:diskann-comparison}
%\end{table*}

\begin{table}[!t]
\centering
\caption{Comparison of $\DISK$ and $\RNSM^*$. (Time: seconds, QPS$_x$: Query Per Second with recall@10=$x$)}
\vspace*{-4mm}
\resizebox{\columnwidth}{!}{%
\begin{tabular}{c|c|ccc|ccc}
\toprule
\multirow{3}{*}{\textbf{\begin{tabular}[c]{@{}c@{}}Max\\ Size\end{tabular}}} & \multirow{3}{*}{\textbf{Dataset}} & \multicolumn{3}{c|}{\multirow{2}{*}{\textbf{Overlap DiskANN}}}        & \multicolumn{3}{c}{\multirow{2}{*}{\textbf{$\RNSM^*$}}}       \\
                                                                                 &                                   & \multicolumn{3}{c|}{}                                         & \multicolumn{3}{c}{}                                          \\ \cline{3-8} 
                                                                                 &                                   & \textbf{Time} & \textbf{QPS$_{0.90}$} & \textbf{QPS$_{0.95}$} & \textbf{Time} & \textbf{QPS$_{0.90}$} & \textbf{QPS$_{0.95}$} \\ \midrule
\multirow{4}{*}{\textbf{5M}}                                                     & DEEP10M                           & 1247          & 4538                  & 3042                  & \textbf{630}  & \textbf{4554}         & \textbf{3175}         \\
                                                                                 & ANTON10M                          & 4116          & \textbf{1672}         & 783                   & \textbf{2928} & 1637                  & \textbf{819}          \\
                                                                                 & MARCO10M                          & 4410          & 2021                  & 1436                  & \textbf{3138} & \textbf{2573}         & \textbf{1751}         \\
                                                                                 & IMAGE10M                          & 2575          & \textbf{3509}         & 2314                  & \textbf{1518} & 3423                  & \textbf{2395}         \\ \midrule
\multirow{4}{*}{\textbf{2.5M}}                                                   & DEEP10M                           & 1206          & \textbf{5295}         & 3225                  & \textbf{767}  & 4901                  & \textbf{3493}         \\
                                                                                 & ANTON10M                          & 4084          & \textbf{1803}         & 780                   & \textbf{2649} & 1791                  & \textbf{797}          \\
                                                                                 & MARCO10M                          & 4446          & 2308                  & 1528                  & \textbf{3295} & \textbf{2429}         & \textbf{1662}         \\
                                                                                 & IMAGE10M                          & 3185          & 3603                  & \textbf{2550}         & \textbf{1582} & \textbf{3807}         & 2456                  \\ 
\bottomrule
\end{tabular}
}
\vspace*{-5mm}
\label{tab:diskann-comparison}
\end{table}

\begin{figure}[!t]
    \centering
    \begin{minipage}[b]{0.48\columnwidth}
        \centering
        % \begin{figure}[t!]
% \centering
% \begin{small}

\begin{tikzpicture}
    \begin{customlegend}[legend columns=3,
        legend entries={IMAGE, ANTON, MARCO, DEEP, GIST, SIFT},
        legend style={at={(0.5,1.15)},anchor=north,draw=none,font=\scriptsize,column sep=0.05cm},legend image post style={scale=0.8}]  % 减小column sep
    \addlegendimage{area legend,fill=amaranth!50,draw=amaranth,pattern=north west lines,pattern color=amaranth}
    \addlegendimage{area legend,fill=amber!50,draw=amber,pattern=north east lines,pattern color=amber}
    \addlegendimage{area legend,fill=blue!50,draw=navy,pattern=grid,pattern color=navy}
    \addlegendimage{area legend,fill=orange!50,draw=orange,pattern=crosshatch,pattern color=orange}
    \addlegendimage{area legend,fill=green!50,draw=forestgreen,pattern=horizontal lines,pattern color=forestgreen}
    \addlegendimage{area legend,fill=purple!50,draw=violet,pattern=vertical lines,pattern color=violet}
    \end{customlegend}
\end{tikzpicture}
\\[-\lineskip]

\begin{tikzpicture}[scale=1]
\begin{axis}[
    height=3.0cm,
    width=1.1\linewidth, 
    ybar=0.5pt, ymode=log, log basis y={2}, ymin=1,
    ytick={1,2},
    bar width=3pt,
    ylabel=Speed-up Ratio,ylabel style={yshift=-20pt}, 
    enlarge x limits=0.20,
    xmin=-0.15, xmax=2.15,
    xtick={0,1,2},
    xticklabels={NSG, SSG, $\tau$-MNG},
    xticklabel style={font=\scriptsize},
    legend style={font=\scriptsize,draw=none},
    label style={font=\scriptsize},
    tick label style={font=\scriptsize},
    ymajorgrids=true,
    grid style=dashed,
]
% imagenet
\addplot[ybar,fill=amaranth!50,draw=amaranth,pattern=north west lines,pattern color=amaranth] coordinates {
    (0,2.49) (1,1.23) (2,1.97) 
};
% anton
\addplot[ybar,fill=amber!50,draw=amber,pattern=north east lines,pattern color=amber] coordinates {
   (0,1.57) (1,1.15) (2,1.22)  
};
% MARCO
\addplot[ybar,fill=blue!50,draw=navy,pattern=grid,pattern color=navy] coordinates {
   (0,2.07) (1,1.25) (2,1.47)  
};
% deep
\addplot[ybar,fill=orange!50,draw=orange,pattern=crosshatch,pattern color=orange] coordinates {
   (0,1.40) (1,1.33) (2,1.34) 
};
% gist
\addplot[ybar,fill=green!50,draw=forestgreen,pattern=horizontal lines,pattern color=forestgreen] coordinates {
   (0,1.26) (1,1.20) (2,1.16) 
};
% sift
\addplot[ybar,fill=purple!50,draw=violet,pattern=vertical lines,pattern color=violet] coordinates {
   (0,1.54) (1,1.27) (2,1.43) 
};
\end{axis}
\end{tikzpicture}

\vgap
\vspace*{-3mm}
\caption{Speed-up Ratio Comparison of Different Methods}
\vgap
\label{fig:applicability-speedup-comparison}
% \end{small}
% \end{figure} 
    \end{minipage}
    \hfill
    \begin{minipage}[b]{0.48\columnwidth}
        \centering
        % \begin{figure}[t!]
% \centering
% \begin{small}
\begin{tikzpicture}
    \begin{customlegend}[legend columns=2,
        legend entries={NSM, $\RNSM^-$, $\RNSM$, SimJoin},
        legend style={at={(0.5,1.15)},anchor=north,draw=none,font=\scriptsize,column sep=0.05cm},legend image post style={scale=0.8}]  % 减小column sep
    \addlegendimage{area legend,fill=amber!50,draw=amber,pattern=north east lines,pattern color=amber}
    \addlegendimage{area legend,fill=navy!50,draw=navy,pattern=grid,pattern color=navy}
    \addlegendimage{area legend,fill=orange!50,draw=orange,pattern=crosshatch,pattern color=orange}
    \addlegendimage{dashed, thick, amaranth!70}
    \end{customlegend}
\end{tikzpicture}
\\[-\lineskip]

\begin{tikzpicture}[scale=1]
\begin{axis}[
    height=3.0cm,
    width=1.1\linewidth,
    ybar=0.5pt,                
    bar width=3.5pt,       
    ylabel=Percentage (\%),
    ylabel style={yshift=-20pt}, 
    ymin=0, ymax=100,
    xtick={1, 5, 9, 13},
    xticklabels={$k=3$, $k=5$, $k=10$, $k=20$},
    xticklabel style={font=\scriptsize},
    enlarge x limits=0.20,
    legend style={font=\scriptsize,draw=none}, 
    label style={font=\scriptsize},
    tick label style={font=\scriptsize},
    ymajorgrids=true,
    grid style=dashed,
]

% --- 图例项 1: NSM ---
\addplot[fill=amber!50,draw=amber,pattern=north east lines,pattern color=amber] 
coordinates {
    (1, 58.0)   % k=3 (原 L 部分数据)
    (5, 70.5)   % k=5
    (9, 80.7)   % k=10
    (13, 87.9)  % k=20
};

% --- 图例项 2: RNSM* ---
\addplot[fill=navy!50,draw=navy,pattern=grid,pattern color=navy] 
coordinates {
    (1, 54.3)   % k=3 (原 L 部分数据)
    (5, 66.2)   % k=5
    (9, 78.1)   % k=10
    (13, 86.4)  % k=20
};

% --- 图例项 3: RNSM ---
\addplot[fill=orange!50,draw=orange,pattern=crosshatch,pattern color=orange] 
coordinates {
    (1, 66.2)   % k=3 (原 L 部分数据)
    (5, 75.8)   % k=5
    (9, 83.9)   % k=10
    (13, 90.1)  % k=20
};

% --- SimJoin baseline (horizontal dashed line) ---
\draw[dashed, thick, amaranth!70]
    ({rel axis cs:0,0} |- {axis cs:0,48.67}) --
    ({rel axis cs:1,0} |- {axis cs:0,48.67});

\end{axis}
\end{tikzpicture}

\vgap
\vspace*{-3mm}
\caption{Comparison of Local Sliding Ratio with SIFT1M}
\vgap
\label{fig:ablation-rknn}
% \end{small}
% \end{figure}
    \end{minipage}
    \vspace*{-2mm}
\end{figure}

% \stitle{Exp-5: Scalability Study for Index Merging.} 
% We assess the scalability of our methods regarding dataset size and partition count on the large-scale DEEP100M dataset.
% We partition the data into $\{20, 30, 40, 50\}$ subsets using both random and cluster partitioning and merge them using $\RNSM^*$ and $\RNSM^+$, respectively.
% We exclude $\FGIM$ here as its theoretical guarantees do not hold when the partition number approaches the maximum degree $R$ (usually $\le$30) of the sub-indexes. Empirically, its speedup over \textit{Rebuild} drops from 1.97$\times$ with 4 partitions to 0.99$\times$ with 10 on IMAGE10M (see Appendix Exp-c).
% Results are shown in Fig~\ref{fig:scale-speedup-comparison} and Fig~\ref{fig:scalability-recall-qps-appendix}.
% $\RNSM$ consistently outperforms the $\CM$ baseline without compromising search performance.
% Notably, the efficiency advantage of our framework amplifies as the system scales: the speedup ratio grows as the number of partitions increases, demonstrating that $\RNSM$ is well-suited for massive multi-index merging scenarios.

\begin{figure}[!t]
\centering
\begin{small}

% Global legend
\begin{tikzpicture}
\begin{customlegend}[legend columns=3,
legend entries={RNSM, SimJoin, Speedup},
legend style={at={(0.5,1.15)},anchor=north,draw=none,font=\scriptsize,column sep=0.2cm},legend image post style={scale=0.8}]
\addlegendimage{area legend,fill=amaranth!50,draw=amaranth,pattern=north west lines,pattern color=amaranth}
\addlegendimage{area legend,fill=amber!50,draw=amber,pattern=north east lines,pattern color=amber}
\addlegendimage{line width=0.2mm,color=navy,mark=o,mark size=0.5mm}
\end{customlegend}
\end{tikzpicture}
\\[-\lineskip]

% ======================
% MARCO10M Dataset
% ======================
\subfloat[MARCO10M]{\vspace{-2mm}
\begin{tikzpicture}[scale=1]
% Left y-axis for Time
\begin{axis}[
    height=3.0cm,
    width=\columnwidth/2.10,
    ybar=0.5pt,
    bar width=4.0pt,
    xmode=log, log basis x={2},
    xlabel={Number of Threads},
    ylabel={Time ($10^3$ s)}, 
    axis y line*=left,
    ylabel near ticks, 
    yticklabel={\pgfmathparse{\tick/1000}\pgfmathprintnumber{\pgfmathresult}}, 
    enlarge x limits=0.20,
    xtick={1,2,4,8,16,32},
    xticklabels={1,2,4,8,16,32},
    label style={font=\scriptsize},
    tick label style={font=\scriptsize},
    ymajorgrids=true,
    grid style=dashed,
    legend style={font=\scriptsize,draw=none}
]
% RNSM Time (bars)
\addplot[ybar,fill=amaranth!50,draw=amaranth,pattern=north west lines,pattern color=amaranth] coordinates {
    (1,3191.56) (2,1335.45) (4,658.682) (8,369.206) (16,286.966) (32,235.719)
};
% SimJoin Time (bars)
\addplot[ybar,fill=amber!50,draw=amber,pattern=north east lines,pattern color=amber] coordinates {
    (1,3677.32) (2,2055.63) (4,1021.5) (8,609.302) (16,461.688) (32,459.324)
};
\end{axis}

% Right y-axis for Speedup ratio
\begin{axis}[
    height=3.0cm,
    width=\columnwidth/2.10,
    xmode=log, log basis x={2},
    ylabel={Speedup Ratio}, 
    axis y line*=right,
    axis x line=none, 
    ylabel near ticks, 
    enlarge x limits=0.20,
    xtick={1,2,4,8,16,32}, 
    label style={font=\scriptsize},
    tick label style={font=\scriptsize},
    legend style={font=\scriptsize,draw=none},
]
% Speedup ratio (SimJoin_time / RNSM_time)
\addplot[line width=0.2mm,color=navy,mark=o,mark size=0.5mm] coordinates {
    (1,1.15) (2,1.54) (4,1.55) (8,1.65) (16,1.61) (32,1.95)
};
\end{axis}
\end{tikzpicture}}\hfill 
% ======================
% IMAGENET10M Dataset
% ======================
\subfloat[IMAGE10M]{\vspace{-2mm}
\begin{tikzpicture}[scale=1]
% Left y-axis for Time
\begin{axis}[
    height=3.0cm,
    width=\columnwidth/2.10,
    ybar=0.5pt,
    bar width=4.0pt,
    xmode=log, log basis x={2},
    xlabel={Number of Threads}, 
    ylabel={Time ($10^3$ s)}, 
    axis y line*=left,
    ylabel near ticks, 
    yticklabel={\pgfmathparse{\tick/1000}\pgfmathprintnumber{\pgfmathresult}}, 
    enlarge x limits=0.20,
    xtick={1,2,4,8,16,32},
    xticklabels={1,2,4,8,16,32},
    label style={font=\scriptsize},
    tick label style={font=\scriptsize},
    ymajorgrids=true,
    grid style=dashed,
    legend style={font=\scriptsize,draw=none}
]
% RNSM Time (bars)
\addplot[ybar,fill=amaranth!50,draw=amaranth,pattern=north west lines,pattern color=amaranth] coordinates {
    (1,1799.97) (2,812.06) (4,411.47) (8,246.33) (16,185.29) (32,157.72)
};
% SimJoin Time (bars)
\addplot[ybar,fill=amber!50,draw=amber,pattern=north east lines,pattern color=amber] coordinates {
    (1,2887.02) (2,1593.52) (4,768.42) (8,452.26) (16,372.35) (32,328.41)
};
\end{axis}

% Right y-axis for Speedup ratio
\begin{axis}[
    height=3.0cm,
    width=\columnwidth/2.10,
    xmode=log, log basis x={2},
    ylabel={Speedup Ratio}, 
    axis y line*=right,
    axis x line=none, 
    ylabel near ticks, 
    enlarge x limits=0.20,
    xtick={1,2,4,8,16,32}, 
    label style={font=\scriptsize},
    tick label style={font=\scriptsize},
    legend style={font=\scriptsize,draw=none},
]
% Speedup ratio (SimJoin_time / RNSM_time)
\addplot[line width=0.2mm,color=navy,mark=o,mark size=0.5mm] coordinates {
    (1,1.60) (2,1.96) (4,1.87) (8,1.84) (16,2.01) (32,2.08)
};
\end{axis}
\end{tikzpicture}}

\vgap
\vspace*{-2mm}
\caption{Parallel Scalability Comparison}
\vgap
\label{fig:parallel-scalability}
\end{small}
\end{figure}

\stitle{Exp-5: Test of Incremental Merging.}
We evaluate incremental merging on 10M-scale datasets by splitting each dataset into ten 1M-vector partitions. Starting from one partition, we merge one additional 1M-vector sub-index into the current merged index at each step until all ten partitions form the final 10M-vector index.
As shown in Fig~\ref{fig:incremental-10m-recall-qps}, $\RNSM$ achieves search performance comparable to or better than the baselines. In terms of merge efficiency, $\RNSM$ delivers speedups of up to $1.57\times$ over $\CM$ and $6.44\times$ over $\FGIM$.
\begin{figure}[!t]
\centering
\begin{small}

% Global legend
\begin{tikzpicture}
    \begin{customlegend}[legend columns=3,
        legend entries={CM,RNSM,FGIM},
        legend style={at={(0.5,1.15)},anchor=north,draw=none,font=\scriptsize,column sep=0.2cm}]
    \addlegendimage{line width=0.2mm,color=violate,mark=o,mark size=0.5mm}
    \addlegendimage{line width=0.2mm,color=amaranth,mark=triangle,mark size=0.5mm}
    \addlegendimage{line width=0.2mm,color=navy,mark=square,mark size=0.5mm}
    % \addlegendimage{line width=0.2mm,color=amber,mark=star,mark size=0.5mm}
    \end{customlegend}
\end{tikzpicture}
\\[-\lineskip]
\vspace*{-1mm}

    \subfloat[IMAGE10M]{\vspace{-2mm}
\begin{tikzpicture}[scale=1]
\begin{axis}[
height=3.0cm,
width=\columnwidth/2.00,
xlabel=recall@10, xlabel style={yshift=+8pt}, xmin=0.55,
ylabel=Query Per Second, ylabel style={yshift=-15pt},
label style={font=\scriptsize},
tick label style={font=\scriptsize},
ymajorgrids=true,
xmajorgrids=true,
grid style=dashed,
]

    % CM
\addplot[line width=0.2mm,color=violate,mark=o,mark size=0.5mm]
plot coordinates {
(0.8023,26160.43244)
(0.89972,18313.20869)
(0.93664,14506.31962)
(0.9547,12251.20876)
(0.96402,10659.66511)
(0.97091,9404.427544)
(0.97511,8445.342682)
(0.97844,7573.958376)
(0.98083,6821.084892)
(0.98268,6498.691635)
(0.98385,6220.967348)
(0.98449,5574.357234)
(0.98527,5114.077693)
(0.98615,4777.225851)
(0.98689,4506.849)
};
    % RGTM
\addplot[line width=0.2mm,color=amaranth,mark=triangle,mark size=0.5mm]
plot coordinates {
(0.78396,35721.62574)
(0.88693,24099.06007)
(0.92475,19613.01113)
(0.94473,15683.89575)
(0.95576,14253.59499)
(0.96441,11818.85106)
(0.96954,11158.85949)
(0.97294,10142.49074)
(0.97612,9573.645812)
(0.9784,8698.630978)
(0.98002,7900.822243)
(0.9812,7717.058198)
(0.9826,7411.2959)
(0.98337,6931.034169)
(0.98437,6629.387816)
};
    % FGIM
\addplot[line width=0.2mm,color=navy,mark=square,mark size=0.5mm]
plot coordinates {
(0.60245,23191.55436)
(0.75474,15540.82702)
(0.81562,12798.96761)
(0.85041,11177.30626)
(0.87374,8006.230929)
(0.89003,6868.730877)
(0.90288,6531.118906)
(0.91179,6041.515425)
(0.92066,5589.499446)
(0.92807,5245.195417)
(0.93252,4883.604247)
(0.93599,4540.277974)
(0.94,4285.140683)
(0.94407,4069.943246)
(0.9458,3872.431591)
};
%     % BuildAsOne
% \addplot[line width=0.2mm,color=amber,mark=star,mark size=0.5mm]
% plot coordinates {
% (0.86468,35045.2)
% (0.93177,24508.3)
% (0.95548,19229.9)
% (0.96603,15902.7)
% (0.97257,13654.1)
% (0.97604,11706.5)
% (0.97827,10669.7)
% (0.98055,10139.3)
% (0.98205,9305.28)
% (0.98338,8492.19)
% (0.9837,7963.92)
% (0.98436,7344.1)
% (0.98517,6995.2)
% (0.98597,6239.43)
% (0.98648,6236.67)
% };
\end{axis}
\end{tikzpicture}}\hspace{2mm}
    \subfloat[MARCO10M]{\vspace{-2mm}
\begin{tikzpicture}[scale=1]
\begin{axis}[
height=3.0cm,
width=\columnwidth/2.00,
xlabel=recall@10, xlabel style={yshift=+8pt}, xmin=0.55,
ylabel=Query Per Second, ylabel style={yshift=-15pt},
label style={font=\scriptsize},
tick label style={font=\scriptsize},
ymajorgrids=true,
xmajorgrids=true,
grid style=dashed,
]

    % CM
\addplot[line width=0.2mm,color=violate,mark=o,mark size=0.5mm]
plot coordinates {
(0.7657,23699.91453)
(0.8623,17277.01972)
(0.9038,14187.24427)
(0.9258,11987.69032)
(0.9434,10555.47033)
(0.9507,9459.073964)
(0.9577,8141.461475)
(0.963,7644.867643)
(0.9679,7075.527735)
(0.9701,6483.413082)
(0.974,5967.191629)
(0.9784,5530.043866)
(0.9811,5276.717531)
(0.9824,4817.644148)
(0.984,4664.20323)
};
    % RGTM
\addplot[line width=0.2mm,color=amaranth,mark=triangle,mark size=0.5mm]
plot coordinates {
(0.7251,26083.54162)
(0.8368,19223.90888)
(0.8852,15834.45194)
(0.9081,13788.46691)
(0.9233,11908.43637)
(0.9348,10636.01158)
(0.9457,9515.812978)
(0.9504,8317.162551)
(0.9549,7987.759876)
(0.9595,7492.517129)
(0.9625,6803.889969)
(0.9665,6468.229367)
(0.97,5744.407516)
(0.9713,5583.605761)
(0.9728,5320.058836)
};
    % FGIM
\addplot[line width=0.2mm,color=navy,mark=square,mark size=0.5mm]
plot coordinates {
(0.7726,14781.67736)
(0.8924,11827.9719)
(0.9259,10486.34086)
(0.9419,8693.350931)
(0.9515,7406.33689)
(0.9591,6387.856654)
(0.9639,5636.894376)
(0.9686,5069.214677)
(0.9734,4608.935849)
(0.9763,4227.498521)
(0.9776,3878.418574)
(0.9796,3599.380408)
(0.9821,3351.024767)
(0.9836,3147.128325)
(0.9858,2968.696981)
};
%     % BuildAsOne
% \addplot[line width=0.2mm,color=amber,mark=star,mark size=0.5mm]
% plot coordinates {
% (0.7966,25435.4)
% (0.8818,19113.2)
% (0.9208,15883.6)
% (0.941,13245.4)
% (0.953,11670.1)
% (0.9588,10180.5)
% (0.9644,9210.5)
% (0.9689,8296.67)
% (0.9725,7643.78)
% (0.9747,7084.76)
% (0.9761,6532.1)
% (0.978,6110.35)
% (0.9784,5606.05)
% (0.9793,5389.37)
% (0.9819,5133.22)
% };
\end{axis}
\end{tikzpicture}}\hspace{2mm}

\vgap
\vspace*{-2mm}
\caption{Recall@10-QPS Tradeoff for Incremental Merge}
\vgap
\label{fig:incremental-10m-recall-qps}
\end{small}
\end{figure}

\stitle{Exp-6: Applicability Study.} 
We evaluate $\RNSM$ on three additional PG indexes: NSG, SSG, and $\tau$-MNG.
We use recommended settings for NSG and SSG~\cite{GraphBasedANNS-wang-vldb-2021} and implement $\tau$-MNG on the NSG codebase~\cite{Efficient-peng-SIGMOD-2023}.
Fig~\ref{fig:applicability-speedup-comparison} reports $\RNSM$ achieves speedups of up to $2.49\times$, $1.33\times$, and $1.97\times$ for NSG, SSG, and $\tau$-MNG, respectively. $\RNSM$ also maintains search performance comparable to $\CM$ across these indexes (see Appendix Exp-b), demonstrating its broad applicability.

\stitle{Exp-7: Effects of Reverse Neighbor Pivots Selection.} 
We analyze the impact of RKNN-based pivot selection on the fraction of Local Sliding operations. On SIFT1M with $k\in\{3,5,10,20\}$, we compare four variants: (1) NSM: randomly select pivots, slide to KNNs; (2) SimJoin: MST roots as pivots, slide to internal nodes and leaves; (3) $\RNSM^-$: randomly select pivots, slide to RKNNs; and (4) $\RNSM$: pivots and corresponding followers selected by our greedy algorithm. As shown in Fig~\ref{fig:ablation-rknn}, SimJoin has the lowest ratio, and its fixed MST topology determines the sliding proportion. For the other methods, the ratio increases with $k$. $\RNSM$ rises from 66.2\% at $k=3$ to 90.1\% at $k=20$, showing that larger $k$ exposes more followers. $\RNSM$ consistently outperforms all other methods. $\RNSM^-$ underperforms NSM because NSM iteratively propagates previous results instead of sliding to a fixed follower set.

\stitle{Exp-8: Parallel Scalability.} We compare $\RNSM$ and SimJoin on MARCO10M and IMAGE10M using $\{1,2,4,8,16,32\}$ threads at comparable merged-index search quality for parallelism comparison.
Fig~\ref{fig:parallel-scalability} shows that $\RNSM$ outperforms SimJoin across all thread configurations, with its advantage increasing with more threads: from $1.15\times$ to $1.95\times$ on MARCO10M and from $1.60\times$ to $2.08\times$ on IMAGE10M. The MST propagation of SimJoin introduces inter-node dependencies that limit parallelism, while $\RNSM$ decouples sliding through independent pivot selection.

\begin{figure}[!t]
\centering
\begin{small}

% 全局图例
\begin{tikzpicture}
\begin{customlegend}[legend columns=4,
legend entries={Random, KNNG, MOS, MOS($\Delta$=3)},%legend entries={random, circulant, MOS},
legend style={at={(0.5,1.15)},anchor=north,draw=none,font=\scriptsize,column sep=0.2cm}]
% \addlegendimage{line width=0.2mm,color=violate,mark=o,mark size=0.5mm}
\addlegendimage{line width=0.2mm,color=amaranth,mark=triangle,mark size=0.5mm}
\addlegendimage{line width=0.2mm,color=navy,mark=o,mark size=0.5mm}
\addlegendimage{line width=0.2mm,color=amber,mark=triangle,mark size=0.5mm}
% \addlegendimage{line width=0.2mm,color=blue,mark=diamond,mark size=0.5mm}
\addlegendimage{line width=0.2mm,color=forestgreen,mark=star,mark size=0.5mm}
\addlegendimage{line width=0.2mm,color=violate,mark=square,mark size=0.5mm}
\end{customlegend}
\end{tikzpicture}
\\[-\lineskip]
\vspace*{-1mm}
% ======================
% 第1行：6个 recall@10 - QPS 曲线图
% ======================
\begin{tikzpicture}[scale=1]
\begin{axis}[
    height=3.0cm,
    width=\columnwidth/2.00,
    xlabel=recall@10, xlabel style={yshift=+8pt},
    xmin=0.9,xmax=1.0,
    ylabel=Query Per Second, ylabel style={yshift=-15pt}, 
    label style={font=\scriptsize},
    tick label style={font=\scriptsize},
    ymajorgrids=true,
    xmajorgrids=true,
    grid style=dashed,
]
\addplot[line width=0.2mm,color=amaranth,mark=triangle,mark size=0.5mm]
plot coordinates {
(0.6155, 14386.5)
(0.7482, 9294.59)
(0.8188, 7015.22)
(0.86, 5697.44)
(0.89, 4768.89)
(0.9092, 4154.73)
(0.9233, 3442.32)
(0.934, 3314.72)
(0.9416, 3009.84)
(0.9498, 2736.15)
(0.9578, 2263.01)
(0.9611, 2310.11)
(0.9639, 2234.2)
(0.9665, 2044.03)
(0.9693, 2095.23)
};
%KNNG
\addplot[line width=0.2mm,color=navy,mark=o,mark size=0.5mm]
plot coordinates {
(0.5679, 17097.5)
(0.7115, 11282.2)
(0.7858, 7875.5)
(0.8328, 6911.82)
(0.8665, 5862.1)
(0.8897, 5037.97)
(0.9085, 4410.85)
(0.9219, 4029.06)
(0.9333, 3440.11)
(0.9419, 3387.12)
(0.9484, 3106.51)
(0.9544, 2912.42)
(0.96, 2635.16)
(0.9642, 2563.46)
(0.9674, 2416.17)
(0.9697, 2288.08)
};
%MOS
\addplot[line width=0.2mm,color=amber,mark=triangle,mark size=0.5mm]
plot coordinates {
(0.6398, 16548.5)
(0.7714, 9457.11)
(0.8386, 8096.5)
(0.8794, 6561.65)
(0.9077, 5548.61)
(0.9261, 4714.25)
(0.9387, 4272.77)
(0.9477, 3750.52)
(0.955, 3284.81)
(0.9609, 3079.58)
(0.9672, 2894.46)
(0.9702, 2722.71)
(0.9725, 2451.48)
(0.9758, 2381.84)
(0.9777, 2253.36)
};
%DELTA3
\addplot[line width=0.2mm,color=forestgreen,mark=star,mark size=0.5mm]
plot coordinates {
(0.4607, 17222.7)
(0.6299, 10964.8)
(0.7218, 8281.17)
(0.7832, 6115.62)
(0.8273, 5721.94)
(0.8579, 4974.29)
(0.8814, 4294.34)
(0.8987, 3836.91)
(0.9133, 3301.2)
(0.9245, 2956.32)
(0.9335, 2924.65)
(0.9402, 2693.07)
(0.9466, 2518.55)
(0.9519, 2272.41)
(0.9555, 2262.28)
};
% %DELTA2-unitweight
% \addplot[line width=0.2mm,color=violate,mark=square,mark size=0.5mm]
% plot coordinates {
% (0.5763, 15379.3)
% (0.7083, 10136.7)
% (0.7825, 7680.34)
% (0.8262, 6175.72)
% (0.8571, 4827.87)
% (0.8771, 4549.38)
% (0.8937, 4014.57)
% (0.9068, 3651.36)
% (0.9163, 3292.24)
% (0.9234, 2925.51)
% (0.9293, 2807.18)
% (0.9345, 2620.9)
% (0.9383, 2364.68)
% (0.9425, 2302.77)
% (0.9479, 2186.09)
% };
\end{axis}
\end{tikzpicture}
\begin{tikzpicture}[scale=1]
\begin{axis}[
    height=3.0cm,
    width=\columnwidth/2.00,
    xlabel=recall@1, xlabel style={yshift=+8pt},
    xmin=0.9,xmax=1.0,
    ylabel=Query Per Second, ylabel style={yshift=-15pt}, 
    label style={font=\scriptsize},
    tick label style={font=\scriptsize},
    ymajorgrids=true,
    xmajorgrids=true,
    grid style=dashed,
]
\addplot[line width=0.2mm,color=amaranth,mark=triangle,mark size=0.5mm]
plot coordinates {
(0.688, 13972.2)
(0.81, 9418.37)
(0.863, 6426.2)
(0.9, 5420.51)
(0.922, 4890.7)
(0.932, 4000.51)
(0.945, 3767.57)
(0.949, 3148.89)
(0.959, 2981.66)
(0.965, 2872.87)
(0.969, 2595.5)
(0.974, 2443.07)
(0.974, 2306.93)
(0.976, 2112.73)
(0.977, 2069.81)
};
%KNNG
\addplot[line width=0.2mm,color=navy,mark=o,mark size=0.5mm]
plot coordinates {
(0.667, 17291.5)
(0.798, 11297.7)
(0.852, 8561.82)
(0.895, 6936.1)
(0.918, 5895.48)
(0.937, 5133.22)
(0.949, 4524.43)
(0.96, 4030.58)
(0.967, 3696.41)
(0.971, 3253.56)
(0.978, 3116.46)
(0.981, 2889.29)
(0.986, 2710.85)
(0.99, 2560.31)
(0.991, 2235.17)
(0.991, 2416.97)
};
%MOS
\addplot[line width=0.2mm,color=amber,mark=triangle,mark size=0.5mm]
plot coordinates {
(0.709, 16661.6)
(0.833, 10721)
(0.901, 8120.83)
(0.934, 6117.88)
(0.953, 5583.44)
(0.961, 4837.12)
(0.967, 4261.31)
(0.974, 3860.45)
(0.978, 3587.19)
(0.981, 3093.31)
(0.985, 3021.32)
(0.986, 2823)
(0.987, 2651.34)
(0.99, 2394.12)
(0.991, 2349.72)
};
%DELTA3
\addplot[line width=0.2mm,color=forestgreen,mark=star,mark size=0.5mm]
plot coordinates {
(0.4607, 17034.6)
(0.6299, 10968.3)
(0.7218, 7684.19)
(0.7832, 6799.4)
(0.8273, 5682.89)
(0.8579, 4899.74)
(0.8814, 4397.85)
(0.8987, 3906.01)
(0.9133, 3400.23)
(0.9245, 3295.22)
(0.9335, 3013.38)
(0.9402, 2834.33)
(0.9466, 2548.32)
(0.9519, 2469.37)
(0.9555, 2326.36)
};
\end{axis}
\end{tikzpicture}

\vgap
\vspace*{-3mm}
\caption{Merge Order Selection Comparison on DEEP10M}
\vgap
\label{fig:ablation-mos}
\end{small}
\end{figure}
\stitle{Exp-9: Effects of MOS Technique.} 
We validate the effectiveness of the MOS strategy for merging multiple partitions. On DEEP10M split into 10 partitions, we compare MOS with a random regular graph, a KNNG with the same average degree, and an MOS variant with diameter upper bound $\Delta=3$. We exclude MST, Star, and Path because they fail to reach 0.9 recall in Fig~\ref{fig:observation2}. Fig~\ref{fig:ablation-mos} shows that MOS consistently yields the best search performance.

% \subsection{Ablation Study}
\section{Related Works}
%We review existing approaches for multiple graph indexes for AKNN search.

% \sstitle{HNSW-merger.}
% HNSW-merger~\cite{hnswmerger-jin-sigmod-2026} uses one-way cross-querying from the smaller to the larger index with a small candidate set $ef$. It further exploits the HNSW hierarchy to cache higher-layer entry points. However, they handle multi-index merges solely by incrementally merging the two largest indexes, limiting speedups.
\sstitle{HNSW-merger.}
HNSW-merger~\cite{hnswmerger-jin-sigmod-2026} proposed merging HNSW indexes via cross-querying. Specifically, they use one-way cross-querying from the smaller to the larger index with a small candidate set $ef$. It further exploits the HNSW hierarchy to cache higher-layer entry points. However, they handle multi-index merges solely by incrementally merging the two largest indexes, limiting speedups.

\sstitle{Similarity Join.} 
SimJoin~\cite{SimJoin-xie-POM-2025} was originally proposed to report all cross-partition vector pairs within a distance threshold. Cross-index KNN discovery of index merging is a top-$k$ variant of this task. However, it requires an MST over one partition and cannot directly merge two PG indexes.

\sstitle{KNNG Merge.} 
S-merge~\cite{MergeOfKNNG-zhao-IEEE-2022} connects two KNNGs with random edges and refines them using NNDecent~\cite{NNDecent-dong-WWW-2011}. Multi-way Merge~\cite{GraphMerge-zhang-arxiv-2025} merges multiple KNNGs in one pass by cross-matching only newly added cross-index candidates. Batch processing in NNDecent lets KNNGs converge faster than PG-based methods~\cite{ExtendedNeighborhoodGraph-zhang-SIGIR-2025}, but KNNGs are generally less effective for AKNN search than navigational graph indexes~\cite{GraphTreeIndexes-wang-IEEE-2023, GraphBasedANNS-wang-vldb-2021, ANNBenchmarks-aumuller-IS-2020, HighDimensionalDataExperiments-li-IEEE-2016}.

\sstitle{Overlapping Partitioning.} 
This strategy is often employed for billion-scale graph index construction under memory constraints (\eg DiskANN). The dataset is partitioned into overlapping clusters via k-means, where each data point is assigned to at least 2 nearest partitions. Sub-indexes are constructed for each partition, and a compact graph is subsequently merged by concatenating the neighborhoods of identical nodes across different subgraphs. However, it requires clustering the dataset and building sub-indexes from scratch, making it inapplicable to merging existing indexes.

\section{Conclusion and Future Work}~\label{sec:conclude}
In this paper, we addressed the challenge of merging multiple proximity graph indexes to enable efficient multi-index searching. By identifying the core components of graph merging, we proposed the Reverse Neighbor Sliding Merge ($\RNSM$) algorithm to facilitate pairwise merging. Furthermore, based on merge order analysis, we designed the Merge Order Selection (MOS) algorithm. MOS is suitable for both random and cluster partitions, and it reduces redundancy in pairwise processing by pruning unnecessary merges. Our experimental results demonstrate that our framework significantly improves merging efficiency while preserving the comparable search performance characteristic of graph indexes. Notably, our method exhibits robust scalability, performing effectively even with large-scale datasets.

In practice, systems may involve overlapping, unequal-sized, or mixed partitions rather than simplified disjoint random or cluster partitions. While our method can be directly extended to these settings, it would be suboptimal, and designing more sophisticated merge-cost models tailored to them remains future work.
Moreover, the three proxies used in MOS are indirect and empirical. Richer size-aware, geometric, or learned proxies that directly optimize merge time and quality remain future work.

% 
% \stitle{Limitations and Future Work.}
% Our formulation simplifies real deployments with random or cluster partitions as a controlled simplification. Real systems may face more complex scenarios such as overlapping partitions, unequal partition sizes, or mixtures of random and cluster partitions. Although our method can handle them with unique vector IDs, a merge-direction principle, and considers mixed partitions as a special case of cluster partitions, dedicated merge-cost designs for more general and complex settings remain future work.

% MOS uses centroid distance, merge-order graph degree, and graph diameter as empirically motivated but indirect proxies for multi-index merge cost and quality. Direct objectives that minimize merge time or maximize final recall would be preferable, but such objectives are nontrivial to define because PG indexes are themselves empirical approximations~\cite{NSG-fu-VLDB-2019,HNSW-malkov-IEEE-2020}. Nevertheless, there are more objectives we can consider in the merge cost model. For instance, existing methods use partition size only to merge a smaller index into a larger one to reduce merge time, without considering its effect on merge quality. Likewise, centroid distance only empirically measures the likelihood that two partitions share KNNs. Richer geometric or learned proxies may identify better merge pairs~\cite{LIRA-zeng-ACM-2025}.
% 
% [SHRINK 1 limitation 1 future]

\balance
\bibliographystyle{ACM-Reference-Format}
\bibliography{sample}

\clearpage

\section*{Appendix}
\subsection*{Proof}\label{apendixsec:proof}
\stitle{Proof of Theorem~\ref{thm:dps-nphard}}
\begin{proof}
To prove that the Minimum Cost Independent Dominating Pivot Selection (IDPS) problem is NP-hard, we establish a polynomial-time reduction from the NP-complete \textbf{Minimum Independent Dominating Set (MIDS)} problem~\cite{MIDS-Irving-1991}.

First, we define the decision versions of both problems.
\begin{itemize}[leftmargin=1.5em]
    \item \textbf{MIDS-Decision:} Given an undirected graph $G'=(V', E')$ and a positive integer $k$, does there exist an independent dominating set $D \subseteq V'$ with $|D| \leq k$? (i.e., every $v \notin D$ has a neighbor in $D$, and no two nodes in $D$ are adjacent.)
    \item \textbf{IDPS-Decision:} Given a vector dataset $X$, a graph $G=(V, E)$, a distance function $\delta(\cdot, \cdot)$, a naive search cost $\gamma$, and a target cost $C$, does there exist a pivot subset $P \subseteq V$ satisfying the independent dominating set constraint such that $\mathcal{L}(P) \leq C$?
\end{itemize}

\noindent\textbf{NP Membership.}
IDPS-Decision is in NP: given a candidate $P$, one can verify in polynomial time that (i) every $v \notin P$ has a neighbor in $P$ (domination), (ii) no two nodes in $P$ are adjacent (independence), and (iii) $\mathcal{L}(P) \leq C$.

\noindent\textbf{Polynomial-Time Construction.}
Given an instance $\langle G', k \rangle$ of MIDS-Decision, we construct an instance $\langle X, G, \delta, \gamma, C \rangle$ of IDPS-Decision as follows:
\begin{enumerate}
    \item Let $G = G'$, so $V = V'$ and $E = E'$.
    \item Map each node $v_i \in V$ to a distinct vector $x_i \in X$. Define the discrete metric: $\delta(x_u, x_u) = 0$ and $\delta(x_u, x_v) = 1$ for all $u \neq v$. (This trivially satisfies all metric space properties.)
    \item Set $\gamma = 2$, $\beta = 1$, and $C = k + |V|$.
\end{enumerate}
This construction runs in polynomial time $\mathcal{O}(|V|^2)$.

\noindent\textbf{Cost Simplification.}
For any $P$ satisfying the independent dominating set constraint, $P$ is in particular a dominating set, so every $v \notin P$ has a neighbor $p \in P \cap N(v, G)$. With $\delta \equiv 1$ for distinct nodes, the sliding cost is:
\begin{equation*}
    \mathcal{L}_{slide}(P) = \sum_{v \in V \setminus P} \min_{p \in P \cap N(v,G)} \delta(x_v, x_p) = |V| - |P|
\end{equation*}
Substituting $\gamma = 2$:
\begin{equation*}
    \mathcal{L}(P) = 2|P| + (|V| - |P|) = |P| + |V|
\end{equation*}

\noindent\textbf{Correctness of Reduction.}
\begin{itemize}[leftmargin=1.5em]
    \item \textbf{($\Rightarrow$)} Suppose there exists an independent dominating set $D$ in $G'$ with $|D| \leq k$. Let $P = D$. Since $G = G'$, $P$ is an independent dominating set in $G$, satisfying both IDPS constraints. The total cost is $\mathcal{L}(P) = |P| + |V| \leq k + |V| = C$. Thus, a valid IDPS solution exists.

    \item \textbf{($\Leftarrow$)} Suppose there exists a pivot subset $P \subseteq V$ satisfying the IDPS constraints with $\mathcal{L}(P) \leq C$. By the IDPS constraints, $P$ is an independent dominating set of $G = G'$. From $\mathcal{L}(P) \leq C$ we get $|P| + |V| \leq k + |V|$, so $|P| \leq k$. Thus, an independent dominating set of size at most $k$ exists in $G'$.
\end{itemize}

Since MIDS-Decision is NP-complete and MIDS-Decision $\leq_p$ IDPS-Decision, IDPS-Decision is NP-complete. Consequently, the IDPS optimization problem is NP-hard.
\end{proof}

\subsection*{Extra Experiments}\label{apendixsec:extra-exp}

\stitle{Exp-a: Runtime Breakdown of RNSM.}
In this section, we analyze the execution time of RNSM by decomposing it into its core components (32 threads): neighbor expansion, reverse inversion and pivot selection, and execution of merging.
As shown in Fig~\ref{fig:rnsm-breakdown}, the merging step dominates runtime across all datasets, accounting for more than 83.1\% of total time, since it involves the main graph-update operations driven by the sliding cost model. Neighbor expansion contributes 8.7\%--15.3\%, reflecting the overhead of retrieving nearest neighbors of every node from the source index. Pivot selection is consistently negligible ($\leq$5.0\%), confirming that the sequential pivot in Algorithm~\ref{algo:bi-merge} adds virtually no overhead.

\begin{figure}[!htbp]
\centering
\begin{small}

% Global legend
\begin{tikzpicture}
\begin{customlegend}[legend columns=3,
legend entries={Expand Neighbor, Reverse Select Pivots, Merge},
legend style={at={(0.5,1.15)},anchor=north,draw=none,font=\scriptsize,column sep=0.2cm},legend image post style={scale=0.8}]
\addlegendimage{area legend,fill=amber!50,draw=amber,pattern=north east lines,pattern color=amber}
\addlegendimage{area legend,fill=navy!50,draw=navy,pattern=crosshatch,pattern color=navy}
\addlegendimage{area legend,fill=amaranth!50,draw=amaranth,pattern=north west lines,pattern color=amaranth}
\end{customlegend}
\end{tikzpicture}
\\[-\lineskip]

\begin{tikzpicture}[scale=1]
\begin{axis}[
    height=\columnwidth/2.6,
    width=\columnwidth/1.9,
    ybar stacked,
    bar width=6pt,
    ymin=0,
    ymax=105,
    ytick={0,20,40,60,80,100},
    xtick=data,
    xticklabels={MARCO10M, IMAGE10M, DEEP10M, ANTON10M},
    x tick label style={rotate=45, anchor=east, font=\scriptsize},
    ylabel={Rate (\%)},ylabel style={yshift=-20pt},
    label style={font=\scriptsize},
    tick label style={font=\scriptsize},
    ymajorgrids=true,
    grid style=dashed,
    legend style={font=\scriptsize,draw=none},
    enlarge x limits=0.20,
]
% Merge (bottom layer)
\addplot[fill=amaranth!50,draw=amaranth,pattern=north west lines,pattern color=amaranth] coordinates {
    (1,83.1)
    (2,88.0)
    (3,86.3)
    (4,86.9)
};
% Expand Neighbor (middle layer)
\addplot[fill=amber!50,draw=amber,pattern=north east lines,pattern color=amber] coordinates {
    (1,15.3)
    (2,9.5)
    (3,8.7)
    (4,10.9)
};
% Reverse Select Pivots (top layer)
\addplot[fill=navy!50,draw=navy,pattern=crosshatch,pattern color=navy] coordinates {
    (1,1.7)
    (2,2.4)
    (3,5.0)
    (4,2.3)
};
\end{axis}
\end{tikzpicture}

\vgap
\caption{Time Breakdown of RNSM Components}
\vgap
\label{fig:rnsm-breakdown}
\end{small}
\end{figure}

\input{figures/Applicability}

\begin{figure}[!htbp]
\centering
\begin{small}

% 全局图例
\begin{tikzpicture}
    \begin{customlegend}[legend columns=4,
        legend entries={$\CM$,$\RNSM^*$, $\RNSM^+$, Rebuild},
        legend style={at={(0.5,1.15)},anchor=north,draw=none,font=\scriptsize,column sep=0.2cm}]
    \addlegendimage{line width=0.2mm,color=violate,mark=o,mark size=0.5mm}
    \addlegendimage{line width=0.2mm,color=amaranth,mark=triangle,mark size=0.5mm}
    \addlegendimage{line width=0.2mm,color=forestgreen,mark=+,mark size=0.5mm}
    %\addlegendimage{line width=0.2mm,color=navy,mark=square,mark size=0.5mm}
    \addlegendimage{line width=0.2mm,color=amber,mark=star,mark size=0.5mm}
    \end{customlegend}
\end{tikzpicture}
\vspace*{-2mm}
%\\[-\lineskip]

        \subfloat[DEEP100M-20parts]{\vspace{-2mm}
\begin{tikzpicture}[scale=1]
\begin{axis}[
height=3.0cm,
width=\columnwidth/2.10,
xlabel=recall@10, xlabel style={yshift=+8pt},  
ylabel=Query Per Second, ylabel style={yshift=-15pt}, 
label style={font=\scriptsize},
tick label style={font=\scriptsize},
ymajorgrids=true,
xmajorgrids=true,
grid style=dashed,
]
    
    % NM
\addplot[line width=0.2mm,color=violate,mark=o,mark size=0.5mm]
plot coordinates {
(0.5863,11737.3) 
(0.7227,7057.59) 
(0.7915,5904.11) 
(0.838,4729.88) 
(0.8682,3903.07) 
(0.8883,3589.66) 
(0.9044,3071.1) 
(0.9174,2869.7) 
(0.9257,2537.9) 
(0.9347,2414.84) 
(0.9421,2159.64) 
(0.946,2014.81) 
(0.9512,1885.2) 
(0.9542,1775.57) 
(0.9586,1678.34) 
};
% RNSM
\addplot[line width=0.2mm,color=amaranth,mark=triangle,mark size=0.5mm]
plot coordinates {
(0.5748,12216.2) 
(0.7158,8058.33) 
(0.7845,6148.59) 
(0.8248,4503.89) 
(0.8578,4258.45) 
(0.8818,3718.29) 
(0.8989,3160.84) 
(0.9116,2979.93) 
(0.9205,2713.43) 
(0.928,2496.71) 
(0.9343,2247.78) 
(0.9387,2143.57) 
(0.9432,2010.2) 
(0.9475,1845.67) 
(0.9497,1778.92) 
};
% RNSM-cluster
\addplot[line width=0.2mm,color=forestgreen,mark=+,mark size=0.5mm]
plot coordinates {
(0.5788,12748.4) 
(0.7189,7577.16) 
(0.7874,6392.21) 
(0.8336,4808.25) 
(0.8647,4364.6) 
(0.888,3816.25) 
(0.9024,3401.15) 
(0.9168,2942.28) 
(0.9255,2795.62) 
(0.9326,2569.33) 
(0.9384,2286.95) 
(0.945,2132.63) 
(0.9492,2045.33) 
(0.9531,1878.7) 
(0.9568,1814.18) 
};
% Rebuild
\addplot[line width=0.2mm,color=amber,mark=star,mark size=0.5mm]
plot coordinates {
(0.6383, 11108.7)
(0.7701, 7537.43)
(0.8295, 5747.74)
(0.8728, 4452.37)
(0.8982, 3881.3)
(0.9175, 3361.39)
(0.931, 2950.15)
(0.9416, 2629.5)
(0.9499, 2381.61)
(0.9559, 2169.43)
(0.9606, 1992.85)
(0.9647, 1846.55)
(0.967, 1722.58)
(0.9697, 1584.17)
(0.9722, 1512)
};
\end{axis}
\end{tikzpicture}}\hspace{2mm}
    \subfloat[DEEP100M-30parts]{\vspace{-2mm}
\begin{tikzpicture}[scale=1]
\begin{axis}[
height=3.0cm,
width=\columnwidth/2.10,
xlabel=recall@10, xlabel style={yshift=+8pt},  
ylabel=Query Per Second, ylabel style={yshift=-15pt}, 
label style={font=\scriptsize},
tick label style={font=\scriptsize},
ymajorgrids=true,
xmajorgrids=true,
grid style=dashed,
]
    
    % NM
\addplot[line width=0.2mm,color=violate,mark=o,mark size=0.5mm]
plot coordinates {
(0.5897,10974.6) 
(0.7243,7346.95) 
(0.7906,5616.24) 
(0.8374,4221.09) 
(0.8663,3920.75) 
(0.8914,3130.42) 
(0.9064,3023.31) 
(0.9179,2655.68) 
(0.9278,2492.64) 
(0.9361,2230.84) 
(0.9425,2029.55) 
(0.9475,1927.85) 
(0.9509,1811.77) 
(0.9559,1694.34) 
(0.9589,1602.92) 
};
% RNSM
\addplot[line width=0.2mm,color=amaranth,mark=triangle,mark size=0.5mm]
plot coordinates {
(0.5595,11168.2) 
(0.7033,7299.65) 
(0.7742,5567.9) 
(0.8194,4330.45) 
(0.8508,3978.99) 
(0.8742,3423.42) 
(0.8911,3077.4) 
(0.9055,2766.57) 
(0.9165,2563.38) 
(0.9243,2323.27) 
(0.933,2176.66) 
(0.9387,1940.76) 
(0.943,1734.45) 
(0.9477,1727.46) 
(0.9514,1633.25) 
};
% RNSM-cluster
\addplot[line width=0.2mm,color=forestgreen,mark=+,mark size=0.5mm]
plot coordinates {
(0.5702,12812.9) 
(0.7074,8436.75) 
(0.7807,6435.82) 
(0.8234,4730.33) 
(0.8535,4198.0) 
(0.878,3842.32) 
(0.8978,3459.39) 
(0.9103,3088.38) 
(0.9207,2821.26) 
(0.9291,2511.35) 
(0.9382,2395.09) 
(0.9439,2222.73) 
(0.9484,2025.7) 
(0.9524,1958.93) 
(0.9561,1844.13) 
};
% Rebuild
\addplot[line width=0.2mm,color=amber,mark=star,mark size=0.5mm]
plot coordinates {
(0.6383, 11108.7)
(0.7701, 7537.43)
(0.8295, 5747.74)
(0.8728, 4452.37)
(0.8982, 3881.3)
(0.9175, 3361.39)
(0.931, 2950.15)
(0.9416, 2629.5)
(0.9499, 2381.61)
(0.9559, 2169.43)
(0.9606, 1992.85)
(0.9647, 1846.55)
(0.967, 1722.58)
(0.9697, 1584.17)
(0.9722, 1512)
};
\end{axis}
\end{tikzpicture}}\hspace{2mm}

    \subfloat[DEEP100M-40parts]{\vspace{-2mm}
\begin{tikzpicture}[scale=1]
\begin{axis}[
height=3.0cm,
width=\columnwidth/2.10,
xlabel=recall@10, xlabel style={yshift=+8pt},  
ylabel=Query Per Second, ylabel style={yshift=-15pt}, 
label style={font=\scriptsize},
tick label style={font=\scriptsize},
ymajorgrids=true,
xmajorgrids=true,
grid style=dashed,
]
    
    % NM
\addplot[line width=0.2mm,color=violate,mark=o,mark size=0.5mm]
plot coordinates {
(0.6002,12021.0) 
(0.7277,7861.97) 
(0.7968,5937.97) 
(0.8398,4236.16) 
(0.8714,3142.2) 
(0.8916,3133.77) 
(0.9083,3145.64) 
(0.9206,2715.09) 
(0.9279,2494.84) 
(0.9366,2301.64) 
(0.9429,2143.31) 
(0.9482,2002.06) 
(0.9522,1873.96) 
(0.9569,1747.4) 
(0.9596,1640.15) 
};
% RNSM
\addplot[line width=0.2mm,color=amaranth,mark=triangle,mark size=0.5mm]
plot coordinates {
(0.5308,12794.1) 
(0.6815,8350.49) 
(0.7598,6035.52) 
(0.8087,4782.07) 
(0.8402,4124.86) 
(0.8658,3794.95) 
(0.8852,3223.93) 
(0.9013,2908.38) 
(0.9138,2600.79) 
(0.9214,2445.08) 
(0.9296,2317.05) 
(0.9356,2099.6) 
(0.94,1966.34) 
(0.9452,1854.14) 
(0.9491,1753.18) 
};
% RNSM-cluster
\addplot[line width=0.2mm,color=forestgreen,mark=+,mark size=0.5mm]
plot coordinates {
(0.5613,13114.0) 
(0.6982,8557.16) 
(0.7683,6457.97) 
(0.8116,5226.71) 
(0.8417,3230.47) 
(0.8682,3668.79) 
(0.8834,3456.56) 
(0.8986,3092.0) 
(0.91,2830.14) 
(0.9187,2600.83) 
(0.9263,2399.89) 
(0.933,2232.33) 
(0.9384,2086.03) 
(0.9417,1957.24) 
(0.946,1846.67) 
};
% Rebuild
\addplot[line width=0.2mm,color=amber,mark=star,mark size=0.5mm]
plot coordinates {
(0.6383, 11108.7)
(0.7701, 7537.43)
(0.8295, 5747.74)
(0.8728, 4452.37)
(0.8982, 3881.3)
(0.9175, 3361.39)
(0.931, 2950.15)
(0.9416, 2629.5)
(0.9499, 2381.61)
(0.9559, 2169.43)
(0.9606, 1992.85)
(0.9647, 1846.55)
(0.967, 1722.58)
(0.9697, 1584.17)
(0.9722, 1512)
};
\end{axis}
\end{tikzpicture}}\hspace{2mm}
    \subfloat[DEEP100M-50parts]{\vspace{-2mm}
\begin{tikzpicture}[scale=1]
\begin{axis}[
height=3.0cm,
width=\columnwidth/2.10,
xlabel=recall@10, xlabel style={yshift=+8pt},  
ylabel=Query Per Second, ylabel style={yshift=-15pt}, 
label style={font=\scriptsize},
tick label style={font=\scriptsize},
ymajorgrids=true,
xmajorgrids=true,
grid style=dashed,
]
    
    % NM
\addplot[line width=0.2mm,color=violate,mark=o,mark size=0.5mm]
plot coordinates {
(0.5822,12046.2) 
(0.7231,7835.79) 
(0.7914,5772.46) 
(0.837,4246.04) 
(0.8691,3673.79) 
(0.8902,3528.06) 
(0.9084,3139.72) 
(0.9199,2826.49) 
(0.9291,2563.03) 
(0.9369,2352.32) 
(0.9433,2126.24) 
(0.9488,1948.3) 
(0.9534,1882.41) 
(0.9579,1726.35) 
(0.9613,1662.59) 
};
% RNSM
\addplot[line width=0.2mm,color=amaranth,mark=triangle,mark size=0.5mm]
plot coordinates {
(0.5224,12850.6) 
(0.6774,8366.37) 
(0.753,6343.82) 
(0.8007,5192.51) 
(0.8329,4398.65) 
(0.8586,3764.22) 
(0.8778,3397.05) 
(0.8955,3055.34) 
(0.9064,2773.4) 
(0.9145,2546.64) 
(0.923,2350.76) 
(0.93,2183.13) 
(0.9358,2033.08) 
(0.9406,1912.05) 
(0.9446,1801.35) 
};
% RNSM-cluster
\addplot[line width=0.2mm,color=forestgreen,mark=+,mark size=0.5mm]
plot coordinates {
(0.5657,12808.6) 
(0.6979,8377.82) 
(0.772,6388.05) 
(0.8178,5232.1) 
(0.8535,4439.19) 
(0.8746,3883.33) 
(0.8927,3447.53) 
(0.9065,3110.67) 
(0.9165,2834.11) 
(0.9259,2598.2) 
(0.9325,2403.66) 
(0.9382,2207.15) 
(0.9441,2088.33) 
(0.9475,1960.54) 
(0.9511,1850.19) 
};
% Rebuild
\addplot[line width=0.2mm,color=amber,mark=star,mark size=0.5mm]
plot coordinates {
(0.6383, 11108.7)
(0.7701, 7537.43)
(0.8295, 5747.74)
(0.8728, 4452.37)
(0.8982, 3881.3)
(0.9175, 3361.39)
(0.931, 2950.15)
(0.9416, 2629.5)
(0.9499, 2381.61)
(0.9559, 2169.43)
(0.9606, 1992.85)
(0.9647, 1846.55)
(0.967, 1722.58)
(0.9697, 1584.17)
(0.9722, 1512)
};
\end{axis}
\end{tikzpicture}}\hspace{2mm}

\vgap
\vspace*{-2mm}
\caption{Recall@10-QPS Tradeoff for More Data Partitions}
\vgap
\label{fig:scalability-recall-qps-appendix}
\end{small}
\end{figure}

\stitle{Exp-b: Applicability Study of Search Performance.}
This section evaluates the applicability of RNSM across different graph index structures beyond HNSW. We compare RNSM against the Cross-query Merge (CM) baseline on three representative proximity graph types: NSG~\cite{NSG-fu-VLDB-2019}, SSG~\cite{SSG-fu-IEEE-2022}, and $\tau$-MNG~\cite{Efficient-peng-SIGMOD-2023}.

\begin{figure}[!htbp]
\centering
\begin{small}

\begin{tikzpicture}
    \begin{customlegend}[legend columns=4,
        legend entries={$\CM^*$~(random), $\CM^+$~(cluster), $\FGIM$, Rebuild},
        legend style={at={(0.5,1.15)},anchor=north,draw=none,font=\scriptsize,column sep=0.2cm},legend image post style={scale=0.8}]
    \addlegendimage{area legend,fill=amaranth!50,draw=amaranth,pattern=north west lines,pattern color=amaranth}
    \addlegendimage{area legend,fill=amber!50,draw=amber,pattern=north east lines,pattern color=amber}
    \addlegendimage{area legend,fill=navy!50,draw=navy,pattern=grid,pattern color=navy}
    \addlegendimage{area legend,fill=orange!50,draw=orange,pattern=crosshatch,pattern color=orange}
    \end{customlegend}
\end{tikzpicture}
\\[-\lineskip]

\subfloat[MARCO10M]{\vspace{-2mm}
\begin{tikzpicture}[scale=1]
\begin{axis}[
    height=3.0cm,
    width=\columnwidth/2.10,
    ybar=0.5pt, ymode=log, log basis y={2}, ymin=1,
    bar width=3.0pt,
    ylabel=Speed-up Ratio,ylabel style={yshift=-20pt}, 
    ytick={0,1,2,4},
    enlarge x limits=0.20,
    xtick={0,1,2,3},
    xticklabels={4 parts, 6 parts, 8 parts, 10 parts},
    xticklabel style={rotate=45, anchor=east, font=\scriptsize},
    legend style={font=\scriptsize,draw=none},
    label style={font=\scriptsize},
    tick label style={font=\scriptsize},
    ymajorgrids=true,
    grid style=dashed,
]
% NMr
\addplot[ybar,fill=amaranth!50,draw=amaranth,pattern=north west lines,pattern color=amaranth] coordinates {
    (0,1.63) (1,1.74) (2,2.31) (3,2.06)
};
% NMc
\addplot[ybar,fill=amber!50,draw=amber,pattern=north east lines,pattern color=amber] coordinates {
   (0,1.75) (1,1.90) (2,1.85) (3,1.62)
};
% FGIM
\addplot[ybar,fill=navy!50,draw=navy,pattern=grid,pattern color=navy] coordinates {
    (0,1.38) (1,1.03) (2,1.26) (3,1.09)
};
% BuildAsOne
\addplot[ybar,fill=orange!50,draw=orange,pattern=crosshatch,pattern color=orange] coordinates {
    (0,4.26) (1,2.59) (2,2.96) (3,1.91)
};
\end{axis}
\end{tikzpicture}}\hspace{2mm}
\subfloat[IMAGE10M]{\vspace{-2mm}
\begin{tikzpicture}[scale=1]
\begin{axis}[
    height=3.0cm,
    width=\columnwidth/2.10,
    ybar=0.5pt, ymode=log, log basis y={2}, ymin=1,
    bar width=3.0pt,
    ylabel=Speed-up Ratio,ylabel style={yshift=-20pt}, 
    ytick={0,1,2,4},
    enlarge x limits=0.20,
    xtick={0,1,2,3},
    xticklabels={4 parts, 6 parts, 8 parts, 10 parts},
    xticklabel style={rotate=45, anchor=east, font=\scriptsize},
    legend style={font=\scriptsize,draw=none},
    label style={font=\scriptsize},
    tick label style={font=\scriptsize},
    ymajorgrids=true,
    grid style=dashed,
]
% NMr
\addplot[ybar,fill=amaranth!50,draw=amaranth,pattern=north west lines,pattern color=amaranth] coordinates {
    (0,2.66) (1,2.18) (2,2.33) (3,2.11)
};
% NMc
\addplot[ybar,fill=amber!50,draw=amber,pattern=north east lines,pattern color=amber] coordinates {
   (0,1.56) (1,1.46) (2,1.98) (3,1.68)
};
% FGIM
\addplot[ybar,fill=navy!50,draw=navy,pattern=grid,pattern color=navy] coordinates {
    (0,3.26) (1,2.04) (2,1.97) (3,1.84)
};
% BuildAsOne
\addplot[ybar,fill=orange!50,draw=orange,pattern=crosshatch,pattern color=orange] coordinates {
    (0,6.43) (1,3.55) (2,2.42) (3,1.86)
};
\end{axis}
\end{tikzpicture}}\hspace{2mm}
\\
\subfloat[DEEP10M]{\vspace{-2mm}
\begin{tikzpicture}[scale=1]
\begin{axis}[
    height=3.0cm,
    width=\columnwidth/2.10,
    ybar=0.5pt, ymode=log, log basis y={2}, ymin=1,
    bar width=3.0pt,
    ylabel=Speed-up Ratio,ylabel style={yshift=-20pt}, 
    ytick={0,1,2,4},
    enlarge x limits=0.20,
    xtick={0,1,2,3},
    xticklabels={4 parts, 6 parts, 8 parts, 10 parts},
    xticklabel style={rotate=45, anchor=east, font=\scriptsize},
    legend style={font=\scriptsize,draw=none},
    label style={font=\scriptsize},
    tick label style={font=\scriptsize},
    ymajorgrids=true,
    grid style=dashed,
]
% NMr
\addplot[ybar,fill=amaranth!50,draw=amaranth,pattern=north west lines,pattern color=amaranth] coordinates {
    (0,1.79) (1,1.42) (2,1.59) (3,1.28)
};
% NMc
\addplot[ybar,fill=amber!50,draw=amber,pattern=north east lines,pattern color=amber] coordinates {
   (0,1.91) (1,1.49) (2,1.43) (3,1.30)
};
% FGIM
\addplot[ybar,fill=navy!50,draw=navy,pattern=grid,pattern color=navy] coordinates {
    (0,5.48) (1,3.44) (2,4.67) (3,3.70)
};
% BuildAsOne
\addplot[ybar,fill=orange!50,draw=orange,pattern=crosshatch,pattern color=orange] coordinates {
    (0,5.90) (1,2.36) (2,2.32) (3,1.56)
};
\end{axis}
\end{tikzpicture}}\hspace{2mm}
\subfloat[ANTON10M]{\vspace{-2mm}
\begin{tikzpicture}[scale=1]
\begin{axis}[
    height=3.0cm,
    width=\columnwidth/2.10,
    ybar=0.5pt, ymode=log, log basis y={2}, ymin=1,
    bar width=3.0pt,
    ylabel=Speed-up Ratio,ylabel style={yshift=-20pt}, 
    ytick={0,1,2,4},
    enlarge x limits=0.20,
    xtick={0,1,2,3},
    xticklabels={4 parts, 6 parts, 8 parts, 10 parts},
    xticklabel style={rotate=45, anchor=east, font=\scriptsize},
    legend style={font=\scriptsize,draw=none},
    label style={font=\scriptsize},
    tick label style={font=\scriptsize},
    ymajorgrids=true,
    grid style=dashed,
]
% NMr
\addplot[ybar,fill=amaranth!50,draw=amaranth,pattern=north west lines,pattern color=amaranth] coordinates {
    (0,1.56) (1,1.73) (2,2.44) (3,2.21)
};
% NMc
\addplot[ybar,fill=amber!50,draw=amber,pattern=north east lines,pattern color=amber] coordinates {
   (0,2.33) (1,1.34) (2,1.61) (3,1.39)
};
% FGIM
\addplot[ybar,fill=navy!50,draw=navy,pattern=grid,pattern color=navy] coordinates {
    (0,1.34) (1,1.16) (2,1.08) (3,1.11)
};
% BuildAsOne
\addplot[ybar,fill=orange!50,draw=orange,pattern=crosshatch,pattern color=orange] coordinates {
    (0,4.05) (1,3.12) (2,2.53) (3,1.97)
};
\end{axis}
\end{tikzpicture}}\hspace{2mm}

\vgap
\caption{Speed-up Ratio Comparison across Different Partitions}
\vgap
\label{fig:speedup-comparison-FULL}
\end{small}
\end{figure}
\begin{figure}[!htbp]
\centering
\begin{small}

\begin{tikzpicture}
    \begin{customlegend}[legend columns=2,
        legend entries={CM, FGIM},
        legend style={at={(0.5,1.15)},anchor=north,draw=none,font=\scriptsize,column sep=0.20cm},
        legend image post style={scale=0.8}]
    \addlegendimage{area legend,fill=amber!50,draw=amber,pattern=north east lines,pattern color=amber}
    \addlegendimage{area legend,fill=navy!50,draw=navy,pattern=grid,pattern color=navy}
    \end{customlegend}
\end{tikzpicture}
\\[-\lineskip]

\begin{tikzpicture}[scale=1]
\begin{axis}[
    height=\columnwidth/2.6,
    width=\columnwidth/1.9,
    ybar=0.5pt,
    ymin=0,
    ymax=10.5,
    bar width=6pt,
    enlarge x limits=0.18,
    xmin=-0.5,
    xmax=3.5,
    ylabel=Speed-up Ratio,
    ylabel style={yshift=-20pt},
    xlabel style={yshift=+2pt},
    xtick={0,1,2,3},
    xticklabels={DEEP, ANTON, IMAGE, MARCO},
    x tick label style={rotate=45, anchor=east, font=\scriptsize},
    legend style={font=\scriptsize,draw=none},
    label style={font=\scriptsize},
    tick label style={font=\scriptsize},
    ymajorgrids=true,
    grid style=dashed,
    axis line style={line width=0.2mm},
]
% CM: cumulative NGM merge time divided by cumulative RGTM merge time.
\addplot[ybar,fill=amber!50,draw=amber,pattern=north east lines,pattern color=amber]
coordinates {
    (0,1.36) (1,1.55) (2,1.43) (3,1.57)
};
% FGIM: cumulative FGIM merge time divided by cumulative RGTM merge time.
\addplot[ybar,fill=navy!50,draw=navy,pattern=grid,pattern color=navy]
coordinates {
    (0,9.79) (1,6.68) (2,5.57) (3,6.44)
};
\end{axis}
\end{tikzpicture}

\vgap
\vspace*{-3mm}
\caption{Incremental Merge Speed-up Ratio on 10M Datasets.}
\vgap
\label{fig:incremental-10m-merge-speedup}
\end{small}
\end{figure}

Fig~\ref{fig:applicability} presents the complete results. Across all configurations, RNSM consistently maintains search performance comparable to or better than CM. This demonstrates that RNSM preserves the navigability and search efficiency of the underlying graph indexes, regardless of their specific construction algorithms or edge selection policies.

\stitle{Exp-c: Comprehensive Results for Exp-3.}
This section presents the comprehensive experimental results for Exp-3 (Multiple Index Merging). It covers all evaluated datasets (DEEP10M, MARCO10M, IMAGE10M, and ANTON10M) using both random and cluster partitioning strategies across varying partition counts $\{4, 6, 8, 10\}$.

% \noindent\textbf{Merge Efficiency.}
Fig~\ref{fig:speedup-comparison-FULL} reports the complete speedup ratios across all datasets and partition numbers. Our method shows the best merge efficiency in all cases, achieving up to 2.33$\times$, 5.48$\times$, and 6.43$\times$ acceleration compared to CM, FGIM, and Rebuild, respectively.

% \noindent\textbf{Search Performance.}
Fig~\ref{fig:multi-merge-recall-qps-appendix-full} presents the complete recall@10-QPS tradeoff curves for all experimental configurations. The results demonstrate that our RNSM and MOS framework maintains the expected search performance of graph indexes and outperforms the existing merge method, FGIM.

\input{figures/multi-index-merge-QPS-appendix-FULL}

\stitle{Exp-d: Search Performance of Scalability Study for Exp-3.}
This experiment complements the DEEP100M scalability study in Exp-3: Test of Multiple Indexes Merging. We partition DEEP100M into $\{20,30,40,50\}$ sub-indexes and evaluate $\RNSM^*$ and $\RNSM^+$ on random and cluster partitions, using $\CM$ as the baseline. Fig~\ref{fig:scalability-recall-qps-appendix} reports the recall@10-QPS tradeoff curves for the scalability experiments on DEEP100M and the merge of \{20,30,40,50\} indexes. Results show that our RNSM+MOS method achieves comparable search performance to the $\CM$ and Rebuild baselines across all settings.

\stitle{Exp-e: Incremental Merge Efficiency for Exp-5.}
This experiment complements the search-performance results of Exp-5 with the corresponding merge-efficiency measurements. For each dataset, we sum the merge time over the nine incremental steps from 1M to 10M vectors and report the cumulative merge time of each baseline divided by that of $\RNSM$; a ratio greater than one therefore indicates a speedup by $\RNSM$.

As shown in Fig~\ref{fig:incremental-10m-merge-speedup}, $\RNSM$ consistently reduces the cumulative merge time across all four datasets. Its speedup ranges from $1.36\times$ to $1.57\times$ over $\CM$ and from $5.57\times$ to $9.79\times$ over $\FGIM$. On MARCO10M, $\RNSM$ achieves the $1.57\times$ and $6.44\times$ speedups over $\CM$ and $\FGIM$, respectively, reported in Exp-5. Together with the recall@10--QPS results in Fig~\ref{fig:incremental-10m-recall-qps}, these measurements show that $\RNSM$ improves incremental merge efficiency while maintaining comparable search performance.

\end{document}